%% file: main.tex
\documentclass[12pt,letter]{article}

\usepackage{mathrsfs} 
\usepackage{amsfonts}
\usepackage{dsfont}
\usepackage{amssymb}

\usepackage{titling}
\usepackage{bibunits}
\usepackage{graphicx,psfrag,epsf}
\usepackage{enumerate}
\usepackage{natbib}

\usepackage{url} 

\usepackage{float}

\usepackage{setspace}
\usepackage{tikz-network}

\usepackage{amsmath}

\usepackage{mwe}
\usepackage{subfig}

\usepackage{threeparttable}
\renewcommand{\arraystretch}{0.65} 
\usepackage{caption}

\usepackage{booktabs}

\usepackage{tikz}
\usepackage{pgfplots}

\usepackage{pgfplotstable}
\usepgfplotslibrary{groupplots}

\usepackage{pdflscape}

\newtheorem{theorem}{Theorem}

\newtheorem{corollary}{Corollary}[section]

\newtheorem{definition}{Definition}[section]

\newtheorem{lemma}{Lemma}[section]

\newtheorem{proposition}{Proposition}

\newtheorem{assumption}{Assumption}

\newcounter{subassumption}[assumption]
\renewcommand{\thesubassumption}{(\textit{\roman{subassumption}})}
\makeatother
\newcommand{\subasu}{
  \refstepcounter{subassumption}%
  \thesubassumption~\ignorespaces}

\newenvironment{proof}[1][Proof]{\noindent\textbf{#1.} }{\ \rule{0.5em}{0.5em}}
\renewcommand{\baselinestretch}{1.3}
\makeatletter
\renewcommand{\thetheorem}{\arabic{theorem}}
\@addtoreset{lemma}{Appendix B} \makeatother

\numberwithin{equation}{section}
\makeatletter
\begin{document}
\begin{bibunit}[jpe]

\def\spacingset#1{\renewcommand{\baselinestretch}%
{#1}\small\normalsize} \spacingset{1}


  \title{\bf Estimating Social Effects with Randomized and Observational Network Data}
  \author{TszKin Julian Chan\\
    Bates White Economic Consulting\\
    Juan Estrada\\
    Analysis Group Economic Consulting\\
    Kim Huynh\thanks{We thank the Editor, Associate Editor, and two anonymous referees for very helpful corrections, comments, and suggestions that improved the overall readability and presentation of the article. The views expressed in this article are those of the authors. No responsibility for them should be attributed to the Bank of Canada. All remaining errors are the responsibility of the authors.}\hspace{.2cm}\\
    Currency Department, Bank of Canada\\
    David Jacho-Ch\'{a}vez\thanks{Corresponding Author}\\
    Department of Economics, Emory University\\
    Chungsang Tom Lam\\
    Department of Finance, Florida State University\\
    Leonardo S\'{a}nchez-Arag\'{o}n\\
    Facultad de Ciencias Sociales y Human\'{i}sticas, ESPOL University}
  \maketitle

\vspace{-0.2cm}

\begin{abstract}
This paper introduces an innovative approach to identifying and estimating the parameters of interest in the widely recognized linear-in-means regression model under conditions where the initial randomization of peers determines the observed network. We assert that peers who are initially randomized do not produce social effects. However, after randomization, agents can endogenously develop significant connections that potentially generate peer influences. We present a moment condition that compiles local heterogeneous identifying information for all agents within the population. Under the assumption of $\psi$-dependence in the endogenous network space, we propose a Generalized Method of Moments (GMM) estimator, which is proven to be consistent, asymptotically normally distributed, and straightforward to implement using commonly available statistical software due to its closed-form expression. Monte Carlo simulations demonstrate the GMM estimator's strong small-sample performance. An empirical analysis utilizing data from Hong Kong high school students reveals substantial positive spillover effects on math test scores among study partners in our sample, provided that their seatmates were exogenously assigned by their teachers.
\end{abstract}

\noindent%
{\it Keywords:}  Social Networks; Instrumental Variables; Causal Inference; $\psi$-dependence
\vfill

\newpage
\spacingset{1.9} 

\section{Introduction}\label{introduction}
\input{introduction.tex}

\section{Preliminaries}\label{prelim}
\input{preliminaries.tex}
    
\section{Peer Effects Model and Identification}\label{sid}
\input{identification.tex}
    
\section{Estimation}\label{estimation}
\input{estimation.tex}

\section{Monte Carlo Experiments}\label{mc}
\input{simulations.tex}

\section{Empirical Illustration}\label{emp}
\input{empirical.tex}

\section{Conclusion}\label{discussion}
\input{conclusion.tex}

\putbib[typ5.bib]
\end{bibunit}

\emptythanks
\clearpage
\setcounter{page}{1}

  \title{\bf Estimating Social Effects with Randomized and Observational Network Data\\ -- Supplemental Materials --}
  \author{TszKin Julian Chan\\
    Bates White Economic Consulting\\
    Juan Estrada\\
    Analysis Group Economic Consulting\\
    Kim Huynh\thanks{The views expressed in this article are those of the authors. No responsibility for them should be attributed to the Bank of Canada. All remaining errors are the responsibility of the authors.}\hspace{.2cm}\\
    Currency Department, Bank of Canada\\
    David Jacho-Ch\'{a}vez\\
    Department of Economics, Emory University\\
    Chungsang Tom Lam\\
    Department of Finance, Florida State University\\
    Leonardo S\'{a}nchez-Arag\'{o}n\\
    Facultad de Ciencias Sociales y Human\'{i}sticas, ESPOL University}
  \maketitle

\setcounter{section}{0}

\appendix\renewcommand\thesection{Appendix \Alph{section}}
\renewcommand{\theequation}{\Alph{section}-\arabic{equation}}
\renewcommand{\theenumi}{\alph{enumi}}\renewcommand{\labelenumi}%
{\emph{(\theenumi)}}
\renewcommand{\theenumii}{(\roman{enumii})}\renewcommand{\labelenumii
}{\theenumii}
\renewcommand{\thetheorem}{\Alph{section}.\arabic{theorem}.}\renewcommand
{\theassumption}{\Alph{section}.\arabic{assumption}}

\begin{bibunit}[jpe]

\section{Primitive Conditions for Relevance\label{Appendix_D}}
\renewcommand\thesection{\Alph{section}}
\input{supp_D.tex}

\section{Proofs of Main Results\label{Appendix_A}}
\input{supp_A.tex}

\section{Auxiliary Results\label{Appendix_B}}
\renewcommand\thesection{\Alph{section}}
\input{supp_B.tex}

\section{Empirical Application\label{Appendix_C}}
\renewcommand\thesection{\Alph{section}}
\input{supp_C}

\putbib[typ5.bib]
\end{bibunit}

\end{document}

%% file: introduction.tex
In many observational studies in economics and other social settings, a unit of observation's outcome (e.g. purchase of an item, health well-being of an individual, performance of a firm, or test scores of a student) depends not only on this unit's own observed characteristics (direct effect), but also on the outcome (peer effects) and characteristics (contextual effects) of other observations (peers) in the sample with which they have a link. The workhorse model in economics and other social sciences for this type of setting is the so-called \emph{linear-in-means} regression model (see \citeauthor{manski1993}, \citeyear{manski1993} and Section 3.1 in \citeauthor{Paula2017}, \citeyear{Paula2017}, pp. 275-289) where the outcome variable for observation $i$ (e.g., a person, a firm, or a country), $y_{n,i}$, is determined according to

\begin{equation}
\label{intro_lmm}
y_{n;i}=\beta_{0}\sum_{j \neq i}w_{n;i,j}y_{n;j}+\sum_{j \neq i}w_{n;i,j}\mathbf{x}_{n;j}^{\top}\boldsymbol{\delta}_{0}+ \widetilde{\mathbf{x}}_{n;i}^{\top}\boldsymbol{\gamma}_{0}+\varepsilon_{n;i}\text{,}
\end{equation}

\noindent where $i,j\in\{1,\dots,n\}$ are also known as \emph{nodes}, $\widetilde{\mathbf{x}}_{n;i}=[1,\mathbf{x}_{n;i}^\top]^\top$, and $\mathbf{x}_{n;i}$ is a vector of attributes that characterizes observations $i$, $w_{n;i,j} \in (0,1]$ if $j$ is connected to $i$ (a potentially weighted \emph{edge}), and 0 otherwise, $\varepsilon_{n;i}$ represents an unobserved latent error, and $n$ is the number of observations or nodes in the sample. The structure of the \emph{social} network is fully characterized by the square $n \times n$ matrix, $\mathbf{W}_{n}$, with the entry $(i,j)$ given by $w_{n;i,j}$, that is, the adjacency matrix. The structural parameters $\beta_0$ and $\boldsymbol{\delta}_0$ capture the peer and contextual effects, respectively, while $\boldsymbol{\gamma}_0$ captures the direct effects of the observation's own characteristics. They are jointly known as the social (or neighbor) effect parameters and the object of interest in empirical studies with network data; see, e.g., \cite{Sacerdote_QJE} in Economics, \cite{L_i_M_public_health} in Public Health, \cite{L_i_M_criminology} in Criminology, and \cite{L_i_M_sociology} in Sociology to mention just a few.

Sufficient conditions under which the social parameters in \eqref{intro_lmm} can be uniquely recovered from the estimating sample $\left\{y_{n;i},\mathbf{x}_{n;i}^{\top},\{w_{n;i,j}\}_{j=1,j\neq i}^n\right\}_{i=1}^n$ are well understood under the assumption that the adjacency matrix, $\mathbf{W}_{n}$, is exogenous; see, for example, \cite{Paula2017} and references therein. However, the endogenous case remains an active area of research among economists and social scientists due to simultaneity bias, measurement error in link information, or because there might be a natural correlation between covariates and the error term in \eqref{intro_lmm} (homophily); see, e.g., \cite{Johnsson2019} and references therein.

This article proposes the use of a type of multilayered (multidimensional) network data structure known as \emph{multiplex} networks \citep{Boccaletti2014,Kivela_multilayer_network_2014} to consistently estimate and perform correct inferences on the structural social parameters in \eqref{intro_lmm} with a potentially endogenous network structure, $\mathbf{W}_n$. In particular, we assume that the researcher observes another set of social ties among the original observations, $\{w_{n,0;i,j}\}_{j=1,j\neq i}^n$, in the form of $n\times n$ adjacency matrix $\mathbf{W}_{n,0}$ that are exogenous in the usual sense.

The assumption of network exogeneity in $\mathbf{W}_{n,0}$ is motivated from the literature on experimental settings with interactions; see, e.g., \cite{Athey2017}. Network randomization has been used in a wide range of applied settings, such as educational achievement \citep{Sacerdote_QJE}, entrepreneurship team performance \citep{Hasan2019}, and business performance \citep{Cai2018}, just to name a few. Furthermore, recent studies have emphasized the importance of endogenous choices of individuals to interact with or avoid their randomized peers as mediators of peer effects \citep{Hasan2019}. For example, \cite{Carrell2013} finds that after assigning cadets to squadrons (initial group assignment), i.e., $\mathbf{W}_{n,0}$, they endogenously sorted into homogeneous friendship groups based on their ability, i.e., $\mathbf{W}_n$. In the context of worker productivity, \cite{Mas2009}, \cite{Hjort2014}, and \cite{Kato2016} show that peer effects matter only when individuals interact frequently or share similar origins or ethnic divisions, even if they all belong to the same group. 

Our method offers a benefit over earlier approaches for identifying peer and contextual effects in endogenous networks because it does not depend on a specific description of the network formation process, as seen in works such as \cite{Goldsmith-Pinkham2013}, \cite{Qu2015}, \cite{Johnsson2019}, \cite{Qu2021}, and \cite{Auerbach2022}. Apart from bypassing the estimation of the parameters of a potentially complex nonlinear outcome, the flexibility of our method permits us to address a broader spectrum of potential endogeneity sources. A notably significant benefit is that our method can manage endogeneity arising from the concurrent determination of the outcome and the network of interest. Current methods that require explicit modeling of network formation assume that networks depend on certain endogenous variables and that the outcomes are established once the endogenous network is formed. Extending those models to include a simultaneous system of equations that include both outcomes and dyadic linking choices requires one to fundamentally change the assumptions in the model, which can drastically affect the identification results. Nevertheless, since our method bypasses the direct modeling of link formation, it eliminates the need to explicitly account for the simultaneous determination of outcomes and the network, thereby making our straightforward approach highly effective in handling various sources of endogeneity.

The paper uses the exogenous network, $\mathbf{W}_{n,0}$, to construct a set of valid moment conditions that point-identify parameters in \eqref{intro_lmm}. The resulting linear Generalized Method of Moments (GMM) estimator is simple to implement in existing statistical software like \texttt{Python}, \texttt{R} or \texttt{Stata}; see, e.g., \cite{netivreg}. Furthermore, the estimator is shown to be asymptotically normal at the standard root--$n$ rate of convergence, and a consistent estimator of the efficient asymptotic variance-covariance is proposed to perform asymptotically valid inference. 

Our proof method allows us to characterize the asymptotic variance-covariance that takes into account the presence of network dependence between individuals generated by the endogenous network $\mathbf{W}_{n}$. In particular, we assume that the triangular array for the random vector of observed and unobserved characteristics is $\psi$ dependent \citep{Doukhan1999}, and the levels of dependence between individuals decrease with their distance in the network space spanned by $\mathbf{W}_{n}$. Moreover, our asymptotic normality result requires that the levels of dependence decrease at a rate that is fast enough to compensate for any potential increase in the asymptotic levels of network density. Finally, our large sample variance-covariance characterization also connects the asymptotic levels of sparsity in the exogenous network, $\mathbf{W}_{n,0}$, with the precision of the network effects parameters. In particular, we show that the precision of the parameters increases with the number of nodes for which we can find indirect paths of distance of at least two in $\mathbf{W}_{n,0}$. Certainly, higher levels of network density correspond to a reduced probability of identifying indirect links between the individuals in that network, thereby affecting the accuracy of our suggested estimator.

The structure of this paper is as follows. Section \ref{prelim} provides various definitions and notation used throughout. Section \ref{sid} introduces the model and conditions for the parameters of the basic model to be uniquely recovered (point identification) from the estimation sample $\left\{y_i,\mathbf{x}_i^{\top},\{w_{i,j}\}_{j=1,j\neq i}^n,\{w_{0;i,j}\}_{j=1,j\neq i}^n\right\}_{i=1}^n$. Section \ref{estimation} describes the proposed linear GMM estimator, its asymptotic distribution, and how to calculate valid asymptotic standard errors. Section \ref{mc} provides the Monte Carlo exercises showing the small sample properties of the estimator, while an empirical illustration of the proposed methodology is discussed in Section \ref{emp}. Finally, Section \ref{discussion} concludes. The supplemental materials contain all mathematical proofs of the main results, as well as further details on the real data illustration.

%% file: preliminaries.tex
We assume the full observability of two types of networks over the same set of nodes. One is the endogenous network of interest that can create social externalities; the other is the instrumental network induced by random assignment. For $N \in \mathbb{N}_{+} \equiv \{1, 2,\dots\}$, let $\mathcal{I}_{N}$ represent the set of agents in an arbitrarily large population, which we call \textit{nodes} in the network. We denote by $\mathcal{G}_{N}=(\mathcal{I}_{N}, E)$ the population network of interest and by $\mathcal{G}_{N,0}=(\mathcal{I}_{N}, E_{0})$ the instrumental population network, where $E$ and $E_{0}$ are the corresponding sets of links (Edges). Similarly, as in \cite{Graham2020}, the observed networks of size $n<N$, are denoted as $\mathcal{G}_{n}$ and $\mathcal{G}_{n,0}$, respectively, and are assumed to coincide with the subgraphs induced by $\mathcal{I}_{n}$ nodes sampled from their corresponding large population networks.

In the population, we represent the two networks $\mathcal{G}_{N}$ and $\mathcal{G}_{N,0}$ with their respective adjacency matrices; that is, $\mathbf{W}_{N}=[w_{N;i,j}]$ and $\mathbf{W}_{N,0}=[w_{N,0;i,j}]$, where $w_{N;i,j},w_{N,0;i,j} \in (0,1]$ are weights representing the importance of the $(i,j)$ connection in each of the networks and $w_{N;i,j}=0$ if $i$ and $j$ are not connected in $\mathcal{G}_{N}$. We do the same for $w_{N,0;i,j}$ in $\mathcal{G}_{N,0}$. Our framework permits both weighted as well as unweighted edges in networks. In the latter scenario, $w_{N;i,j}$ and $w_{N,0;i,j}$ equal to one whenever there exists a connection between individuals $i$ and $j$ and equal zero otherwise. The theory developed here works in both cases.

We also define the $1 \times N$ vectors $\mathbf{w}_{N,i}=[w_{N;i,1},\dots,w_{N;i,N}] \in [0,1]^{N}$ and $\mathbf{w}_{N,0;i}=[w_{N,0;i,1},\dots,w_{N,0;i,N}] \in [0,1]^{N}$ to be the $i$th row of the adjacency matrices $\mathbf{W}_{N}$ and $\mathbf{W}_{N,0}$, respectively. Similarly $\mathbf{w}_{N,0;i}^p$ represents the $i$th row of powers $p\ge 1$ of the adjacency matrix $\mathbf{W}_{N,0}^p$. The adjacency matrices of the observed sample, $\mathbf{W}_{n}$ and $\mathbf{W}_{n,0}$, are defined and formed accordingly.

Section \ref{estimation} uses the concept of $\psi-$dependence to bound the dependence among individuals as a function of their distance in the network space. Following the literature on graph theory, we use the shortest path length as our measure of distance; i.e., we denote $d_{n}(i,j)$ as the minimum path length connecting individuals $i$ and $j$ in the \textit{endogenous} network of interest $\mathcal{G}_{n}$ induced by the sample size $n$. Let $A$ and $B$ be any two sets of individuals of size $a, b \in \mathbb{N}_{+}$. We define the distance between sets as
$d_{n}(A,B)= \min_{i\in A}\min_{j\in B}d_{n}(i,j)$.

Based on the definition of set distance, we define the following group of node sets used in the asymptotic results presented in Section \ref{estimation}: \subasu \label{ds1} $\mathcal{P}^{+}_{n}(a, b , d)=\{(A, B): A, B \subset \mathcal{I}_{n},|A|=a,|B|=b\text{, and } d_{n}(A, B) \geq d\}$ containing groups of nodes at a distance of at least $d$ from each other; \subasu \label{ds2} $\mathcal{P}^{-}_{n}(a, b , d)=\{(A, B): A, B \subset \mathcal{I}_{n},|A|=a,|B|=b\text{, and } d_{n}(A, B) \leq d\}$ containing groups of nodes at a distance of at most $d$ from each other; and \subasu \label{ds3} $\mathcal{P}_{n}(a, b , d)=\{(A, B): A, B \subset \mathcal{I}_{n},|A|=a,|B|=b\text{, and } d_{n}(A, B) = d\}$ the set associated with groups of nodes at distance $d$ from each other, where $\mathcal{I}_{n}$ is the set of sampled individuals of size $n$. The associated set that contains all nodes at a certain distance from node $i$ is $\mathcal{P}_{n}^{+}(i,d)=\{j \in \mathcal{I}_{n}: d_{n}(i,j)\geq d\}$, $\mathcal{P}_{n}(i,d)=\{j \in \mathcal{I}_{n}: d_{n}=d\}$, and $\mathcal{P}_{n}^{-}(i,d)=\{j \in \mathcal{I}_{n}: d_{n}\leq d\}$.

For any random vector $\mathbf{r}_{N;i} \in \mathbb{R}^{L}$ for some $L \in \mathbb{N}_{+}$, we endow $\mathbb{R}^{L\times a}$ with the distance measure $\boldsymbol{d}_{a}(\mathbf{x}, \mathbf{y})=\sum_{l=1}^{a}\left\|x_{l}-y_{l}\right\|_{2}$, for $a\in \mathbb{N}_{+}$, and where $\|\cdot\|_{2}$ denotes the Euclidean norm and $(\mathbf{x}, \mathbf{y}) \in \mathbb{R}^{L\times a}$. We let $\mathscr{L}_{L, a}$ denote the collection of bounded Lipschitz real functions that map values from $\mathbb{R}^{L \times a}$  to $\mathbb{R}$. For any set of individuals $A$, let $\mathbf{r}_{N,A} = \left(\mathbf{r}_{N; i}\right)_{i \in A}$. In the following, we write triangular arrays simply as $\{\mathbf{r}_{N;i}\}$, and sequences such as $\{\lambda_n\}_{n\ge 1}$ as $\{\lambda_n\}$. As in \cite{Doukhan1999} and \cite{Kojevnikov2020}, we define $\psi-$dependence as follows.

\begin{definition}[$\psi$-dependence]
\label{depdef}
A triangular array $\{\mathbf{r}_{n;i}\}$, $n\geq 1$, $\mathbf{r}_{n;i} \in \mathbb{R}^{L}$ is $\psi$-dependent if for each ${n \in \mathbb{N}_{+}}$ there exist a sequence $\{\lambda_{n}\} \equiv\left\{\lambda_{n, d}\right\}_{d \geq 0}, \lambda_{n, 0}=1$ and a collection of non-random functions $\left(\psi_{a, b}\right)_{a, b \in \mathbb{N}}, \psi_{a, b}: \mathscr{L}_{v, a} \times \mathscr{L}_{v, b} \to [0, \infty)$, such that for all $A,B \in \mathcal{P}^{+}_{N}(a, b , d)$ for $d>0$ and all $f \in \mathscr{L}_{L, a}$ and $g \in \mathscr{L}_{L, b}$, 

$$\left|\operatorname{cov}\left(f\left(\mathbf{r}_{n,A}\right), g\left(\mathbf{r}_{n,B}\right)\right)\right| \leq \psi_{a, b}(f, g) \lambda_{n, d}.$$
\end{definition}

The sequence $\{\lambda_{n}\}$ is called the \emph{dependence coefficients} of $\mathbf{r}_{n;i}$. The covariance of the nonlinear functions of the random vectors $\mathbf{r}_{n,A}$ and $\mathbf{r}_{n,B}$ are bounded by the dependence coefficients $\lambda_{n, d}$ and a functional $\psi_{a, b}(f, g)$, which depends on the size of the sets $A$ and $B$, and the aggregating nonlinear functions $f$ and $g$. 

%% file: identification.tex
There is an arbitrarily large population of agents in the set $\mathcal{I}_{N}$. Each agent $i\in\mathcal{I}_{N}$ is characterized by a set of $K$ observable characteristics $\mathbf{x}_{N;i}$, and an unobserved idiosyncratic shock (error) $\varepsilon_{N;i}$. The agents in the population are connected by two types of networks, that is, $\mathcal{G}_{N}$ and $\mathcal{G}_{N,0}$. The data generation process characterizing the two observed networks adheres to a framework where the network $\mathcal{G}_{N}$ contains connections formed by two agents making endogenous decisions to participate in social or professional relationships, while $\mathcal{G}_{N,0}$ represents the network formed through the random (or quasirandom) allocation of individuals into groups. The endogenous network formation determining $\mathcal{G}_{N}$ induces a potential correlation between the individuals' decisions to connect and their observed and unobserved characteristics \citep[see, e.g., ][]{Hasan2019}. Conversely, by the properties of randomization, the network $\mathcal{G}_{N,0}$ is strictly exogenous \citep[see, e.g., ][]{Athey2017}.

Let $\mathscr{G}$ be the discrete set of possible network configurations of size $N$, such that, for any possible realizations of the network of interest and the exogenous network, $\mathbf{g}$ and $\mathbf{g}_{0}$, it follows that $\left\{\mathbf{g},\mathbf{g}_{0}\right\} \in \mathscr{G}$. Let $\mathbf{X}_{N}=[\mathbf{x}_{N;1},\cdots,\mathbf{x}_{N;N}]^\top\in\mathscr{X}$. Here, $\mathscr{X}$ is the space that represents the support for the matrix of regressors, where we assume $\mathscr{X}\subseteq \mathbb{R}^{N\times K}$, without loss of generality. Finally, let $\boldsymbol{\varepsilon}_{N}=[\varepsilon_{N;1},\cdots,\varepsilon_{N;N}]^\top\in\mathbb{R}^N$. We denote the realizations of $\mathbf{X}_{N}$ and $\boldsymbol{\varepsilon}_{N}$ by $\mathbf{X}\in\mathscr{X}$ and $\boldsymbol{\varepsilon}\in\mathbb{R}^N$, respectively.

Formally, we assume that there is a population joint probability distribution function that determines the dependence patterns between the regressors, the networks, and the errors, that is, the joint distribution is defined as $f_{\mathbf{X}_{N}, \mathcal{G}_{N},\mathcal{G}_{N,0},\boldsymbol{\varepsilon}_{N}}(\mathbf{X},\mathbf{g},\mathbf{g}_{0},\boldsymbol{\varepsilon})= \Pr(\mathcal{G}_{N}=\mathbf{g},\mathcal{G}_{N,0}=\mathbf{g}_{0}\mid\mathbf{X}, \boldsymbol{\varepsilon})f_{\mathbf{X}_{N},\boldsymbol{\varepsilon}_{N}}(\mathbf{X}, \boldsymbol{\varepsilon})$, where $\Pr(\mathcal{G}_{N}=\mathbf{g},\mathcal{G}_{N,0}=\mathbf{g}_{0}\mid\mathbf{X}_{N}, \boldsymbol{\varepsilon}_{N})$ is the probability that $\mathcal{G}_{N}$ and $\mathcal{G}_{N,0}$ take on particular structures $\left\{\mathbf{g},\mathbf{g}_{0}\right\} \in \mathscr{G}$ conditional on the regressors and the errors, while $f_{\mathbf{X}_{N},\boldsymbol{\varepsilon}_{N}}$ represents the joint probability distribution function for $\mathbf{X}_{N}$ and $\boldsymbol{\varepsilon}_{N}$. Note that the joint distribution, $f_{\mathbf{X}_{N}, \mathcal{G}_{N},\mathcal{G}_{N,0},\boldsymbol{\varepsilon}_{N}}(\mathbf{X},\mathbf{g},\mathbf{g}_{0},\boldsymbol{\varepsilon})$, is characterized by two main features: First, \textit{validity}, the network $\mathcal{G}_{N,0}$ is independent of the agents' observed and unobserved characteristics because of randomization. Second, \textit{relevance}, agents randomly assigned to the same group in $\mathcal{G}_{N,0}$ are more likely to form connections in $\mathcal{G}_{N}$ \citep[see, e.g., ][]{Granovetter1973, Gargiulo2000, Kim2006, Goldsmith-Pinkham2013}. 

The following assumption imposes the primary validity condition that we call ex-ante exogeneity.

\begin{assumption}[Ex-Ante Exogeneity]
\label{id}
Let $f_{\mathbf{X}_{N}, \mathcal{G}_{N,0},\boldsymbol{\varepsilon}_{N}}(\mathbf{X}, \mathbf{g}_{0}, \boldsymbol{\varepsilon})\equiv\sum_{\mathbf{g}\in\mathscr{G}}f_{\mathbf{X}_{N}, \mathcal{G}_{N},\mathcal{G}_{N,0},\boldsymbol{\varepsilon}_{N}}\allowbreak(\mathbf{X},\mathbf{g},\mathbf{g}_{0},\boldsymbol{\varepsilon})$ be the resulting probability distribution from integrating $f_{\mathbf{X}_{N}, \mathcal{G}_{N},\mathcal{G}_{N,0},\boldsymbol{\varepsilon}_{N}}$ out with respect to the endogenous network $\mathcal{G}_{N}$. Thus, the probability distribution $f_{\mathbf{X}_{N}, \mathcal{G}_{N,0},\boldsymbol{\varepsilon}_{N}}$ is such that $\mathbb{E}[\mathbf{x}_{N;i}\varepsilon_{N;i}]=\boldsymbol{0}_K$ and $\mathbb{E}[(\mathbf{w}_{N,0;i}^{p}\mathbf{X}_{N})^{\top}\varepsilon_{N;i}]=\boldsymbol{0}_K,$  $\forall i\in\mathcal{I}_{N}$, and any $p\geq 1$, where $\boldsymbol{0}_K$ is a $K\times 1$ vector of zeros. Moreover, we assume $\mathbb{E}[\varepsilon_{N;i}]=0$, $\forall i\in\mathcal{I}_{N}$, where the expectation is taken with respect to the marginal distribution of $\boldsymbol{\varepsilon}_{N}$. 
\end{assumption}

This assumption imposes the restriction of zero correlation between the observed and unobserved characteristics of any individual $i\in\mathcal{I}_{N}$. For simplicity, we impose the noncorrelation assumption across the entire vector of observed characteristics $\mathbf{x}_{N;i}$; however, our results also hold under the non-correlation premise for a minimum of one variable within $\mathbf{x}_{N;i}$.\footnote{The ex-ante exogeneity assumption can also accommodate correlation between the network $G_{N,0}$ and individuals' characteristics. If the network $\mathcal{G}_{N, 0}$ is determined by \emph{observed} characteristics, one can control for them in the outcome equation and the ex-ante exogeneity assumption becomes a \textit{conditional} ex-ante endogeneity assumption.} Furthermore, the constraint $\mathbb{E}[(\mathbf{w}_{N,0;i}^{p}\mathbf{X}_{N})^{\top}\varepsilon_{N;i}]=\boldsymbol{0}_K$ implies that the unobserved characteristics of the individual $i\in\mathcal{I}_{N}$ remain uncorrelated with the observed characteristics of any other $j\in\mathcal{I}_{N}$, selected exogenously from the population of individuals. Analogous formulations of this assumption, employing randomized networks, have previously been used to support the causal identification of peer effects \citep[see, e.g.,][]{Sacerdote_QJE, Carrell2009, Cai2018, Hasan2019}.

Note that the Assumption \ref{id} does not impose restrictions on the correlation between the vector $(\mathbf{w}_{N;i}\mathbf{X}_{N})$ and the unobserved characteristics $\varepsilon_{N;i}$. Thus, we allow for correlation between the observed and unobserved characteristics of any two individuals engaging in endogenous linking formation within the network $\mathcal{G}_{N}$. The assumption \ref{id} links the endogenous network formation process with the mechanism that induces correlation between the vectors of regressors and errors. 

Under ex-ante exogeneity, a pair of agents $(i,j)$, exogenously allocated in an initial network $\mathcal{G}_{N,0}$, establish connections in the network $\mathcal{G}_{N}$ based on observed and unobserved characteristics. This behavior creates correlation between $\mathbf{x}_{N,i}$ and $\varepsilon_{N, j}$ through observed and unobserved homophily. For example, in the context of \cite{Carrell2013}, students with similar abilities may be more inclined to form connections, and observable characteristics such as race or gender could correlate with both link formation and individuals' abilities. Moreover, as previously mentioned, the source of endogeneity can arise not only from unobserved factors, but also because of an unspecified potential simultaneous determination of the outcome and the network of interest. In the same context of educational achievement, for example, students may care about the educational achievement of potential connections, which can generate an issue of simultaneous determination in the outcome equation.

In addition to ex-ante exogeneity, the following assumption imposes a linear model of peer effects, which has been shown to have a structural interpretation as the best response function of a Bayesian game of social interactions; see, i.e., \cite{Blume2015}. 

\begin{assumption}[Linear Model]
\label{excrest}
The optimal choice (outcome), $y_{N;i}$, for agent $i$ is characterized by

\begin{equation}
\label{lmm}
y_{N;i}=\beta_{0}\sum_{j \neq i}w_{N;i,j}y_{N;j}+\sum_{j \neq i}w_{N;i,j}\mathbf{x}_{N;j}^{\top}\boldsymbol{\delta}_{0}+ \widetilde{\mathbf{x}}_{N;i}^{\top}\boldsymbol{\gamma}_{0}+\varepsilon_{N;i}\text{,}
\end{equation}

\noindent where $\widetilde{\mathbf{x}}_{N;i}=[1,\mathbf{x}_{N;i}^\top]^\top$, $w_{N;i,j}$ is the $ij$\emph{th} position in the adjacency matrix representing the endogenous network, $\mathbf{W}_{N}$, and $\boldsymbol{\theta}_0\equiv(\beta_{0}, \boldsymbol{\delta}_{0}^\top, \boldsymbol{\gamma}_{0}^\top)^\top$ belongs to the interior of the parameter space $\boldsymbol{\Theta}\subset\mathbb{R}^{2K+2}$, which is assumed to be compact.\footnote{Our Linear Model assumption includes both the case when the adjacency matrices are row normalized and when they are not. Our results hold for both the local aggregate and the local average models \citep{liu2014}.}
\end{assumption}

Assumption \ref{excrest} prescribes a linear model for social effects, effectively imposing an exclusion restriction on the network $\mathcal{G}_{N,0}$. The model posits that only optimally formed connections by agents can induce peer effects. Specifically, we contend that randomly grouped individuals are unlikely to generate peer effects, but, after randomization, agents can form endogenous connections that influence their behavior; see, for example, \cite{Hasan2019}. Similarly, \cite{Carrell2013} shows that groups designed to improve academic performance can produce negative effects due to the role of endogenous study partners and the formation of friendship bonds after the initial allocation of optimally designed to improve academic performance. Furthermore, \cite{Hjort2014} and \cite{Kato2016} find that the effects of peers on worker productivity manifest only between individuals of the same ethnic divisions and social origins, which could serve as proxies for closer social interactions. This evidence supports the plausibility of the exclusion restriction assumption of the initial exogenous network in a peer effects model.

Here, we implicitly impose a network \emph{exclusion} restriction in the sense that potential peer effects from the exogenous network $\mathcal{G}_{N,0}$ are precisely zero in \eqref{lmm}. Alternatively, it is possible to relax the exact exclusion restriction in the exogenous network by incorporating prior information where the effect of $\mathcal{G}_{N,0}$ on $y_{N;i}$ is proximate but not precisely zero, following the approach of \cite{Conley2012}.\footnote{We thank the Associate Editor and one of the referees for pointing this out.}

Finally, the assumption that the parameters $\boldsymbol{\theta}_0$ are in the interior of the parameter space is particularly relevant for the coefficient $\beta_{0}$, because \eqref{lmm} has a solution in terms of $w_{N;i}$, $\mathbf{X}_{N}$, and $\varepsilon_{N;i}$ only when $\beta_{0}<1/\lambda_{\max}$, where $\lambda_{\max}$ is the largest eigenvalue of $\mathbf{W}_{N}$. Assuming $K=1$ and that there is no constant for the sake of illustration, Assumption \ref{excrest} implies that the peer effects regressor can be written as 

\begin{equation}
\mathbf{W}_{N} \mathbf{y}_{N}= \gamma_{0} \mathbf{W}_{N} \mathbf{x}_{N}+(\gamma_{0} \beta_{0}+\delta_{0}) \sum_{p=0}^{\infty} \beta_{0}^{p} \mathbf{W}_{N}^{p+2} \mathbf{x}_{N} +\sum_{p=0}^{\infty} \beta_{0}^{p}\mathbf{W}_{N}^{p+1}\varepsilon_{N},
\end{equation}

\noindent which under the condition that $\gamma_{0} \beta_{0}+\delta_{0}\neq0$ shows that, in principle, the powers of the adjacency matrix $\mathbf{W}_{N}$ could be used to instrument $\mathbf{W}_{N} \mathbf{y}_{N}$ \citep{Bramoulle2009, Degiorgi2010, manta2021}. This approach is not possible here because the network $\mathcal{G}_{N}$ is allowed to be endogenous. However, note that from Assumptions \ref{id}, the powers of the adjacency matrix $\mathbf{W}_{N,0}$ are natural candidates to replace $\mathbf{W}_{N}$ in this approach. 

We propose to use the random assignment embodied in $\mathbf{W}_{N,0}$ to identify the parameters of a linear model defined on the network space spanned by $\mathbf{W}_{N}$ in Assumption \ref{excrest}. Formally, we define $\mathbf{D}_{N}=[\mathbf{W}_{N}\mathbf{y}_{N}, \mathbf{W}_{N}\mathbf{X}_{N}, \widetilde{\mathbf{X}}_{N}]$ as the matrix of regressors in the matrix notation counterpart of \eqref{lmm} and $\mathbf{Z}_{N}=[\mathbf{W}_{N,0}^{p}\mathbf{X}_{N},\mathbf{W}_{N,0}^{p-1}\mathbf{X}_{N},\dots,\mathbf{W}_{N,0}\mathbf{X}_{N}, \widetilde{\mathbf{X}}_{N}]$ as the matrix that produces the moment conditions formed based on Assumption \ref{id}, where $p>1$ is a constant parameter representing the powers of the adjacency matrix used as instruments. This framework allows for the option of using the characteristics of the so-called \textit{connections of connections}' as instruments by letting $p=2$. The flexibility of Assumption \ref{id} also allows the use of the characteristics of more indirect connections that are at distance $p>2$. 

An important aspect differentiating the use of ex-ante exogeneity in Assumption \ref{id} and the standard validity of an instrumental variable is that ex-ante exogeneity allows us to form a large number of instruments. In particular, as long as $\mathbf{I}_{N}$, $\mathbf{W}_{N,0}$,$\mathbf{W}_{N,0}^{2}$,$\dots$, $\mathbf{W}_{N,0}^{p-1}$, and $\mathbf{W}_{N,0}^{p}$ are linearly independent, where $\mathbf{I}_{N}$ is the identity matrix of order $N$, we can form up to $K\cdot p$ different instruments using the $K$ ex-ante exogenous variables in $\mathbf{x}_{N,i}$. Note that as discussed in Section \ref{prelim}, the networks underlying the adjacency matrices $\mathbf{W}_{N}$ and $\mathbf{W}_{N,0}$ in $\mathbf{D}_{N}$ and $\mathbf{Z}_{N}$ can be weighted or unweighted. Considering that both adjacency matrices undergo post-multiplication by the matrix $\mathbf{X}$, we can think of the weighted and unweighted scenarios as computing averages and sums of the attributes of directly and indirectly connected individuals, respectively. As demonstrated by \cite{liu2014}, different weighting methodologies correspond to different behavioral mechanisms in social interactions, resulting in either local average or local aggregate effects. Although behavioral interpretations may differ, the identification argument here holds for both interpretations of peer effects. Another consideration when choosing between weighted or unweighted networks is that achieving the linear independence condition of the powers of adjacency matrices may be more feasible in the former case, see, i.e., \cite{Blume2015}.

The use of the two networks $\mathcal{G}_{N,0}$ and $\mathcal{G}_{N}$ for identification requires a \textit{relevance} condition that guarantees the two networks have enough overlap. The moment characterizing the correlation between the regressors and the instruments is the population average of the expected values of the random matrices $(\mathbf{z}_{N;i}\mathbf{d}_{N;i}^{\top})$, which can be written as $\mathbb{E}[N^{-1}\sum_{i \in \mathcal{I}_{N}}\mathbf{z}_{N;i}\mathbf{d}_{N;i}^{\top}]$, where $\mathbf{d}_{N;i}$ and $\mathbf{z}_{N;i}$ contain the $i$th rows of the matrices $\mathbf{D}_N$ and $\mathbf{Z}_N$,  respectively. Importantly, we do not assume that $\mathbb{E}[\mathbf{z}_{N;i}\mathbf{d}_{N;i}^{\top}]$ are equal for all $i$ and work directly with the population average of these expectations. The following assumption imposes a rank condition related to the strength of the correlation between $\mathbf{z}_{N;i}$ and $\mathbf{d}_{N;i}$. 

\begin{assumption}[Relevance]
\label{relevance}
The matrix $\mathbb{E}[N^{-1}\sum_{i \in \mathcal{I}_{N}}\mathbf{z}_{N;i}\mathbf{d}_{N;i}^{\top}]<\infty$ has full column rank. 
\end{assumption}

Assumption \ref{relevance} guarantees that the population average of expectations $\mathbb{E}[\mathbf{z}_{N;i}\mathbf{d}_{N;i}^{\top}]$ is finite, and it provides the necessary conditions to ensure that a unique parameter value solves the identifying moment conditions. Unlike standard IV estimation, the relevance condition here imposes restrictions on the more primitive network architectures. \ref{Appendix_D} provides such a set of primitive conditions. They are empirically testable restrictions on the regressors and networks. Importantly, unlike previous results, identification in our model does not require linear independence between the matrices $\mathbf{I}_{N}$, $\mathbf{W}_{N}$, and $\mathbf{W}_{N}^{2}$. Instead, we show that for Assumption \ref{relevance} to hold, there should exist two \emph{ different} numbers $(r,s)\in\mathbb{N}_{+}\times\mathbb{N}_{+}$ such that $\mathbf{I}_{N}$, $\mathbf{W}_{N}^{r}$, and $\mathbf{W}_{N}^{s}$ are linearly independent. Thus, our primitive conditions for the network of interest are weaker in the sense that any two powers of the adjacency matrix have to be linearly independent, and not only the first and second powers. 

Furthermore, we show that, as mentioned above, relevance requires that the matrices $\mathbf{I}_{N}$, $\mathbf{W}_{N,0}$,$\mathbf{W}_{N,0}^{2}$,$\dots$,  $\mathbf{W}_{N,0}^{p-1}$, and $\mathbf{W}_{N,0}^{p}$ be linearly independent for some $p\in\mathbb{N}_{+}$ and $(\gamma_{0,k} \beta_{0}+\delta_{0, k})\neq 0$ for any $k\in \{1, \dots, K\}$. The first condition resembles the restrictions on the network structure imposed in previous literature, with the difference that in our model the linear independence condition is imposed on the excluded network. The second condition on the structural parameters of the linear model has also been used in the previous literature, see, i.e., \cite{Bramoulle2009}. It excludes the possibility that peer and contextual effects cancel each other out. The following Theorem formalizes the identification result.

\begin{theorem}[Identification]
\label{idtheorem}
Let Assumptions \ref{id}, \ref{excrest}, and \ref{relevance} hold, then $\mathbb{E}[\boldsymbol{m}_{N}(\boldsymbol{\theta})]=\boldsymbol{0}_{K}$ if and only if $\boldsymbol{\theta}=\boldsymbol{\theta}_{0}$, where $\boldsymbol{m}_{N}(\boldsymbol{\theta})\equiv N^{-1}\sum_{i\in\mathcal{I}_{N}}\mathbf{z}_{N;i}(y_{N;i}-\mathbf{d}_{N;i}^{\top}\boldsymbol{\theta})$.
\end{theorem}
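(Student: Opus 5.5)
The argument has two parts: first show that the moment condition holds at $\boldsymbol{\theta}_0$, then use Assumption \ref{relevance} to rule out any other root. The key observation is that $\boldsymbol{m}_N(\boldsymbol{\theta})$ is affine in $\boldsymbol{\theta}$. By Assumption \ref{excrest}, stacking \eqref{lmm} over $i\in\mathcal{I}_N$ gives $\mathbf{y}_N=\mathbf{D}_N\boldsymbol{\theta}_0+\boldsymbol{\varepsilon}_N$, so row by row
\[
y_{N;i}-\mathbf{d}_{N;i}^{\top}\boldsymbol{\theta}=\varepsilon_{N;i}-\mathbf{d}_{N;i}^{\top}(\boldsymbol{\theta}-\boldsymbol{\theta}_0).
\]
Substituting into the definition of $\boldsymbol{m}_N$ and taking expectations yields
\[
\mathbb{E}[\boldsymbol{m}_N(\boldsymbol{\theta})]=\mathbb{E}\Big[N^{-1}\sum_{i\in\mathcal{I}_N}\mathbf{z}_{N;i}\varepsilon_{N;i}\Big]-\mathbb{E}\Big[N^{-1}\sum_{i\in\mathcal{I}_N}\mathbf{z}_{N;i}\mathbf{d}_{N;i}^{\top}\Big](\boldsymbol{\theta}-\boldsymbol{\theta}_0).
\]
Both terms are finite: the second matrix is finite by Assumption \ref{relevance}, and for the first I take the moments in Assumption \ref{id} to exist.

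The first step is to show the first term vanishes. The row $\mathbf{z}_{N;i}$ stacks three kinds of blocks: $\mathbf{w}^{q}_{N,0;i}\mathbf{X}_N$ for $q=1,\dots,p$, the constant $1$, and $\mathbf{x}_{N;i}$. Assumption \ref{id} handles each block in turn. It gives $\mathbb{E}[(\mathbf{w}^{q}_{N,0;i}\mathbf{X}_N)^\top\varepsilon_{N;i}]=\boldsymbol{0}_K$ for every $q\ge1$, $\mathbb{E}[\varepsilon_{N;i}]=0$, and $\mathbb{E}[\mathbf{x}_{N;i}\varepsilon_{N;i}]=\boldsymbol{0}_K$. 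These hold for every $i$, and averaging over $i$ gives zero.

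At this point one should note that these expectations involve only $(\mathbf{X}_N,\mathcal{G}_{N,0},\boldsymbol{\varepsilon}_N)$. They are therefore taken under the marginal $f_{\mathbf{X}_N,\mathcal{G}_{N,0},\boldsymbol{\varepsilon}_N}$, which is the object Assumption \ref{id} restricts. No condition on $\mathcal{G}_N$ enters the argument, and this is why endogeneity of $\mathbf{W}_N$ does no harm. This establishes the ``if'' direction: $\mathbb{E}[\boldsymbol{m}_N(\boldsymbol{\theta}_0)]=\boldsymbol{0}$.

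For the ``only if'' direction, set $\mathbf{Q}_N\equiv\mathbb{E}[N^{-1}\sum_{i}\mathbf{z}_{N;i}\mathbf{d}_{N;i}^{\top}]$. The first step reduces the condition $\mathbb{E}[\boldsymbol{m}_N(\boldsymbol{\theta})]=\boldsymbol{0}$ to
\[
\mathbf{Q}_N(\boldsymbol{\theta}-\boldsymbol{\theta}_0)=\boldsymbol{0}.
\]
By Assumption \ref{relevance}, $\mathbf{Q}_N$ has full column rank $2K+2$, so its null space is trivial and $\boldsymbol{\theta}=\boldsymbol{\theta}_0$. Full column rank requires $\mathbf{Q}_N$ to have at least $2K+2$ rows. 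The instrument dimension is $Kp+K+1$, which is at least $2K+2$ exactly when $p>1$, consistent with the standing assumption $p>1$. The zero vector in the statement should accordingly be read as having the dimension of $\mathbf{z}_{N;i}$, namely $Kp+K+1$, not $K$.

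There is no substantive obstacle here, since relevance is simply assumed. The only delicate point is the first step: the expectation of $\mathbf{z}_{N;i}\varepsilon_{N;i}$ must be taken over the marginal that integrates out $\mathcal{G}_N$. Since $\mathbf{z}_{N;i}$ is measurable with respect to $(\mathbf{X}_N,\mathcal{G}_{N,0})$ alone, the marginalization is exact, and this is the step I would state explicitly.
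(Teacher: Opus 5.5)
Your proof is correct and takes the same route as the paper. Both arguments substitute the linear model to write the moment as $\mathbb{E}[N^{-1}\sum_{i}\mathbf{z}_{N;i}\varepsilon_{N;i}]-\mathbf{Q}_N(\boldsymbol{\theta}-\boldsymbol{\theta}_0)$, kill the first term with ex-ante exogeneity, and conclude from the full column rank in Assumption \ref{relevance}. Your block-by-block check of $\mathbb{E}[\mathbf{z}_{N;i}\varepsilon_{N;i}]=\boldsymbol{0}$, your marginalization remark, and your note that the zero vector has dimension $Kp+K+1$ rather than $K$ are more careful than the paper's version, but they do not change the argument.
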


Appendix \ref{Appendix_A} presents the proof of Theorem \ref{idtheorem}. This Theorem shows that identification is possible in a context where the network of interest $\mathcal{G}_{N}$ is formed endogenously by taking advantage of the randomization and exclusion restrictions on the exogenously imposed network $\mathcal{G}_{N,0}$. This approach allows us to attach a causal interpretation to the estimated parameters of a linear model of peer effects, which uses observational network data that emerge after an initial randomization. This method can be used to address research designs with randomized peers, as in \cite{Carrell2013}.

%% file: estimation.tex
We propose a GMM estimator based on the identifying moment condition in Theorem \ref{idtheorem}. We assume that the analyst observes a sample of size $n<N$ from the population described in the previous section. In our sample scheme, $n$ agents are chosen at random without replacement and their observed characteristics, outcome, and connections in $\mathcal{G}_{N}$ and $\mathcal{G}_{N,0}$ are recorded.\footnote{In our empirical illustration we observe the complete graph for the schools' finite populations of students. However, as in \cite{Graham2020}, we consider the sampling process as a thought experiment that is useful in characterizing limiting distributions.} Therefore, the random sample consists of observations $\left\{y_i,\mathbf{x}_i^{\top},\{w_{i,j}\}_{j=1,j\neq i}^n,\{w_{0;i,j}\}_{j=1,j\neq i}^n\right\}_{i=1}^n$ from which it is possible to calculate the $n\times (2K+2)$ matrix of regressors $\mathbf{D}_{n}$ and the $n \times ((p-1)K+2K+1)$ matrix of \textit{instruments} $\mathbf{Z}_{n}$ (depending on the value of $K$, the system can be \emph{just}- or \emph{over}-identified). The population's GMM objective function is given by $J_{N}(\boldsymbol{\theta})=\mathbb{E}[\boldsymbol{m}_{N}(\boldsymbol{\theta})]^{\top}\mathbf{A}_{N}\mathbb{E}[\boldsymbol{m}_{N}(\boldsymbol{\theta})]$, where $\mathbf{A}_{N}$ is a constant full rank weighting matrix. The GMM estimator of $\boldsymbol{\theta}_0$ is defined as $\widehat{\boldsymbol{\theta}}_{\text{GMM}}=\arg\min_{\boldsymbol{\theta}\in\boldsymbol{\Theta}}J_{n}(\boldsymbol{\theta})$, where $J_{n}(\boldsymbol{\theta})\equiv[n^{-1}\sum_{i\in \mathcal{I}_{n}}\mathbf{z}_{n;i}(y_{n;i}-\mathbf{d}_{n;i}^\top\boldsymbol{\theta})]^{\top}\mathbf{A}_{n}[n^{-1}\sum_{i\in \mathcal{I}_{n}}\mathbf{z}_{n;i}(y_{n;i}-\mathbf{d}_{n;i}^\top\boldsymbol{\theta})]$, the $((p-1)K+2K+1)\times ((p-1)K+2K+1)$ full rank weighting matrix $\mathbf{A}_{n}$ is assumed to converge in probability to $\mathbf{A}_{N}$. The linearity in \eqref{lmm} guarantees that the GMM estimator has a closed form solution given by

\begin{equation}
    \widehat{\boldsymbol{\theta}}_{\text{GMM}} =  [\mathbf{D}_{n}^{\top}\mathbf{Z}_{n}\mathbf{A}_{n} \mathbf{Z}_{n}^{\top}\mathbf{D}_{n}]^{-1} [\mathbf{D}_{n}^{\top}\mathbf{Z}_{n}\mathbf{A}_{n}\mathbf{Z}_{n}^{\top}\mathbf{y}_{n}]\text{.}\label{theta_GMM}
\end{equation}

To allow the possibility that the observed and unobserved characteristics of individuals are correlated in the joint distribution of the population $f_{\mathbf{X}_{N}, \mathcal{G}_{N},\mathcal{G}_{N,0},\boldsymbol{\varepsilon}_{N}}$, we use the concept of $\psi-$dependence in definition \ref{depdef} above. As mentioned there, we bound the correlation between nonlinear functions of random variables with the \textit{dependence coefficients}, which are decreasing functions of the network distance. We rule out a direct dependence structure based on the exogenous network $\mathcal{G}_{N,0}$, i.e., dependence is only generated through the endogenous network $\mathcal{G}_{N}$. Intuitively, this is justified here because individuals endogenously form connections in $\mathcal{G}_{N}$ based on observed and unobserved characteristics, and therefore we would expect relatively high levels of dependence between individuals close to each other in the network space spanned by $\mathcal{G}_{N}$. For example, in our empirical illustration, we expect the observed and unobserved characteristics of students who study together or are indirectly connected by study partners to be more correlated than those of students who are not study partners and are not indirectly connected. Given that the network $\mathcal{G}_{N}$ generates the dependence structure, when we talk about the distance in the network space, we refer to the distances in the network space $\mathcal{G}_{N}$ hereafter. 

Let $\mathbf{r}_{N ; i} \equiv\left[\mathbf{x}_{N ; i}^{\top}, \varepsilon_{N ; i}\right]^{\top} \in \mathbb{R}^{K+1}$ be the vector that encompasses the observed and unobserved characteristics of the individual $i$. By choosing appropriate values for the functions $f$ and $g$ in Definition \ref{depdef}, the $\psi-$dependence framework allows us to bound the dependence between observed and unobserved characteristics among any set of individuals. Specifically, we impose the following assumption on the conditional population distribution $f_{\mathbf{X}_N, \varepsilon_N \mid \mathcal{G}_N}$.

\begin{assumption}[Weak Dependence]
\label{depas}
Consider the set $\mathscr{G}$ of all possible realizations of $\mathcal{G}_{N}$. $\forall\mathcal{G}_{N} \in \mathscr{G}$, and $\mathcal{N}$ denoting either $N\in \mathbb{N}_{+}$ or $n\in \mathbb{N}_{+}$, assume that the conditional distribution $f_{\mathbf{X}_{N}, \boldsymbol{\varepsilon}_{N}\mid \mathcal{G}_{N}}$ is such that:\medskip

\noindent \subasu \label{ea1} $\{\mathbf{r}_{\mathcal{N};i}\}$ is $\psi$-dependent with dependence coefficient $\lambda_{\mathcal{N}}$; \smallskip

\noindent \subasu \label{ea2} For a generic constant $C>0$, $\psi_{a, b}(f, g) \leq C \times a b\left(\|f\|_{\infty}+\operatorname{Lip}(f)\right)\left(\|g\|_{\infty}+\operatorname{Lip}(g)\right)$; \smallskip

\noindent \subasu \label{ea3} and for each ${\mathcal{N} \in \mathbb{N}_{+}}$, $\max_{d \geq 1} \lambda_{\mathcal{N}, d}<\infty$ and $\lim_{d\to\infty}\lambda_{\mathcal{N}, d}=0$.
\end{assumption}

We impose the Assumption \ref{depas} for both the population with conditional distribution $f_{\mathbf{X}_{N}, \boldsymbol{\varepsilon}_{N}\mid \mathcal{G}_{N}}$ and for the triangular array ${\mathbf{r}_{n;i}}$, where $n\geq 1$. This array is formed by randomly sampling networks $\mathcal{G}_{N}$ and the observed and unobserved characteristics that define ${\mathbf{r}_{n;i}}$. Condition \ref{ea2} bounds the functional $\psi_{a, b}(f, g)$ by an arbitrary constant $C$, the cardinality of the sets $A$ and $B$,  and the sup-norm and Lipschitz constants of the aggregating functions $f$ and $g$. Intuitively, if the Lipschitz constants $\operatorname{Lip}(f)$ and $\operatorname{Lip}(g)$ increase, the values of the functions $f$ and $g$ can be higher for some values of $\mathbf{r}_{n,A}$ and $\mathbf{r}_{n,B}$, which requires larger constants to bound the covariance. This intuition is similar to the sup-norm. Finally, condition \ref{ea3} requires that the dependence coefficients be finite for any value of $d$ and that they dissipate to zero for a sufficiently large network distance between the random vectors $\mathbf{r}_{n,A}$ and $\mathbf{r}_{n,B}$. 

The use of the $\psi-$dependence framework in modeling network dependence has the advantage that it does not impose functional form restrictions on the errors and it allows for correlation between indirectly connected nodes. However, transformations of $\psi$-dependent random variables are not necessarily $\psi$-dependent. Therefore, in order to analyze the asymptotic behavior of $\widehat{\boldsymbol{\theta}}_{\text{GMM}}$, we now impose bounds to covariances of the form $\operatorname{cov}(r_{n; i, q} r_{n; j, \ell}, r_{n; h, q^{\prime}} r_{n; s, \ell^{\prime}})$, where $(i, j, h, s) \in \mathcal{I}_{n}$, $q, q^{\prime}, \ell$, and $\ell^{\prime}$ are components of the vector $\mathbf{r}_{n; i}$. These include covariances such as $\operatorname{cov}(\varepsilon_{n; i} \varepsilon_{n; j}, \varepsilon_{n; h} \varepsilon_{n; s})$ or $\operatorname{cov}(x_{n; i, q}x_{n; j, \ell}, \varepsilon_{n; h} \varepsilon_{n; s})$, for example.

\begin{assumption}[Bound Covariances]
\label{moments}
Define the functions $f_{q,\ell}$ and $g_{q^{\prime},\ell^{\prime}}$ mapping $\mathbb{R}^{(K+1)\times 2}$ to $\mathbb{R}$ to be such that $f_{q,\ell}(\mathbf{r}_{n;\{i,j\}}) = r_{n;i,q}r_{n;j,\ell}$ and $g_{q^{\prime},\ell^{\prime}}(\mathbf{r}_{n;\{h,s\}}) = r_{n;h,{q^{\prime}}}r_{n;s,{\ell^{\prime}}}$ for $(i,j,h,s) \in \mathcal{I}_{n}$, $i\neq j$, $h\neq s$, $q\neq \ell$ and $q^{\prime}\neq \ell^{\prime}$. The norms $\|f_{q,\ell}(\mathbf{r}_{n;\{i,j\}})\|_{p_{f}^{\ast}}+\|g_{q^{\prime},\ell^{\prime}}(\mathbf{r}_{n;\{h,s\}})\|_{p_{g}^{\ast}}<\infty$ for all $q,\ell$ where $p_{f}^{\ast}=\max\{p_{f,i},p_{f,j}\}$ (analogous for $p_{g}^{\ast}$) and $1/p_{f,i}+1/p_{f,j}+1/p_{g,h}+1/p_{g,s}<1$.
\end{assumption}

Assumption \ref{moments} provides sufficient conditions for the functions $f_{q,\ell}$ and $g_{q^{\prime},\ell^{\prime}}$ of $\psi$-dependent random variables to have bounded covariances. The weak dependence in Assumption \ref{depas} guarantees that the dependence coefficients vanish to zero when the network distance increases. However, the network distance $d_{n}(i,j)$ between any two individuals $i$ and $j$ is also a function of the sample size. Therefore, the asymptotic behavior of the dependence coefficients $\lambda_{n, d}$ depends on the asymptotic behavior of the network features determining the distance between nodes. In particular, the density of the network is explicitly related to the geodesic distance. When the density of the network is arbitrarily large, the geodesic distance is always one for any pair of nodes. Therefore, as noted by \cite{Kojevnikov2020}, there is a trade-off between network density and the rate of convergence of the dependence coefficients. Networks with higher density would require the dependence to decrease faster (and vice versa). The following assumption provides a necessary condition on the dependence coefficients for a Law of Large Numbers to apply. 

\begin{assumption}[Dependence Rate of Decay]
\label{decay}

Let $\bar{D}_{n}(d)\equiv n^{-1}\sum_{i\in \mathcal{I}_{n}}|\mathcal{P}_{n}(i,d)|$ be the average number of distance-$d$ connections in the network $\mathcal{G}_{n}$. We assume that, for any realizations of the networks $\mathcal{G}_{n}$, for all $n$, it follows that $n^{-1}\sum_{d\geq 1}\bar{D}_{n}(d)\allowbreak \lambda_{n,d}{\longrightarrow}0$ as $n\longrightarrow\infty$.
\end{assumption}

Assumption \ref{decay} is similar to Assumption 3.2 in \cite{Kojevnikov2020}, using the notation in our paper. The key distinction lies in our use of the unconditional version of $\psi$-dependence, which results in the dependence coefficients in the sequence ${\lambda_{n,d}}$ not being random variables. Furthermore, we apply Assumption \ref{decay} conditionally to any realizations of the networks $\mathcal{G}_{n}$, ensuring that $n^{-1}\sum_{d\geq 1}\bar{D}_{n}(d)\allowbreak \lambda_{n,d}$ is nonrandom. As emphasized by \cite{Kojevnikov2020}, this assumption is implied by restrictions on dependence coefficients ($\lambda_{n, d} \leq \theta_{n, d}^{1-4 / p}$, for $p>4$) and the number of distance-$d$ connections on $\mathcal{G}_{n}$ (${D}_{n}(d)\leq 4c_{n}(d, m , k)$, where $c_{n}(d, m , k)$ is defined in \eqref{dense}). The following assumption imposes the existence of moments for products of $\psi$-dependent random variables. 

\begin{assumption}[Existence of Moments]
\label{epmoments}
$\exists\epsilon>0$ such that $\sup _{n \geq 1} \max _{i \in \mathcal{I}_{n}}\|R_{n; i, j}\|_{1+\epsilon}<\infty$, where $R_{n;i,j}\equiv r_{n;i,q}r_{n;j,\ell}$, and $\left\|R_{n; i,j}\right\|_{p}\equiv (\mathbb{E}[|R_{n;i,j}|^{p} ])^{1 / p}$.
\end{assumption}

The previous assumptions are sufficient to guarantee that a Law of Large Numbers applies to products of $\psi$-dependent random variables. To show asymptotic normality, we again use the Central Limit Theorem result in \cite{Kojevnikov2020}. As mentioned above, for the asymptotic moments of network-dependent random variables to be well defined, we need to control the level of asymptotic density. In particular, following \cite{Kojevnikov2020}, we define a measure of the average neighborhood size as $\bar{D}_{n}(d , k)=n^{-1} \sum_{i \in \mathcal{I}_{n}}\left|\mathcal{P}_{n}(i , d)\right|^{k}$ and a measure of the average neighborhood shell size as $\bar{D}_{n}(d, m , k)^{-}=n^{-1} \sum_{i \in \mathcal{I}_{n}} \max _{j \in \mathcal{P}_{n}(i,d)}\left|\mathcal{P}_{n}^{-}(i,m) \setminus \mathcal{P}_{n}^{-}(j , d-1)\right|^{k}$, where $\mathcal{P}_{n}^{-}(j , d-1)=\left\{\emptyset\right\}$ when $d=0$. With these two measures of average density, construct the combined quantity, 

\begin{equation}
\label{dense}
c_{n}(d, m , k)=\inf _{\alpha>1}\left[\bar{D}_{n}(d, m , k \alpha)^{-}\right]^{\frac{1}{\alpha}}\left[\bar{D}_{n}\left(d, \frac{\alpha}{\alpha-1}\right)\right]^{1-\frac{1}{\alpha}}.
\end{equation}

For some arbitrary position $q$ in the matrix $\mathbf{Z}_{n;i}$, let $S_{n}=\sum_{i\in\mathcal{I}_{n}}z_{n;i,q}\varepsilon_{n;i}$. Defining $\sigma_{n,q}^{2}\equiv\text{var}(S_{n})$, the following assumption guarantees the existence of higher-order moments, imposes asymptotic sparsity, and bounds the long-run variance.

\begin{assumption}[Average Sparsity]
\label{av_sparsity}
For all network realizations $\mathcal{G}_{n} \in \mathscr{G}$, \subasu \label{clt1} for some $p>4$, $\sup _{n \geq 1} \max _{i \in \mathcal{I}_{n}}\left\|z_{n;i,q}\varepsilon_{n;i}\right\|_{p}<\infty$. There exists a sequence $m_{n}\to\infty$, such that for $k=1,2$, \subasu \label{clt2} , $\frac{n}{\sigma_{n,q}^{2+k}}\sum_{d \geq 0} c_{n}\left(d, m_{n} , k\right) \lambda_{n, d}^{1-\frac{2+k}{p}} {\longrightarrow} 0$ as $n\to \infty$, \subasu \label{clt3} $\frac{n^{2} \lambda_{n, m_{n}}^{1-(1 / p)}}{\sigma_{n,q}} {\longrightarrow} 0$ as $n\to \infty$.
\end{assumption}

These conditions impose a convergence rate of the dependence coefficients $\lambda_{n, d}$ that is related to the density of the network. There is a trade-off in which a higher density requires a higher speed in dependence-decreasing patterns. The previous assumptions are sufficient to show that our GMM estimator is consistent and asymptotically normal. Formally, let $\mathbf{\Omega}_{n}=\text{var}(\mathbf{Z}_{n}^{\top}\boldsymbol{\varepsilon}_{n})$ be a variance defined over the set of sampled individuals. It converges (see Lemma \ref{finitevar} in the supplemental material) to the finite population variance, 

\begin{equation}
\label{omega}
\mathbf{\Omega}_{N} = \lim_{n \to\infty} n^{-1}\left[\sum_{i=1}^{n} \text{var}(\mathbf{z}_{n;i}\varepsilon_{n;i}) + \sum_{i \neq j}\text{cov}(\mathbf{z}_{n;i}\varepsilon_{n;i}, \mathbf{z}_{n;j}\varepsilon_{n;j})\right]\equiv N^{-1}\sum_{d\geq 0} \mathbf{\Gamma}_{N}(d)<\infty,
\end{equation}

\noindent where $\mathbf{\Gamma}_{N}(d)=\sum_{i \in \mathcal{I}_{N}}\sum_{j \in \mathcal{P}_{N}(i,d)} \mathbb{E}[\mathbf{z}_{N;i}\varepsilon_{N;i}\varepsilon_{N;j}\mathbf{z}_{N;j}^{\top}]$ are the covariances between random variables of individuals at distance $d$. Therefore, the variance-covariance matrix $\mathbf{\Omega}_{N}$ can be calculated by summing the covariances for all possible distances $d\geq 0$. After characterizing $\mathbf{\Omega}_{N}$, Theorem \ref{t2} provides the asymptotic behavior of \eqref{theta_GMM}.

\begin{theorem}
\label{t2}
Let Assumptions \ref{id}--\ref{av_sparsity} hold, then as $n\to\infty$, $\widehat{\boldsymbol{\theta}}_{\text{\emph{GMM}}} = \boldsymbol{\theta}+o_{p}(1)$ and $\sqrt{n}(\widehat{\boldsymbol{\theta}}_{\text{\emph{GMM}}} - \boldsymbol{\theta}) \overset{d}{\to} \mathcal{N}(\boldsymbol{0}, \mathbf{\Sigma}_{N})$, where $\mathbf{\Sigma}_N\equiv(\mathbb{E}[N^{-1}\sum_{i \in \mathcal{I}_{N}}\mathbf{z}_{N;i}\mathbf{d}_{N;i}^{\top}]^{\top} \mathbf{A}_{N}\mathbb{E}[N^{-1}\sum_{i \in \mathcal{I}_{N}}\mathbf{z}_{N;i}\mathbf{d}_{N;i}^{\top}])^{-1}\times(\mathbb{E}[N^{-1}\sum_{i \in \mathcal{I}_{N}}\mathbf{z}_{N;i}\mathbf{d}_{N;i}^{\top}]^{\top}\allowbreak\mathbf{A}_{N}\mathbf{\Omega}_N\mathbf{A}_{N}\times\mathbb{E}[N^{-1}\sum_{i \in \mathcal{I}_{N}}\mathbf{z}_{N;i}\mathbf{d}_{N;i}^{\top}])(\mathbb{E}[N^{-1}\sum_{i \in \mathcal{I}_{N}}\mathbf{z}_{N;i}\mathbf{d}_{N;i}^{\top}]^{\top}
\mathbf{A}_{N}$

\noindent $\mathbb{E}[N^{-1}\sum_{i \in \mathcal{I}_{N}}\mathbf{z}_{N;i}\mathbf{d}_{N;i}^{\top}])^{-1}$, and when $\mathbf{A}_{N}=\mathbf{\Omega}_N^{-1}$, then 
\begin{equation}
    \mathbf{\Sigma}_N=(\mathbb{E}[N^{-1}\Sigma_{i \in \mathcal{I}_{N}}\mathbf{z}_{N;i}\mathbf{d}_{N;i}^{\top}]^{\top}\mathbf{\Omega}_{N}^{-1}\mathbb{E}[N^{-1}\Sigma_{i \in \mathcal{I}_{N}}\mathbf{z}_{N;i}\mathbf{d}_{N;i}^{\top}])^{-1}\text{.}\label{eff_Sigma}
\end{equation}
\end{theorem}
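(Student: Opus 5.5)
The proof uses the closed form \eqref{theta_GMM}. Substituting $\mathbf{y}_n=\mathbf{D}_n\boldsymbol{\theta}_0+\boldsymbol{\varepsilon}_n$, which is the matrix form of Assumption \ref{excrest}, gives the exact decomposition
\begin{equation*}
\widehat{\boldsymbol{\theta}}_{\text{GMM}}-\boldsymbol{\theta}_0=\big[\widehat{\mathbf{Q}}_n^{\top}\mathbf{A}_n\widehat{\mathbf{Q}}_n\big]^{-1}\widehat{\mathbf{Q}}_n^{\top}\mathbf{A}_n\, n^{-1}\mathbf{Z}_n^{\top}\boldsymbol{\varepsilon}_n,\qquad \widehat{\mathbf{Q}}_n\equiv n^{-1}\mathbf{Z}_n^{\top}\mathbf{D}_n .
\end{equation*}
Both consistency and asymptotic normality then reduce to two limit statements. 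The first is $\widehat{\mathbf{Q}}_n\to_p \mathbf{Q}_N\equiv\mathbb{E}[N^{-1}\sum_{i}\mathbf{z}_{N;i}\mathbf{d}_{N;i}^{\top}]$. The second is that $n^{-1/2}\mathbf{Z}_n^{\top}\boldsymbol{\varepsilon}_n\to_d\mathcal{N}(\mathbf{0},\mathbf{\Omega}_N)$, which in particular gives $n^{-1}\mathbf{Z}_n^{\top}\boldsymbol{\varepsilon}_n=o_p(1)$. Given these, the rest is standard. Assumption \ref{relevance} makes $\mathbf{Q}_N^{\top}\mathbf{A}_N\mathbf{Q}_N$ nonsingular, and $\mathbf{A}_n\to_p\mathbf{A}_N$. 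By the continuous mapping theorem and Slutsky's lemma, $\widehat{\boldsymbol{\theta}}_{\text{GMM}}-\boldsymbol{\theta}_0=o_p(1)$ and $\sqrt n(\widehat{\boldsymbol{\theta}}_{\text{GMM}}-\boldsymbol{\theta}_0)\to_d(\mathbf{Q}_N^{\top}\mathbf{A}_N\mathbf{Q}_N)^{-1}\mathbf{Q}_N^{\top}\mathbf{A}_N\,\mathcal{N}(\mathbf{0},\mathbf{\Omega}_N)$. This is the sandwich $\mathbf{\Sigma}_N$. Setting $\mathbf{A}_N=\mathbf{\Omega}_N^{-1}$ collapses the sandwich to \eqref{eff_Sigma} by direct algebra.

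\textbf{Law of large numbers for $\widehat{\mathbf{Q}}_n$.} Every entry of $\mathbf{z}_{n;i}\mathbf{d}_{n;i}^{\top}$ is a weighted sum of products $r_{n;j,q}r_{n;h,\ell}$. This holds once $\mathbf{W}_n\mathbf{y}_n$ is written through the reduced form $\sum_{p\ge0}\beta_0^p\mathbf{W}_n^{p+1}(\mathbf{X}_n\boldsymbol{\delta}_0+\ldots+\boldsymbol{\varepsilon}_n)$, which converges because $\beta_0<1/\lambda_{\max}$. Since products of $\psi$-dependent variables need not be $\psi$-dependent, I would use Assumption \ref{moments} together with a covariance inequality for $\psi$-dependent arrays (Kojevnikov et al.'s Lemma A.2 type). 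This yields $|\operatorname{cov}(R_{n;i,j},R_{n;h,s})|\le C\lambda_{n,d}^{1-1/p_f-1/p_g}$ at distance $d$.

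The variance of the sample average is then bounded by grouping pairs by distance:
\begin{equation*}
n^{-2}\sum_{d\ge0}\sum_i|\mathcal{P}_n(i,d)|\,C\lambda_{n,d}\le C n^{-1}\Big(1+\sum_{d\ge1}\bar D_n(d)\lambda_{n,d}\Big),
\end{equation*}
which tends to zero by Assumption \ref{decay}. Uniform integrability comes from Assumption \ref{epmoments} via a truncation argument. Random sampling without replacement makes the sample mean unbiased for the population average of expectations. Together these give $\widehat{\mathbf{Q}}_n-\mathbf{Q}_N=o_p(1)$. By the same argument, combined with Assumption \ref{id} (which makes $\mathbb{E}[\mathbf{z}_{n;i}\varepsilon_{n;i}]=\mathbf{0}$), $n^{-1}\mathbf{Z}_n^{\top}\boldsymbol{\varepsilon}_n=o_p(1)$.

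\textbf{Central limit theorem.} By the Cramér--Wold device, it suffices to treat a scalar $S_n=\sum_i \mathbf{c}^{\top}\mathbf{z}_{n;i}\varepsilon_{n;i}$ for an arbitrary fixed $\mathbf{c}$. Assumption \ref{av_sparsity} is written for a single coordinate $q$, and I would apply it to the linear combination in the same way. Its conditions (i)--(iii) are exactly the hypotheses of the Kojevnikov et al.\ CLT: moments of order $p>4$, the $c_n(d,m_n,k)$ density conditions, and the tail condition at $m_n$. That CLT gives $S_n/\sigma_n\to_d\mathcal{N}(0,1)$. Lemma \ref{finitevar} then gives $n^{-1}\operatorname{var}(\mathbf{Z}_n^{\top}\boldsymbol{\varepsilon}_n)\to\mathbf{\Omega}_N$, so $n^{-1}\sigma_n^2\to\mathbf{c}^{\top}\mathbf{\Omega}_N\mathbf{c}$, which yields the stated limit.

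\textbf{Main obstacle.} The hard step is verifying that $\mathbf{z}_{n;i}\varepsilon_{n;i}$ satisfies the $\psi$-dependence required by the CLT. Here $\mathbf{z}_{n;i}$ aggregates $\mathbf{x}$'s of $\mathbf{W}_{n,0}$-neighbours, whereas distances are measured in $\mathcal{G}_n$. I would first treat the model conditionally on $(\mathcal{G}_n,\mathcal{G}_{n,0})$, so that $\mathbf{z}_{n;i}\varepsilon_{n;i}$ is a fixed-weight finite sum of products $x_{n;j,k}\varepsilon_{n;i}$. Then:
\begin{enumerate}
\item I would use Assumption \ref{moments} to bound covariances of these products by the dependence coefficients of the constituent $\mathbf{r}$'s.
\item I would use the fact that the weights are bounded in $[0,1]$ and the powers are finite ($p$ fixed), so this bound transfers to the sums at the cost of constants.
\item I would integrate over the network realizations, which is legitimate because the rate conditions are assumed to hold for every realization.
\end{enumerate}
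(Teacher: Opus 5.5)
Your skeleton is essentially the paper's proof of Theorem~\ref{t2}. You use the same decomposition of $\widehat{\boldsymbol{\theta}}_{\text{GMM}}-\boldsymbol{\theta}_0$. You get a law of large numbers for $n^{-1}\mathbf{Z}_n^{\top}\mathbf{D}_n$ and $n^{-1}\mathbf{Z}_n^{\top}\boldsymbol{\varepsilon}_n$ from three ingredients: the Kojevnikov--Marmer--Song covariance inequality for products, truncation under Assumption~\ref{epmoments}, and a distance-grouped variance bound sent to zero by Assumption~\ref{decay} (Lemmas~\ref{bounds} and~\ref{lln_main}, Corollaries~\ref{int_reg} and~\ref{int_err}). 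You then finish with their CLT, Corollary~\ref{finitevar}, Slutsky and the sandwich algebra. Your Cram\'er--Wold step is actually tighter than the paper's. Lemma~\ref{mclt} only combines coordinatewise CLTs, which do not yield joint convergence, whereas you apply the CLT to every $\mathbf{c}^{\top}\mathbf{z}_{n;i}\varepsilon_{n;i}$. The price is that you need Assumption~\ref{av_sparsity} for all such combinations.

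The genuine gap is the step you flag yourself, and your three-step patch does not close it. The paper does not close it either: Lemma~\ref{clt} simply cites Lemma~\ref{bounds}.

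\emph{The distance problem.} What is lost in passing from $\mathbf{r}_{n;i}$ to $\mathbf{z}_{n;i}\varepsilon_{n;i}$ is \emph{distance}, not a multiplicative constant. (Without row-normalisation or bounded $\mathcal{G}_{n,0}$-degree, not even the magnitude is a fixed constant.) The vector $\mathbf{z}_{n;i}$ contains $x_{n;j,k}$ for the $p$-step $\mathcal{G}_{n,0}$-neighbours $j$ of $i$, and the assumptions do not restrict where these sit in $\mathcal{G}_n$. Suppose such a $j$ is $\mathcal{G}_n$-adjacent to $h$, and a $p$-step $\mathcal{G}_{n,0}$-neighbour $s$ of $h$ is $\mathcal{G}_n$-adjacent to $i$. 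Then, up to terms that vanish as $d_n(i,h)$ grows, $\operatorname{cov}(x_{n;j,k}\varepsilon_{n;i},x_{n;s,k}\varepsilon_{n;h})$ equals $\mathbb{E}[x_{n;j,k}\varepsilon_{n;h}]\,\mathbb{E}[x_{n;s,k}\varepsilon_{n;i}]$. The model deliberately allows both factors to be nonzero, however large $d_n(i,h)$ is; this is exactly the endogeneity of $\mathcal{G}_n$.

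Hence no bound of the form $C\lambda_{n,d_n(i,h)}$ follows from Assumptions~\ref{depas}--\ref{moments}. Yet the Kojevnikov et al.\ CLT and Assumption~\ref{av_sparsity} are both phrased in $\mathcal{G}_n$-distances between indices, and require precisely such a bound. Your LLN grouping by $d_n(i,h)$ has the same bookkeeping problem, though it is milder there because only an $o(n^2)$ variance is needed.

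\emph{The centring problem.} Your step 3 has a second problem. Assumption~\ref{depas} conditions on $\mathcal{G}_N$ alone, not on $(\mathcal{G}_n,\mathcal{G}_{n,0})$. Assumption~\ref{id} delivers $\mathbb{E}[\mathbf{z}_{n;i}\varepsilon_{n;i}]=\mathbf{0}$ only unconditionally. Given the endogenous network, the summands need not be centred: the constant instrument contributes $\mathbb{E}[\varepsilon_{n;i}\mid\mathcal{G}_n]$, which is nonzero whenever links depend on $\varepsilon$, as in the paper's Design~1. So averaging conditional CLTs over network realizations gives a mixture, unless you control $n^{-1/2}\sum_{i}\mathbb{E}[\mathbf{z}_{n;i}\varepsilon_{n;i}\mid\mathcal{G}_n,\mathcal{G}_{n,0}]$.

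\emph{What would close it.} A rigorous version needs assumptions the paper does not state. One option is $\psi$-dependence imposed directly on $\{\mathbf{z}_{n;i}\varepsilon_{n;i}\}$, in a metric in which $p$-step $\mathcal{G}_{n,0}$ links also count, with Assumptions~\ref{decay} and~\ref{av_sparsity} restated in that metric. Another is a uniform bound $L$ on the $\mathcal{G}_n$-distance from $i$ to its $p$-step $\mathcal{G}_{n,0}$-neighbours, together with bounded $\mathcal{G}_{n,0}$-degree; this effectively replaces $\lambda_{n,d}$ by $\lambda_{n,(d-2L)\vee 0}$. Either must be paired with a conditional version of Assumption~\ref{id} or a separate treatment of the conditional-mean term.
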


We present the proof for Theorem \ref{t2} in Appendix \ref{Appendix_A}. The following subsections discuss the relationship between the precision of the estimator in \eqref{theta_GMM} and the levels of sparsity of the endogenous population network of interest, $\mathcal{G}_{N}$.

\subsection{Precision and Sparsity}
Some individuals in the population may not have any connections at distance $p$, and may affect the identifying moment condition in Theorem \ref{idtheorem}. To consider the effect of changes in identifying information on the asymptotic variance-covariance matrix in \eqref{eff_Sigma}, define $\eta_{N,0;i}^{p}$ to be a random variable equal to one if individual $i$ has at least one connection at distance $p$ and zero otherwise. Let $\kappa_{N,0;i}^{p}=\mathbb{E}[\eta_{N,0;i}^{p}]$ be the unconditional probability that the individual $i$ has at least one connection at distance $p$. Define $\boldsymbol{H}_{N,0;i}\equiv\text{diag}(\eta_{N,0;i}^{p}, \dots, \eta_{N,0;i}, 1, \dots, 1)$ to be a $[(p+1)K+1]\times[(p+1)K+1]$ matrix where the first $K$ elements contain the random variables that determine whether the individual $i$ has at least one connection at distance $p$ and the second $K$ elements are the random variables that determine whether or not the individual $i$ has at least one connection at distance $p-1$, etc., until the last $K$ elements associated with $\mathbf{W}_{N,0}\mathbf{X}_{N}$, where $\eta_{N;i}$ is the random variable that determines whether $i$ is isolated in the network $\mathcal{G}_{N,0}$. Finally, the last $K+1$ elements in the lower right submatrix, which coincide with the non-network regressors $\widetilde{\mathbf{x}}_{N;i}$, are ones. Define the $[(p+1)K+1]\times[(p+1)K+1]$-matrix $\boldsymbol{K}_{N,0;i}\equiv\text{diag}(\kappa_{N,0;i}^{p}, \dots, \kappa_{N,0;i}, 1, \dots, 1)$ to be $\mathbb{E}[\boldsymbol{H}_{N,0;i}]$. 

Note that when $\eta_{N,0;i}^{p}=0$, the first $K$ elements of $\mathbf{z}_{N;i}$ equal zero. Similarly, when $\eta_{N,0;i}^{p-1}=0$, the second $K$ elements of $\mathbf{z}_{N;i}$ equal zero. The same argument repeats until the $K$ elements associated with $\mathbf{W}_{N,0}\mathbf{X}_{N}$, where if $\eta_{N;i}=0$ (individual $i$ is isolated), the elements of $\mathbf{z}_{N;i}$ associated with the component $(\mathbf{w}_{N,0;i}\mathbf{X}_{N})^{\top}$ are equal to zero. Therefore, by the law of total expectation, $\mathbb{E}[\sum_{i \in \mathcal{I}_{N}}\mathbf{z}_{N;i}\mathbf{d}_{N;i}^{\top}]$ can be written as $\boldsymbol{K}_{N,0;i}\mathbb{E}[\mathbf{z}_{N;i}\mathbf{d}_{N;i}^{\top}\mid \boldsymbol{H}_{N,0;i}^{\ast}\neq \boldsymbol{O}_{pK}]$, where $\boldsymbol{H}_{N,0;i}^{\ast}$ contains the left top $(pK\times pK)$-upper matrix of $\boldsymbol{H}_{N,0;i}$ and $\boldsymbol{O}_{pK}$ is the $pK \times pK$ zero matrix. Therefore, \eqref{eff_Sigma} depends on the population probabilities that an individual provides identification information. For low values of these probabilities, the upper right submatrix of $\boldsymbol{K}_{N;0,i}$ approaches the zero matrix, and the variance-covariance matrix could grow arbitrarily large. In the extreme case of nonidentification, \eqref{eff_Sigma} diverges to infinity. Theorem \ref{t2} exposes a relationship between the precision of network parameters and the sparsity of the network.

\subsection{Efficient Weight Matrix Estimation}
To construct an efficient version of the proposed GMM estimator, we need a consistent estimator of $\mathbf{\Omega}_{N}$. Here we use \citeauthor{Kojevnikov2020}'s \citeyearpar{Kojevnikov2020} network heteroskedasticity and autocorrelation-consistent (HAC) variance estimator. Let $D_{n}$ represent a bandwidth after which the dependence between individuals vanishes. For example, \cite{Kojevnikov2020} proposes $D_{n}=C \times [\log (\text{average degree} \vee(1+0.05))]^{-1} \times \log n$, and this rule of thumb is used in our Monte Carlo simulations and in our empirical study with $C=1.8$ according to their suggestion. The proposed variance-covariance matrix estimator is then given by  

\begin{equation}
\widetilde{\mathbf{\Omega}}_{n} =\sum_{d\geq 0}K(d/D_{n}) \frac{1}{n}\sum_{i=1}^{n}\sum_{j \in \mathcal{P}_{n}(i,d)}\mathbf{z}_{n;i}\widetilde{\varepsilon}_{n;i}\widetilde{\varepsilon}_{n;j}\mathbf{z}_{n;j}^{\top},\label{Omega_tilde}
\end{equation}

\noindent where $\widetilde{\varepsilon}_{n;i} = y_{n;i} - \mathbf{d}_{n;i}^{\top}\widetilde{\boldsymbol{\theta}}_{\text{GMM}}$; $K(\cdot)$ is a kernel (weighting) function such that $K(0)=1$ and $K(u)=0$ for $u>1$; and $\widetilde{\boldsymbol{\theta}}_{\text{GMM}}$ is a preliminary consistent estimator. In \eqref{theta_GMM} with $\mathbf{A}_n$ equal to the identity matrix or $n^{-1}\mathbf{Z}_{n}^{\top}\mathbf{Z}_{n}$, the latter was chosen in Monte Carlo exercises and the empirical study. In the second step, the feasible efficient GMM estimator is defined with $\mathbf{A}_n=\widetilde{\mathbf{\Omega}}_{n}^{-1}$ in \eqref{theta_GMM}, which we call $\widehat{\boldsymbol{\theta}}_{\text{GMM}}^{\star}$.

\subsection{Standard Error Calculation}

It follows that the efficient variance-covariance matrix \eqref{eff_Sigma} can be estimated by 

\begin{equation}
\left[n^{-1}\mathbf{D}_{n}^{\top}\mathbf{Z}_{n}\widehat{\mathbf{\Omega}}_{n}^{\star-1}n^{-1}\mathbf{Z}_{n}^{\top}\mathbf{D}_{n}\right]^{-1}\text{,}\label{feasible_var_cov}
\end{equation}

\noindent where $\widehat{\mathbf{\Omega}}_{n}^{\star}$ is calculated as in \eqref{Omega_tilde}, but using $\widehat{\boldsymbol{\theta}}_{\text{GMM}}^\ast$ instead. Standard errors can then be calculated taking the square root of the main diagonal elements of \eqref{feasible_var_cov} after dividing them by $n$.

Note that akin to any other IV procedure, potentially weak instruments can be a concern here. For example, within our specific framework, if both observed networks $\mathcal{G}_{n,0}$ and $\mathcal{G}_{n}$ exhibit a high sparsity with minimal overlap between connections, the identification power of the IV generated by the exogenous network would be inherently weak. Consequently, it is empirically recommended to validate the substantial overlap between the two networks and to confirm that the characteristics in $\mathbf{Z}_{n}$ significantly predict the components of $\mathbf{D}_{n}$ in an empirical application (see, e.g., Section \ref{validate_test} in Appendix \ref{Appendix_C}). The development of a theory addressing weak IVs in the network setting is beyond the scope of our paper and therefore left for future research.

%% file: simulations.tex
To showcase the versatility of the proposed estimator in this paper, this section documents its performance in two different data-generating processes (hereafter DGPs) where the endogeneity is generated by a simultaneous determination of network formation and outcomes--also known as unobserved homophily--(Design 1) and measurement error in the connections (Design 2). A total of 1,500 data sets $\left\{y_{n;i},x_{n;i},\{w_{n;i,j}\}_{j=1,j\neq i}^n,\{w_{n,0;i,j}\}_{j=1,j\neq i}^n\right\}_{i=1}^n$; with $n\in\left\{50,100,200\right\}$, are generated from \eqref{lmm} by setting $k=1$ and drawing $\left\{x_{n,i}\right\}_{i=1}^n$ as a random sample from a normal distribution with a mean of zero and variance of 3. We set the true vector of the parameters at $\boldsymbol{\theta}_{0}=[\beta_{0}, \delta_{0}, \gamma_{0}]^\top=[0.7,1,1]^\top$. The other data components are constructed using the following rules.

\subsubsection*{Design 1: Unobserved Characteristics with Homophily\label{d3}}
 
In this design, the outcome variable of the individual $i$, $y_{n,i}$, and the connections $\left\{w_{n;i,j}\right\}_{j=1,j\neq i}^n$ are jointly determined by a common idiosyncratic homophily-related unobserved variable $\varepsilon_{n,1;i}^{\ast}$. First, an exogenous adjacency matrix $\mathbf{W}_{n,0}=[w_{n,0;i,j}]$ from an \citeauthor{Erdos1959}'s \citeyearpar{Erdos1959} random graph with a density of 0.01 is generated along with a $n\times 1$ vector $\boldsymbol{\varepsilon}_{n,1}^{\ast}=[\varepsilon_{n,1;1}^{\ast},\ldots,\varepsilon_{n,1;n}^{\ast}]^\top$ from a multivariate standard normal distribution.\footnote{The density of a network is the ratio between the total numbers of actual ties and of potential ties.} The elements of the endogenous adjacency matrix $\mathbf{W}_{n}=[w_{n;i,j}]$ are then calculated as

\begin{equation*}
w_{n;i,j}=
\begin{cases}
\mathds{1}(|\varepsilon_{n,1;i}^{\ast}-\varepsilon_{n,1;j}^{\ast}|<\widehat{F}_{\varepsilon_{n,1}^\ast}^{-1}(0.95))\times (1-w_{n,0;i,j}) + w_{n,0;i,j} & \text{, if $\varepsilon_{n,1;i}^{\ast}>\Phi^{-1}(0.95)$;} \\
\mathds{1}(|\varepsilon_{n,1;i}^{\ast}-\varepsilon_{n,1;j}^{\ast}|<\widehat{F}_{\varepsilon_{n,1}^\ast}^{-1}(0.95)) \times w_{n,0;i,j} & \text{ if $\varepsilon_{n,1;i}^{\ast}<\Phi^{-1}(0.05)$;} \\
w_{n,0;i,j} & \text{, otherwise};
\end{cases}
\end{equation*}

\noindent where $\widehat{F}_{\varepsilon_{n,1}^{\ast}}^{-1}(0.95)$, which represents the 95\% empirical quantile of the elements of $\boldsymbol{\varepsilon}_{n,1}^{\ast}$, $\varepsilon_{n,1;k}^{\ast}$ represents its $k$th element, and $\Phi^{-1}(\cdot)$ represents the inverse of the cumulative distribution function of a standard normal random variable. The $n\times 1$ vector of outcomes, $\mathbf{y}_{n}$, is then constructed from \eqref{lmm} by setting $\boldsymbol{\varepsilon}_{n}=m \times \boldsymbol{\varepsilon}_{n,1}+\boldsymbol{\varepsilon}_{n,2}$, where $m\in\left\{1,3\right\}$, $\boldsymbol{\varepsilon_{n,2}}$ is drawn from a multivariate standard normal distribution. The elements of $\boldsymbol{\varepsilon}_{n,1}$ are defined as

\begin{equation*}
\varepsilon_{n,1;i}=
\begin{cases}
\varepsilon_{n,1;i}^{\ast} & \text{, if $\varepsilon_{n,1;i}^{\ast}<\Phi^{-1}(0.05)$ or $\varepsilon_{n,1;i}^{\ast}>\Phi^{-1}(0.95)$;} \\
0 & \text{, otherwise}.
\end{cases}
\end{equation*}

This design captures the homophily idea; i.e., agents endowed with a large value of $\varepsilon_{n,1}$ will tend to create / maintain connections with those also endowed with large values of $\varepsilon_{n,1}$ and sever them with those with low values of this unusual unobserved characteristic.

\subsubsection*{Design 2: Misclassified Links\label{d2}}

This is a modified version of \citeauthor{Lewbel_Qu_Tang}'s \citeyearpar{Lewbel_Qu_Tang} Monte Carlo design. While the true DGP involves an unobserved adjacency matrix $\mathbf{W}_{n,0}^{\ast}=[w_{n,0;i,j}^{\ast}]$ generated from a standard \citeauthor{Erdos1959} \citeyearpar{Erdos1959} random network model with a density of 0.01 for size $n$, it is assumed that the empiricist only has access to an adjacency matrix $\mathbf{W}_{n}=[w_{n;i,j}]$, with randomly misclassified links; that is, $w_{n;i, j}=w_{n,0;i,j}^{\ast}e_{n,1;i,j}+(1-w_{n,0;i,j}^{\ast}) e_{n,2;i,j}$ for $i \neq j$ and an exogenous adjacency matrix $\mathbf{W}_{n,0}=[w_{n,0;i,j}]$, where $w_{n,0;i,j}=w_{n,0;i,j}^{\ast}b_{n,1;i,j}+(1-w_{n,0;i,j}^{\ast}) b_{n,2;i,j}$  for $i \neq j$. The $e_{n,1;i,j}$, $e_{n,2;i,j}$, $b_{n,1;i,j}$, and $b_{n,2;i,j}$ are Bernoulli random variables drawn independently from each other $\forall i\neq j$ with parameters $0.5$, $0$, $1-\tau$, and $0.002$, respectively. The design parameter $\tau\in\left\{0.01,0.05\right\}$ controls the probability of misclassification in $\mathbf{W}_{n,0}$. Notice that, as in \cite{Lewbel_Qu_Tang}, nonexisting links are never misclassified in $\mathbf{W}_n$, but misclassification of these nonexisting links is allowed in $\mathbf{W}_{n,0}$ with a very small probability of 0.2\%. However, this design makes the vector of individual outcomes an explicit function of the proportion of misclassification in $\mathbf{W}_n$ for each $i$; that is, the $n\times 1$ vector $\mathbf{y}$ is constructed following Equation \eqref{lmm}, where $\boldsymbol{\varepsilon}_{n}=\boldsymbol{\varepsilon}_{n,1}+\boldsymbol{\varepsilon}_{n,2}$, $\varepsilon_{n,1;i}=1/n\sum_{j=1}^{n}w_{n,0;i,j}^{\ast}e_{n,1;i,j}$, and $\boldsymbol{\varepsilon}_{n,2}$ is drawn from a multivariate standard normal distribution independently of everything else. 

\subsubsection*{Results\label{MC_results}}

\noindent Figures \ref{boxplot} and \ref{qqplot} show the results in terms of box plots and Q-Q plots of the Monte Carlo replications. Apart from implementing the proposed efficient GMM estimator described in Theorem \ref{t2} for $p\in{2,3}$,  the performance of the standard Ordinary Least Squares (OLS) estimator and the Generalized Two Stage Least Squares (G2SLS) estimator are also included. All adjacency matrices in all designs are row normalized prior to estimation \citep{liu2014}. The calculation of the efficient GMM requires an estimator of the variance-covariance matrix $\mathbf{\Omega}_{n}$. We use the standard two-stage GMM procedure to calculate the efficient weighting matrix. In the first step, we calculate the GMM estimator for $\boldsymbol{\theta}$ setting $\mathbf{A}_{n}=(\mathbf{Z}_{n}^{\top}\mathbf{Z}_{n})^{-1}$. We then use the estimated coefficients in the first step to calculate the network HAC variance estimator in \eqref{Omega_tilde}, where we choose $K(\cdot)$ to be the Parzen kernel, we set the bandwidth $D_{n}=1.8\times [\log (\text { average degree } \vee(1+0.05))]^{-1} \times \log n$ as in \cite{Kojevnikov2020}.

Each panel in Figure \ref{boxplot} displays the performance of the three estimators when the state of a design (Des.) changes by changing the relevant design parameter $m$ or $\tau$. The box plots are based on Monte Carlo replications of OLS (black), G2SLS (dark gray), and the two proposed GMM estimators for $p=2$ (gray) and $p=3$ (light gray) of the social effects. Peer effects ($\beta$), contextual effects ($\delta$), and direct effects ($\gamma$) in \eqref{lmm} are shown with whiskers that show the empirical Monte Carlo quantiles 5\% and 95\%. Across the board, for all parameters, designs, and sample sizes, the proposed GMM estimators perform better than the OLS and G2SLS estimators in terms of bias and sampling variability. As expected, the estimation variability decreases when going from $p=2$ to $p=3$. In contrast, these results also show that naive OLS and G2SLS could potentially lead to estimates with substantial biases in the presence of an endogenous network in a linear-in-means model. On average, the G2SLS underestimates the real value of the peer effects coefficient for the case of misclassified links (Design 2).

Similarly, Figure \ref{qqplot} displays the corresponding Q-Q plots for the GMM based on the standardized version of the Monte Carlo replications of the GMM estimator of the same social effects for sample sizes $n=50$ (light gray), $n=100$ (gray), and $n=200$ (black). The blue dashed line depicts the 45-degree line. This plot shows that the asymptotic normal approximation in Theorem \ref{t2} works well even with a sample as small as 50 observations. Furthermore, as the sample size increases, the approximation improves for all parameters and designs.

\begin{figure}
    \caption{Box Plots of the OLS, G2SLS, and GMM Estimators of Social Effects}
        \input{plots/boxplot.tex}
        
    \vspace{0.4cm}
    \begin{minipage}{1\textwidth}
    \footnotesize
    Note: Box plots depict the Monte Carlo performance of OLS (black), G2SLS (dark gray) and the proposed efficient GMM estimator for $p=2$ (gray) and $p=3$ (light gray). The box plots are based on 1,500 replications of Design 1 (Des.1 ) and Design 2 (Des. 2) for sample sizes $n\in\left\{50,100,200\right\}$. The whiskers display the 5\% and 95\% empirical quantiles. Parameters $m$ and $\tau$ control the level of endogeneity and the probability of misclassification in $\mathbf{W}_{n}$, respectively.
    \end{minipage}
    \label{boxplot}
\end{figure}

\begin{figure}
    \caption{Q-Q Plots for the GMM Estimator of Social Effects}
        \input{plots/qqplot.tex}
        
    \vspace{0.4cm}
    \begin{minipage}{1\textwidth}
    \footnotesize
    Note: Q-Q plots are based on the standardized sample of 1,500 Monte Carlo replications of the proposed GMM estimator of the parameters in \eqref{lmm} for Design 1 (Des.1 ) and Design 2 (Des. 2) for sample sizes $n=50$ (light gray), $n=100$ (gray), and $n=200$ (black); $p=3$. The blue dashed line shows the 45 degree line. Parameters $m$ and $\tau$ control the level of endogeneity and the probability of misclassification in $\mathbf{W}_{n}$, respectively.
    \end{minipage}
    \label{qqplot}
\end{figure}

%% file: empirical.tex
We provide an empirical illustration of the proposed methods in the framework of estimating peer effects on academic performance among high school students. The data set was collected between March and May 2011 as part of the Hong Kong Secondary Education Survey in Hong Kong (SESHK). The survey was carried out in the second semester before the final exams and involved three secondary schools with 868 students participating. The sample includes students in the seventh grade of all three schools and students in the eighth and ninth grades of one school (g $\in\{7,8,9\}$). Each grade within a school is made up of five different sections (cl $\in\{1,\ldots,5\}$). Table \ref{tables:table1} in the supplemental material shows the summary statistics for the variables we use. Additional details about these variables can be found in Section \ref{descrip} of the supplemental material. 

In the survey, students were asked to report lists of up to ten peers within their grade with whom they discussed schoolwork issues and who were sitting nearby in class during the first semester. We used this information to build the \textit{study partner} and \textit{seatmate} networks, respectively. Classroom seat assignments undergo multiple changes throughout the semester, and the class teacher determines these adjustments. Unlike study partners, the seatmate network is based on proximity and is enforced by the school. Consequently, the seatmate network emerges as a compelling choice to serve as the instrumental network,  $\mathbf{W}_{n,0}$, in our analysis. That is, we treat the seatmate network as ex-ante exogenous in this analysis. On the contrary, students have the autonomy to select their study partners, and this choice can be influenced by unobservable characteristics that also affect exam performance, or the study partner choices can depend on other exams results, making the study partner network $\mathbf{W}_{n}$, potentially endogenous.  Table \ref{tables:table2} in the supplemental material reports the summary statistics of the network among all students by school.

We fit the Linear-in-Means model for social effects in \eqref{intro_lmm}. Specifically, for the individual $i$, we fit the following version of the model.
\begin{eqnarray} \label{emp_eq1}
\texttt{math}_{i,\text{s}\times\text{g}\times\text{cl}} &=& \alpha + \beta \sum_{j\neq i}^n w_{n;i,j}\texttt{math}_{j,\text{s}\times\text{g}\times\text{cl}} \\ \nonumber
       & & {} +  \sum_{j\neq i}^n w_{n;i,j}\texttt{characteristics}_{j,\text{s}\times\text{g}\times\text{cl}}^{\prime}\delta_{\texttt{characteristics}} \\ \nonumber
       & & {} +  \sum_{j\neq i}^n w_{n;i,j}\texttt{personality}_{j,\text{s}\times\text{g}\times\text{cl}}^{\prime}\delta_{\texttt{personality}} \\ \nonumber
       & & {} + \texttt{characteristics}_{i,\text{s}\times\text{g}\times\text{cl}}^{\prime}\gamma_{\texttt{characteristics}}
       +\texttt{personality}_{i,\text{s}\times\text{g}\times\text{cl}}^{\prime}\gamma_{\texttt{personality}} \\ \nonumber
       & & {} + \sum_{\text{s}=1}^{3}\sum_{\text{g}=7}^{9}\sum_{\text{cl}=1}^{5}f_{\text{s}\times\text{g}\times\text{cl}}\times\mathbb{I}\{i\in{\text{s}\times\text{g}\times\text{cl}}\} + \epsilon_i\text{,} 
\end{eqnarray}

\noindent where $\texttt{math}_{i,\text{s}\times\text{g}\times\text{cl}}$ is the natural logarithm of the first math test score of student $i$ in class cl, grade g, and school s; that is, $\mathbb{I}\{i\in{\text{s}\times\text{g}\times\text{cl}}\}=1$; $\texttt{personality}_{i,\text{s}\times\text{g}\times\text{cl}}$ includes the natural logarithm of cognitive ability, agreeableness, conscientiousness, extraversion, neuroticism, and openness test scores. Similarly, $\texttt{characteristics}_{i,\text{s}\times\text{g}\times\text{cl}}$ includes variables such as height, weight, indicator variables such as help from siblings, parents help, whether they commute to school by car or taxi; whether they play music; and whether the student is a man. We also include interaction terms between the male indicator and the test scores. Parameters $f_{\text{s}\times\text{g}\times\text{cl}}$ are jointly estimated with the social effects after setting $f_{3\times 9\times 5}=0$. The adjacency matrices are row-normalized before estimating the model as permitted by our theory. The effect of having more peers is captured by including the degree of the network (number of study partners). We also control for the fact that some students study alone (isolated).  We set $\delta_{\texttt{personality}} = -\gamma_{\texttt{personality}} $ for all estimation routines to avoid potential collinearity problems. Behaviorally, this restriction implies that only deviations from the students' own personality characteristics from the average of their peers affect the students' tests scores; see, for example, \cite{liu2014}.

The model \eqref{emp_eq1} is estimated using simple Ordinary Least Squares (OLS) with standard errors clustered at the $\text{s}\times\text{g}\times\text{cl}$ level, the Generalized Two-Stage Least Squares (G2SLS) of \cite{Kelejian1998,Kelejian_Prucha_1999_ER}, \cite{Lee2003}, and \cite{Bramoulle2009} with clustered standard errors as in the OLS estimator, and our proposed efficient GMM estimator with $p=5$, $C=1.8$ with the Tukey-Hanning kernel.

\begin{table}[H]
\vspace{-5em}
\def\arraystretch{1.00}
\setlength{\tabcolsep}{15pt} 
\centering
    \begin{threeparttable}
        \caption{Estimations results} 
        \vspace{0.1em}
        \footnotesize
        \input{tables//tables_rev/table3}
        \label{tables:table3}
        \begin{tablenotes}[para,flushleft]
            \footnotesize
            \raggedright
            \item{Note:} (i) \text{*} \(p<0.10\), \text{**}   \(p<0.05\), \text{***}  \(p<0.01\); (ii) Standard errors are in parentheses. (iii) $\dagger$ These regressors are measured as the deviation of students' personality from their peers' average.
        \end{tablenotes}
    \end{threeparttable}
\end{table}

Tables \ref{tables:table3} show the results for a subset of all regressors included in \eqref{emp_eq1} (Table \ref{tables:table4} in the supplemental material contains the results for the remaining set of regressors). All estimators show positive spillover effects; however, both OLS and G2SLS are significantly smaller than what the proposed GMM estimator uncovers. Direct-effect coefficients are relatively consistent between different estimators, both in terms of size and direction. Importantly, the signs of most coefficients match previous findings in the literature. For example, cognitive ability and noncognitive characteristics, such as conscientiousness, have a positive effect on math grade achievement \citep{Heckman2001}. There is no gender gap between the grades of men and women. Interestingly, the estimated coefficient for the degree variable suggests that having a larger number of study partners has a positive and significant effect on achievement and that result is robust across different estimators.

Our results suggest that estimators that do not control network endogeneity tend to underestimate peer effects. The observation of a negative bias in estimators that do not account for endogeneity aligns with findings in the literature, which highlight a similar negative bias in standard maximum likelihood estimators of spatial autoregressive models that overlook the reflection issue \citep{Mizruchi2008, Neuman2010}. This trend of negative bias, as observed between OLS and G2SLS, also extends to our GMM estimator, which incorporates control for network endogeneity. 

All results are qualitatively robust to different choices of $p$ and $D_n$, see, i.e. Section \ref{supp_estimates_output} in the supplemental material. We perform empirical tests to validate the exclusion restriction in Assumption \ref{excrest} and the relevance condition in Assumption \ref{relevance}. For the exclusion restriction, we perform a Least Squares (LS) regression that includes all variables in \eqref{emp_eq1}, but also includes our proposed instruments. The idea is to measure to what extent the instruments are good predictors of our variable of interest. We interpret the lack of predictability as evidence in favor of our exclusion restriction assumption. Table \ref{tables:table14} in the supplemental material. We find that all our instruments are not predictive of the outcome equation. For the relevance assumption, we perform a series of LS regressions in which the different outcomes are all different endogenous variables. We include all our instruments as regressors and calculate the $F$-statistic for each of the regressions. Consistent with the relevance condition, we reject the hypothesis that our instruments are jointly significant at the 1\% significance level for all our endogenous variables (see Section \ref{validate_test} in the supplemental material).  

%% file: tables/tables_rev/table3.tex
\begin{tabular}{lccc}
\toprule
                  Variables &        OLS &      G2SLS &        GMM \\
\midrule
       \textbf{Peer effect} &            &            &            \\
              ln(Math Test) &  0.2455*** &  0.4534*** &  0.6065*** \\
                            &   (0.0864) &   (0.1027) &   (0.1755) \\
                   \midrule 
\textbf{Contextual effects} &            &            &            \\
                       Male &    -0.0232 &    -0.0153 & -0.2455*** \\
                            &   (0.0311) &   (0.0276) &   (0.0873) \\
                 ln(Height) &     0.0794 &     0.0651 &  2.1682*** \\
                            &   (0.2229) &   (0.1910) &   (0.7400) \\
                 ln(Weight) &     0.1012 &     0.0399 &     0.0329 \\
                            &   (0.0642) &   (0.0600) &   (0.1942) \\
              Siblings Help &     0.0263 &     0.0305 &     0.0294 \\
                            &   (0.0261) &   (0.0217) &   (0.0491) \\
               Parents Help &     0.0345 &     0.0064 &    -0.0106 \\
                            &   (0.0287) &   (0.0189) &   (0.0517) \\
        Commute by Car/Taxi &     0.0277 &     0.0250 &   0.1255** \\
                            &   (0.0266) &   (0.0224) &   (0.0636) \\
                      Music &    -0.0212 &    -0.0090 &   0.1187** \\
                            &   (0.0166) &   (0.0160) &   (0.0480) \\
                   \midrule 
   \textbf{$\dagger$} &            &            &            \\            
              ln(Cognitive) &     0.0549 &   0.0734** &  0.1086*** \\
                            &   (0.0357) &   (0.0369) &   (0.0286) \\
          ln(Agreeableness) & -0.1145*** & -0.1275*** &    -0.0712 \\
                            &   (0.0300) &   (0.0356) &   (0.0457) \\
      ln(Conscientiousness) &     0.0445 &     0.0537 &   0.0799** \\
                            &   (0.0371) &   (0.0396) &   (0.0406) \\
           ln(Extraversion) &  -0.0998** &  -0.0982** & -0.1164*** \\
                            &   (0.0387) &   (0.0384) &   (0.0405) \\
            ln(Neuroticism) &    -0.0386 &    -0.0409 &    -0.0113 \\
                            &   (0.0378) &   (0.0373) &   (0.0257) \\
               ln(Openness) &     0.0461 &     0.0358 &     0.0272 \\
                            &   (0.0485) &   (0.0489) &   (0.0414) \\
            \midrule
            \textbf{Direct effects} &            &            &            \\
                       Male &   -0.7922* &  -0.8352** &    -0.1648 \\
                            &   (0.4596) &   (0.4224) &   (0.5417) \\
                 ln(Height) &  -0.3366** &  -0.2635** & -0.9904*** \\
                            &   (0.1342) &   (0.1210) &   (0.2248) \\
                 ln(Weight) &    -0.0088 &    -0.0106 &    -0.0494 \\
                            &   (0.0293) &   (0.0298) &   (0.0453) \\
              Siblings Help &    -0.0099 &    -0.0172 &   -0.0395* \\
                            &   (0.0176) &   (0.0165) &   (0.0204) \\
               Parents Help &    0.0278* &     0.0225 &     0.0167 \\
                            &   (0.0159) &   (0.0142) &   (0.0139) \\
        Commute by Car/Taxi &    -0.0079 &    -0.0162 &   -0.0251* \\
                            &   (0.0117) &   (0.0111) &   (0.0136) \\
                     Degree &  0.0292*** &  0.0264*** &  0.0248*** \\
                            &   (0.0041) &   (0.0039) &   (0.0034) \\
           Isolate Students &    2.2787* &    0.4720* &     0.1425 \\
                            &   (1.2954) &   (0.2813) &   (0.2935) \\
                   \midrule 
                        $n$ &        868 &        868 &        868 \\
             Adjusted $R^2$ &     0.3372 &     0.3936 &     0.2675 \\
                       RMSE &     0.1854 &     0.1716 &     0.2002 \\
\bottomrule
\end{tabular}

%% file: conclusion.tex
This research adds to the literature on the identification and estimation of social effects with observational network data that often contain endogenous or mismeasured connections. Unlike current approaches, such as those in \cite{Johnsson2019} and \cite{Auerbach2022}, our method does not require the specification and estimation of a model characterizing how connections are created or misclassified. Our method circumvents the imposition of these modeling requirements (along with its potential misspecification issues) by showing how a fully observed set of exogenous connections can be used as an \emph{instrumental} network to uniquely identify and estimate parameters of interest in a widely used linear model of social interactions. Therefore, our approach is semiparametric in nature and hence avoids the usual drawbacks of strong modeling assumptions in this literature.

Another contribution of this research is technical in nature. A byproduct of acknowledging potential network mismeasurement or endogeneity is that it explicitly permits the observed and unobserved characteristics of individuals to be correlated; i.e., creating network dependence across observations in the sample. Our asymptotic results utilize the idea that dependence among observations decreases as a function of their distance in the network; that is, $\psi$-dependence. We show that the resulting estimator can be easily implemented utilizing standard linear GMM estimation routines in popular software like Python, R, or Stata, see, e.g., \cite{netivreg}. The estimator is consistent and asymptotically normal distributed at the standard parametric convergence rate. We characterize the form of the asymptotic variance-covariance matrix that accounts for the network dependence and illustrate how standard errors can be calculated in an empirical application.

Empirically, an important aspect of the proposed methodology is that it recognizes that exogenously imposed connections on individuals do not necessarily cause social effects. However, they can generate new types of freely formed connections that do so; i.e., resorting. The correlation between these two networks is at the heart of our identification and estimation strategy. In this sense, our approach provides an explicit solution to the resorting issue in random network identification strategies, such as in \cite{Moffitt2000}, by distinguishing what type of network creates peer effects (with whom you study, for example) and what other type simply influences these connections, but are otherwise exogenous to the model (for instance, to whom you are randomly assigned to share a physical space). Our empirical and Monte Carlo results show that ignoring the potential network endogeneity can severely bias the network effects estimators. We find significant positive network effects of math test scores among high schoolers from study partners in Hong Kong. These results are in line with previous literature in that they show the existence of strong positive network effects, but suggest that the magnitude of the effects can be larger. We postulate that this could be due to the fact that we focus directly on networks that endogenously emerge after an initial exogenous network assignment.

Finally, the idea of using the initial random assignment of network connections as an instrument differs from using other sources of exogenous variation in two critical aspects. First, under restrictions on exogenous network density, the shape of the network structure together with $K$ regressors can be used to form at least the $K+1$ instruments required to identify endogenous peer effects and contextual effects in the linear-in-means model fitted above. Without the exogenous variation of the randomized network, a researcher would have a difficult task finding $K+1$ different instrumental variables. Second, the standard relevance IV assumption imposes restrictions on the shape of the exogenous network and the process that determines the formation of the network of interest. In particular, we have shown that relevance requires that a number of $p$ powers of the adjacency matrix of the exogenous network to be linearly independent, and connections in the exogenous network need to have an effect on the decision of forming a connection on the endogenous network of interest. The need for linear independence imposes restrictions on the potential randomization of links, which researchers have to follow when designing their experiments. For example, researchers cannot randomly select individuals into groups of the same size if they want to estimate the effects of the network using our method. These are among the important empirical considerations left for future research.

%% file: supp_D.tex
This section provides a set of primitive conditions that imply the relevance condition in Assumption \ref{relevance}. This is formally proven in Proposition \ref{rank_prop} below.

\begin{assumption}
\label{da1}
$\mathbf{X}$ is full column rank for any realization of the matrix of regressors $\mathbf{X}\in\mathscr{X}$ with positive probability in $f_{\mathbf{X}_{N}, \mathcal{G}_{N},\mathcal{G}_{N,0},\boldsymbol{\varepsilon}_{N}}(\mathbf{X},\mathbf{g},\mathbf{g}_{0},\boldsymbol{\varepsilon})$.
\end{assumption}

\begin{assumption}
\label{da2}
For any two regressors $k$ and $\ell$ and a number $p \geq 2$, the expectation $\mathbb{E}[\mathbf{w}_{N,0;i}^{p}\mathbf{x}_{N;k}\mathbf{w}_{N;i}\mathbf{x}_{N;\ell}]$ exists for all $i$.
\end{assumption}

\begin{assumption}
\label{da3}
There exists two \emph{different} numbers $(r, s) \in \mathbb{N}_{+}\times\mathbb{N}_{+}$ such that $\mathbf{I}_{N}$, $\mathbf{W}^{r}$ and $\mathbf{W}^{s}$ are linearly independent for any realization of the network of interest $\mathbf{g}\in \mathscr{G}$ with a positive probability in $f_{\mathbf{X}_{N}, \mathcal{G}_{N},\mathcal{G}_{N,0},\boldsymbol{\varepsilon}_{N}}(\mathbf{X},\mathbf{g},\mathbf{g}_{0},\boldsymbol{\varepsilon})$. Furthermore, $(\gamma_{0,k} \beta_{0}+\delta_{0,k})\neq 0$ for all $k \in \{1,\dots, K\}$.
\end{assumption}

\begin{assumption}
\label{da5}
The equation $x_{i, k}\neq \mathbf{w}_{i}^{m}\mathbf{x}_{k}$ holds for some number $m\in\mathbb{N}_{+}$, for all individuals $i$, any regressor $k$, and any realization of the matrix of regressors $\mathbf{X}\in\mathscr{X}$ and the network of interest $\mathbf{g}\in \mathscr{G}$ with positive probability in $f_{\mathbf{X}_{N}, \mathcal{G}_{N},\mathcal{G}_{N,0},\boldsymbol{\varepsilon}_{N}}(\mathbf{X},\mathbf{g},\mathbf{g}_{0},\boldsymbol{\varepsilon})$.
\end{assumption}

\begin{assumption}
\label{da4}
There exist a number $p\geq 2$ such that $\mathbf{I}_{N}$, $\mathbf{W}_{0}$, $\mathbf{W}_{0}^{2}$, $\dots$, $\mathbf{W}_{0}^{p}$ are linearly independent for any realization of the exogenous network $\mathbf{g}_{0}\in \mathscr{G}_{0}$ with positive probability in $f_{\mathbf{X}_{N}, \mathcal{G}_{N},\mathcal{G}_{N,0},\boldsymbol{\varepsilon}_{N}}(\mathbf{X},\mathbf{g},\mathbf{g}_{0},\boldsymbol{\varepsilon})$.
\end{assumption}

\begin{assumption}
\label{da6}
The joint probability distribution $\Pr(\mathcal{G}_{N}=\mathbf{g},\mathcal{G}_{N,0}=\mathbf{g}_{0})$ is such that $\mathbb{E}[\mathbf{w}_{N,0;i}^{p}\mathbf{x}_{\ell}\mathbf{w}_{i}\mathbf{x}_{N,k}]\neq 0$ for at least $p=2$, and any realizations of $\mathbf{x}_{\ell}$ and $\mathbf{x}_{k}$ with positive probability in $f_{\mathbf{X}_{N}, \mathcal{G}_{N},\mathcal{G}_{N,0},\boldsymbol{\varepsilon}_{N}}(\mathbf{X},\mathbf{g},\mathbf{g}_{0},\boldsymbol{\varepsilon})$ and all $i$.
\end{assumption}

\begin{assumption}
\label{da7}
$\mathbb{E}[\sum_{r=0}^{\infty} \beta_{0}^{r}\mathbf{w}_{N;i}\mathbf{W}_{N}^{r}\varepsilon_{N}\mid \mathbf{W}_{N, 0},\mathbf{X}_{N,k}]=0$ for any $k\in \{1,\dots, K\}$.
\end{assumption}

\begin{proposition}
\label{rank_prop}
Let Assumptions \ref{da1}, \ref{da2}, \ref{da3}, \ref{da4}, \ref{da5}, \ref{da6} and \ref{da7} hold. It follows that the matrix $\mathbb{E}[N^{-1}\sum_{i \in \mathcal{I}_{N}}\mathbf{z}_{N;i}\mathbf{d}_{N;i}^{\top}]$ has full column rank.  
\end{proposition}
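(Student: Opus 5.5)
The plan is to prove full column rank by contradiction. Suppose there is a vector $\mathbf{c}=(c_\beta,\mathbf{c}_\delta^\top,\mathbf{c}_\gamma^\top)^\top\neq\mathbf{0}$ with $\mathbb{E}[N^{-1}\sum_i\mathbf{z}_{N;i}\mathbf{d}_{N;i}^\top]\mathbf{c}=\mathbf{0}$. Equivalently, $\mathbb{E}[N^{-1}\mathbf{Z}_N^\top\mathbf{D}_N\mathbf{c}]=\mathbf{0}$. The first step is to replace the endogenous column $\mathbf{W}_N\mathbf{y}_N$ in $\mathbf{D}_N$ by its reduced form, obtained from Assumption \ref{excrest} and the interior condition on $\beta_0$:
\begin{equation*}
\mathbf{W}_N\mathbf{y}_N=\mathbf{W}_N\mathbf{X}_N\boldsymbol{\gamma}_0+\sum_{r\ge0}\beta_0^r\mathbf{W}_N^{r+2}\mathbf{X}_N(\beta_0\boldsymbol{\gamma}_0+\boldsymbol{\delta}_0)+\gamma_{0,0}\sum_{r\ge 0}\beta_0^r\mathbf{W}_N^{r+1}\boldsymbol{\iota}+\sum_{r\ge0}\beta_0^r\mathbf{W}_N^{r+1}\boldsymbol{\varepsilon}_N .
\end{equation*}
Pre-multiplying the error term by any column of $\mathbf{Z}_N$ gives zero expectation. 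This follows by iterated expectations using Assumption \ref{da7}, since each column of $\mathbf{Z}_N$ is a function of $(\mathbf{W}_{N,0},\mathbf{X}_{N,k})$. The moment matrix therefore reduces to $\mathbb{E}[N^{-1}\mathbf{Z}_N^\top\mathbf{M}_N\mathbf{c}]$. Here $\mathbf{M}_N$ collects the columns $\sum_r\beta_0^r\mathbf{W}_N^{r+2}\mathbf{X}_N(\beta_0\boldsymbol{\gamma}_0+\boldsymbol{\delta}_0)$, $\mathbf{W}_N\mathbf{X}_N$ and $\widetilde{\mathbf{X}}_N$. Assumption \ref{da2} ensures that all the cross moments involved exist.

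The second step is to show that the columns of $\mathbf{M}_N$ are linearly independent with positive probability. Suppose $\mathbf{M}_N\mathbf{c}=\mathbf{0}$. Then
\begin{equation*}
c_\beta\sum_r\beta_0^r\mathbf{W}_N^{r+2}\mathbf{X}_N(\beta_0\boldsymbol{\gamma}_0+\boldsymbol{\delta}_0)+\mathbf{W}_N\mathbf{X}_N(\mathbf{c}_\delta+c_\beta\boldsymbol{\gamma}_0)+\widetilde{\mathbf{X}}_N\mathbf{c}_\gamma=\mathbf{0}.
\end{equation*}
Assumption \ref{da1} (full column rank of $\mathbf{X}$) and Assumption \ref{da5} (own regressors are not reproduced by any network power $\mathbf{w}_i^m\mathbf{x}_k$) separate the $\widetilde{\mathbf{X}}_N$ block from the network blocks, which forces $\mathbf{c}_\gamma=\mathbf{0}$. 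Next, Assumption \ref{da3} gives linear independence of $\mathbf{I}_N,\mathbf{W}^r,\mathbf{W}^s$ for two distinct powers, and it gives $\beta_0\gamma_{0,k}+\delta_{0,k}\neq0$ for every $k$. Together these imply that the series term cannot be a combination of $\mathbf{W}_N\mathbf{X}_N$ and $\mathbf{X}_N$ unless $c_\beta=0$. It then follows that $\mathbf{c}_\delta=\mathbf{0}$. The argument must use the powers $r\neq s$ rather than $1,2$, so I would match coefficients on $\mathbf{W}^r$ and $\mathbf{W}^s$ within the power series.

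The third step transfers this independence to the expected moment matrix through the instruments. Assumption \ref{da4} makes the blocks $\mathbf{W}_{N,0}^q\mathbf{X}_N$, $q=1,\dots,p$, linearly independent, so the instrument matrix has full rank. Assumption \ref{da6} gives the nonzero cross moments $\mathbb{E}[\mathbf{w}_{N,0;i}^p\mathbf{x}_\ell\,\mathbf{w}_i\mathbf{x}_k]\neq0$. These make the block $\mathbb{E}[N^{-1}(\mathbf{W}_{N,0}^p\mathbf{X}_N)^\top\mathbf{W}_N\mathbf{X}_N]$ nonsingular. The plan is to read off the rows of $\mathbb{E}[N^{-1}\mathbf{Z}_N^\top\mathbf{M}_N]\mathbf{c}=\mathbf{0}$ block by block. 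The $\widetilde{\mathbf{X}}_N$ rows kill $\mathbf{c}_\gamma$. The $\mathbf{W}_{N,0}\mathbf{X}_N$ and $\mathbf{W}_{N,0}^p\mathbf{X}_N$ rows then give two distinct equations in $(c_\beta,\mathbf{c}_\delta)$, which are independent by the previous step and Assumption \ref{da6}.

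I expect this last step to be the main obstacle. Linear independence holding realization by realization does not automatically survive taking expectations, and the primitive assumptions only state nonvanishing of scalar cross moments. My first attempt would be a block-triangular argument. Order the instrument blocks so that the $\mathbf{W}_{N,0}^p$ block pairs with the $\mathbf{W}_N\mathbf{y}_N$ column, and the $\mathbf{W}_{N,0}$ block pairs with the $\mathbf{W}_N\mathbf{X}_N$ column. Then use Assumption \ref{da6} together with $\beta_0\gamma_{0,k}+\delta_{0,k}\neq0$ to show that the diagonal blocks are invertible. If that fails, the fallback is to bound off-diagonal contributions, or to argue via positive probability of the realizations in Assumptions \ref{da1}--\ref{da4}.
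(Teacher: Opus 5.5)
There is a genuine gap in your third step, where you suspected it. Your first step matches the paper: the reduced form of $\mathbf{w}_{N;i}\mathbf{y}_N$, with Assumption \ref{da7} used to drop the error series.

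The specific inferences in step three fail, for three reasons.
\begin{itemize}
\item Nonvanishing entries, which is all Assumption \ref{da6} supplies, do not make $\mathbb{E}[N^{-1}(\mathbf{W}_{N,0}^p\mathbf{X}_N)^\top\mathbf{W}_N\mathbf{X}_N]$ nonsingular. The $2\times 2$ matrix of ones is a counterexample.
\item Full rank of $\mathbf{Z}_N$ together with full rank of $\mathbf{M}_N$ does not give full column rank of $\mathbf{Z}_N^\top\mathbf{M}_N$, even for a single realization. The two column spaces can be (nearly) orthogonal, and that overlap is the entire content of relevance.
\item The block-triangular ordering is not available. With $\mathbf{W}_{N,0}^p\mathbf{X}_N$ paired to $\mathbf{W}_N\mathbf{y}_N$ and $\mathbf{W}_{N,0}\mathbf{X}_N$ paired to $\mathbf{W}_N\mathbf{X}_N$, one orientation needs $\mathbb{E}[N^{-1}(\mathbf{W}_{N,0}^p\mathbf{X}_N)^\top\mathbf{W}_N\mathbf{X}_N]=\mathbf{0}$, which Assumption \ref{da6} explicitly rules out. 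The other orientation needs, e.g., $\mathbb{E}[N^{-1}\widetilde{\mathbf{X}}_N^\top\mathbf{W}_N\mathbf{X}_N]=\mathbf{0}$, which nothing delivers. In particular, the $\widetilde{\mathbf{X}}_N$ rows involve all of $\mathbf{c}$ and do not kill $\mathbf{c}_\gamma$.
\end{itemize}

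The sound way to remove the $\widetilde{\mathbf{X}}_N$ block is a Schur complement. The matrix $\mathbb{E}[N^{-1}\widetilde{\mathbf{X}}_N^\top\widetilde{\mathbf{X}}_N]$ is positive semidefinite, so full column rank of $\widetilde{\mathbf{X}}_N$ on a positive-probability event (essentially Assumption \ref{da1}) makes it invertible. The claim then reduces to full column rank of the partialled-out $pK\times(K+1)$ cross-moment matrix between the network instruments and $(\mathbf{W}_N\mathbf{y}_N,\mathbf{W}_N\mathbf{X}_N)$. That matrix is not a Gram matrix, so realization-wise independence cannot be pushed through the expectation and the average over $i$ the way it can for $\mathbb{E}[N^{-1}\widetilde{\mathbf{X}}_N^\top\widetilde{\mathbf{X}}_N]$. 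This is the missing idea, and none of the listed assumptions addresses it directly.

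Your step two has a milder version of the same looseness. Independence of the matrices $\mathbf{I}_N,\mathbf{W}_N^r,\mathbf{W}_N^s$ (Assumption \ref{da3}) together with Assumption \ref{da5} does not by itself give independence of the vectors $\mathbf{x}_k,\mathbf{W}_N^r\mathbf{x}_k,\mathbf{W}_N^s\mathbf{x}_k$. For example, take $\mathbf{x}_k$ an eigenvector of $\mathbf{W}_N$ with nonzero entries and eigenvalue different from one.

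The paper organizes the argument differently. It never forms a null vector or a block structure. Instead:
\begin{itemize}
\item It conditions on positive-probability realizations of $(\mathbf{w}_{N,0;i},\mathbf{x}_{N,k})$ and uses the product-of-marginals factorization \eqref{expect} that randomization provides.
\item It argues through \eqref{linind} that, for each regressor $\ell$, the three entries coming from the $\mathbf{w}_{N;i}\mathbf{y}_N$, $\mathbf{w}_{N;i}\mathbf{x}_{N,\ell}$ and $x_{N,\ell;i}$ columns admit no nontrivial linear relation. This uses Assumptions \ref{da3}, \ref{da5} and $\pi_{0,\ell}\neq 0$, as in your step two.
\item It extends this across regressors with Assumption \ref{da1} and across instrument rows with Assumption \ref{da4}.
\item It uses Assumption \ref{da6} only to ensure at least $2K+1$ nonzero rows.
\end{itemize}

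Be aware, however, that the paper crosses the bridge you are stuck on by assertion. It states that if the columns of each $\mathbb{E}[\mathbf{z}_{N;i}\mathbf{d}_{N;i}^\top]$ are linearly independent, then so are those of their average. That is false in general, since $\mathrm{diag}(1,1)+\mathrm{diag}(-1,1)$ is singular. Likewise, pointwise independence of conditional expectations need not survive integration.

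So borrowing the paper's route will not close your gap. A rigorous argument needs a hypothesis stated at the level of the averaged, partialled-out cross-moment matrix itself, not the scalar nonvanishing in Assumption \ref{da6}.
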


\begin{proof}
Let $\mathbf{w}_{N;i}$ and $\mathbf{w}_{N,0;i}$ be the $i$th row of the adjacency matrices representing the population network of interest and the exogenous network. Similarly, let $\mathbf{x}_{N,k}$ be the $N\times 1$ vector of the regressor $k$ for all $N$ individuals in the population and let $x_{N, k; i}$ be the value of the regressor $k$ for the individual $i$. Therefore, it follows that for the individual $i$, $\mathbf{z}_{N;i}\mathbf{d}_{N;i}^{\top}$ equals

\begin{equation}
\label{matrix}
\begin{bmatrix}
			\mathbf{w}_{N,0;i}^{p}\mathbf{x}_{N,1}\mathbf{w}_{N;i}\mathbf{y}_{N} & \mathbf{w}_{N,0;i}^{p}\mathbf{x}_{N,1}\mathbf{w}_{N;i}\mathbf{x}_{N,1} & \dots & \mathbf{w}_{N,0;i}^{p}\mathbf{x}_{N, 1}\mathbf{w}_{N;i}\mathbf{x}_{N,K} & \dots & \mathbf{w}_{N,0;i}^{p}\mathbf{x}_{N,1}x_{N,K;i}\\
			\mathbf{w}_{N,0;i}^{p}\mathbf{x}_{N,2}\mathbf{w}_{N;i}\mathbf{y}_{N} & \mathbf{w}_{N,0;i}^{p}\mathbf{x}_{N,2}\mathbf{w}_{N;i}\mathbf{x}_{N,1} & \dots & \mathbf{w}_{N,0;i}^{p}\mathbf{x}_{N,2}\mathbf{w}_{N;i}\mathbf{x}_{N,K} & \dots & \mathbf{w}_{N,0;i}^{p}\mathbf{x}_{N,2}x_{N,K;i}\\
			\vdots &  \vdots &  & \vdots &  & \vdots \\
			\mathbf{w}_{N,0;i}^{p-1}\mathbf{x}_{N,K}\mathbf{w}_{N;i}\mathbf{y}_{N} & \mathbf{w}_{N,0;i}^{p-1}\mathbf{x}_{N,K}\mathbf{w}_{N;i}\mathbf{x}_{N,1} & \dots & \mathbf{w}_{N,0;i}^{p-1}\mathbf{x}_{N,K}\mathbf{w}_{N;i}\mathbf{x}_{N,K} & \dots & \mathbf{w}_{N,0;i}^{p-1}\mathbf{x}_{N,K}x_{N,K;i}\\
			\vdots &  \vdots &  & \vdots &  & \vdots \\
			\mathbf{w}_{N,0;i}\mathbf{x}_{N,K}\mathbf{w}_{N;i}\mathbf{y}_{N} & \mathbf{w}_{N,0;i}\mathbf{x}_{N,K}\mathbf{w}_{N;i}\mathbf{x}_{N,1} & \dots & \mathbf{w}_{N,0;i}\mathbf{x}_{N,K}\mathbf{w}_{N;i}\mathbf{x}_{N,K} & \dots & \mathbf{w}_{N,0;i}\mathbf{x}_{N,K}x_{N,K;i}\\
			\vdots &  \vdots &  & \vdots &  & \vdots \\
			x_{N,K;i}\mathbf{w}_{N;i}\mathbf{y}_{N} & x_{N,K; i}\mathbf{w}_{N;i}\mathbf{x}_{N,1} & \dots & x_{N,K; i}\mathbf{w}_{N;i}\mathbf{x}_{N,K} & \dots & x_{N,K;i}^{2}
\end{bmatrix}.
\end{equation}

\noindent If, in expectation, the columns of the matrix $\mathbf{z}_{N;i}\mathbf{d}_{N;i}^{\top}$ are linearly independent for all $i$, it follows that $\mathbb{E}[N^{-1}\sum_{i \in \mathcal{I}_{N}}\mathbf{z}_{N;i}\mathbf{d}_{N;i}^{\top}]$ has full column rank. We check the linear independence across columns for a generic row from the matrix in \eqref{matrix}. First, note that for all the components that involve the outcome variable $\mathbf{y}_{N}$, the linearity assumption in \ref{excrest} implies that, for $K=1$,

\begin{equation*}
\mathbf{W}_{N} \mathbf{y}_{N}= \gamma_{0} \mathbf{W}_{N} \mathbf{x}_{N}+\pi_{0} \sum_{r=0}^{\infty} \beta_{0}^{r} \mathbf{W}_{N}^{r+2} \mathbf{x}_{N} +\sum_{r=0}^{\infty} \beta_{0}^{r}\mathbf{W}_{N}^{r+1}\varepsilon_{N},
\end{equation*}

\noindent where $\pi_{0}=(\gamma_{0} \beta_{0}+\delta_{0})$. Thus, all elements in the first column involve infinite powers of the endogenous adjacency matrix $\mathbf{W}_{N}$. In particular, it follows that for an arbitrary individual $i$ and any number of regressors $K$, 

\begin{equation}
\label{outcome}
\begin{split}
\mathbf{w}_{N;i} \mathbf{y}_{N}  =& \gamma_{0,1} \mathbf{w}_{N;i}\mathbf{x}_{N,1}+\dots+\gamma_{0,K} \mathbf{w}_{N;i}\mathbf{x}_{N,K}+\pi_{0, 1} \sum_{r=0}^{\infty} \beta_{0}^{r} \mathbf{w}_{N;i}\mathbf{W}_{N}^{r+1} \mathbf{x}_{N, 1} + \dots \\
& + \pi_{0, K} \sum_{r=0}^{\infty} \beta_{0}^{r} \mathbf{w}_{N;i}\mathbf{W}_{N}^{r+1} \mathbf{x}_{N, K} + e_{i},    
\end{split}
\end{equation}

\noindent where $e_{i}=\sum_{r=0}^{\infty} \beta_{0}^{r}\mathbf{w}_{N;i}\mathbf{W}_{N}^{r}\varepsilon_{N}$ and $\pi_{0, k}=(\gamma_{0, k} \beta_{0}+\delta_{0, k})$ for some $k\in \{1,\dots, K\}$. Choose an arbitrary row $k$ from the first $K$ rows in equation \eqref{matrix}. Taking expectations with respect to the joint distribution $f_{\mathbf{X}_{N}, \mathcal{G}_{N},\mathcal{G}_{N,0},\boldsymbol{\varepsilon}_{N}}(\mathbf{X},\mathbf{g},\mathbf{g}_{0},\boldsymbol{\varepsilon})$, the expected value for row $k$ is given by the vector 

$$[\mathbb{E}[\mathbf{w}_{N,0;i}^{p}\mathbf{x}_{N,k}\mathbf{w}_{N;i}\mathbf{y}_{N}], \dots, \mathbb{E}[\mathbf{w}_{N,0;i}^{p}\mathbf{x}_{N,k}\mathbf{w}_{N;i}\mathbf{x}_{N,K}], \dots, \mathbb{E}[\mathbf{w}_{N,0;i}^{p}\mathbf{x}_{N,k}x_{N,K;i}]].$$

\noindent By definition of conditional expectations and considering the discrete nature of random networks, for any two aggressors $k$ and $\ell$, we can write 

\begin{align}
\nonumber
\mathbb{E}[\mathbf{w}_{N,0;i}^{p}\mathbf{x}_{N,k}\mathbf{w}_{N;i}\mathbf{x}_{N,\ell}] = \sum_{\mathbf{w}_{0}:\mathbf{g}_{0}\in\mathscr{G}_{0}}\int_{\mathbf{x}_{k}:\mathbf{X}\in\mathscr{X}}&\mathbf{w}_{0}^{p}\mathbf{x}_{k}\mathbb{E}[\mathbf{w}_{N;i}\mathbf{x}_{N,\ell}\mid \mathbf{w}_{N,0;i}=\mathbf{w}_{0}, \mathbf{x}_{N,k} = \mathbf{x}_{k}]\\
\label{expect}
&f_{\mathbf{x}_{N,k}}(\mathbf{x}_{k})\Pr(\mathcal{G}_{N,0}=\mathbf{g}_{0})d\mathbf{x}_{k},
\end{align}

\noindent where $f_{\mathbf{x}_{N,k}}(\mathbf{x}_{k})\Pr(\mathcal{G}_{N,0}=\mathbf{g}_{0})$ represents the product of the marginal distributions of the $k$th regressors and the exogenous network. We can represent the distribution of the regressors and the exogenous network by the products of the marginals because of the independence guaranteed by the properties of randomization. From Assumption \ref{da2}, $\mathbb{E}[\mathbf{w}_{N,0;i}^{p}\mathbf{x}_{N,k}\mathbf{w}_{N;i}\mathbf{x}_{N,\ell}]$ exists, which implies that the conditional expectations defined in \eqref{expect} also exist. Choose arbitrary values of $\mathbf{x}_{k}$ and $\mathbf{g}_{0}$ such that $\mathbb{E}[\mathbf{w}_{N;i}\mathbf{x}_{N,\ell}\mid \mathbf{w}_{0;i}=\mathbf{w}_{0}, \mathbf{x}_{k} = \mathbf{x}_{k}]\neq0$, and that occur with positive probability in $f_{\mathbf{x}_{N,k}}(\mathbf{x}_{k})$ and $\Pr(\mathcal{G}_{N,0}=\mathbf{g}_{0})$. We can collect all the values related to the same arbitrary regressor $\mathbf{x}_{\ell}$ from the expectation vector $[\mathbb{E}[\mathbf{w}_{N,0;i}^{p}\mathbf{x}_{N,k}\mathbf{w}_{N;i}\mathbf{y}_{N}], \dots, \mathbb{E}[\mathbf{w}_{N,0;i}^{p}\mathbf{x}_{N,k}\mathbf{w}_{N;i}\mathbf{x}_{N,K}], \dots, \mathbb{E}[\mathbf{w}_{N,0;i}^{p}\mathbf{x}_{N,k}x_{N,K;i}]]$, which, after replacing $\mathbf{w}_{N;i}\mathbf{y}_{N}$ with the expression in \eqref{outcome} and considering the assumption \ref{da7}, is given by 

\begin{align}
\label{vector}
[&\gamma_{0, \ell}\mathbf{w}_{0}^{p}\mathbf{x}_{k}\mathbb{E}[\mathbf{w}_{N;i}\mathbf{x}_{N,\ell}\mid \mathbf{w}_{0},\mathbf{x}_{k}]+\pi_{0, \ell}\mathbf{w}_{0}^{p}\mathbf{x}_{k}\sum_{r=0}^{\infty}\beta^{r}\mathbb{E}[\mathbf{w}_{N;i}\mathbf{W}_{N}^{r+1}\mathbf{x}_{N,\ell}\mid \mathbf{w}_{0},\mathbf{x}_{k}],\\
\nonumber
&\mathbf{w}_{0}^{p}\mathbf{x}_{k}\mathbb{E}[\mathbf{w}_{N;i}\mathbf{x}_{N,\ell}\mid \mathbf{w}_{0},\mathbf{x}_{k}], \mathbf{w}_{0}^{p}\mathbf{x}_{k}\mathbb{E}[x_{N,\ell;i}\mid \mathbf{w}_{0},\mathbf{x}_{k}]].  
\end{align}

\noindent The three components of the vector in \eqref{vector} are linearly dependent if and only if there exist three constants $a$, $b$ and $c$ different from zero such that 

\begin{align}
\label{linind}
&\mathbb{E}[(a\gamma_{0, \ell}+b)\mathbf{w}_{N;i}\mathbf{x}_{N,\ell}+ c x_{N,\ell;i}+a\pi_{0, \ell}\beta\mathbf{w}_{N;i}\mathbf{W}_{N}\mathbf{x}_{N,\ell}+\\
\nonumber
&a\pi_{0, \ell}\beta^{2}\mathbf{w}_{N;i}\mathbf{W}_{N}^{2}\mathbf{x}_{N,\ell}+\dots \mid \mathbf{w}_{0},\mathbf{x}_{k}]=0,
\end{align}

\noindent where the dots represent the infinite sum on $r$. Under the assumption that $\pi_{0, \ell}\neq0$, the only way in which equation \eqref{linind} holds for constant $a$, $b$, and $c$ different from zero is if the matrices $\mathbf{I}_{N}, \mathbf{W}_{N}, \mathbf{W}_{N}^{2},\dots$ are linearly dependent and there exists $\mathbf{W}_{N}^{r}$ such that $x_{i, \ell}= \mathbf{w}_{N;i}^{r}\mathbf{x}_{\ell}$. If $\pi_{0, \ell}=0$, clearly the first and second components of the vector are linearly dependent. Therefore, Assumptions \ref{da3} and \ref{da5} imply that the components of the vector in \eqref{vector} are linearly independent. We arbitrarily chose the regressors $k$ and $\ell$. Then, under Assumption \ref{da1} all the regressors are linearly independent, which implies that all the components of the vector 
$$[\mathbb{E}[\mathbf{w}_{N,0;i}^{p}\mathbf{x}_{N,k}\mathbf{w}_{N;i}\mathbf{y}_{N}], \dots, \mathbb{E}[\mathbf{w}_{N,0;i}^{p}\mathbf{x}_{N,k}\mathbf{w}_{N;i}\mathbf{x}_{N,K}], \dots, \mathbb{E}[\mathbf{w}_{N,0;i}^{p}\mathbf{x}_{N,k}x_{N,K;i}]]$$ 

\noindent are linearly independent. The same result follows for the vectors of the type 

$$[\mathbb{E}[x_{N,k;i}\mathbf{w}_{N;i}\mathbf{y}_{N}], \dots, \mathbb{E}[x_{N,k;i}\mathbf{w}_{N;i}\mathbf{x}_{N,k}], \dots, \mathbb{E}[x_{N,k;i}^{2}]]$$ 

\noindent by conditioning on the arbitrary regressor $x_{N,k; i}$ for nonzero rows. To show that the rows are linearly independent, we can use an analogous approach considering arbitrary values of $\mathbf{w}_{i}$ and $\mathbf{x}_{K}$. It is straightforward to see that under the column rank assumption on all matrices of regressors, Assumption \ref{da4} implies the result. Finally, given that we showed linear independence conditions for nonzero rows, we need to show that there are at least $2K+1$ rows different from zero. The rank condition on the matrix of regressors implies that we only need to focus on combinations of connections in $\mathbf{W}_{N,0}$ and $\mathbf{W}_{N}$ that can make $\mathbb{E}[\mathbf{w}_{0,N,i}^{p}\mathbf{x}_{N,\ell}\mathbf{w}_{N,i}\mathbf{x}_{N,k}]=0$ for any value of $\mathbf{x}_{\ell}$ and $\mathbf{x}_{k}$ and some value of $p$. First, note that 

\begin{equation}
\mathbb{E}[\mathbf{w}_{N,0;i}^{p}\mathbf{x}_{N,\ell}\mathbf{w}_{N;i}\mathbf{x}_{N,k}] = \int_{\mathbf{x}_{k}:\mathbf{X}\in\mathscr{X}}\int_{\mathbf{x}_{\ell}:\mathbf{X}\in\mathscr{X}}\mathbb{E}[\mathbf{w}_{N,0;i}^{p}\mathbf{x}_{\ell}\mathbf{w}_{i}\mathbf{x}_{N}]f_{\mathbf{x}_{N,k}\mathbf{x}_{N,\ell}}(\mathbf{x}_{k},\mathbf{x}_{\ell})d\mathbf{x}_{k}d\mathbf{x}_{\ell},
\end{equation}

\noindent where $f_{\mathbf{x}_{N,k}\mathbf{x}_{N,\ell}}$ is the joint probability of $\mathbf{x}_{N,k}$ and $\mathbf{x}_{N,\ell}$. Take some arbitrary values $\mathbf{x}_{k}\neq 0$ and $\mathbf{x}_{\ell}\neq0$ with positive probability in $f_{\mathbf{x}_{N,k}\mathbf{x}_{N,\ell}}$. It follows that   

\begin{equation}
\label{expectation}
\mathbb{E}[\mathbf{w}_{N,0;i}^{p}\mathbf{x}_{\ell}\mathbf{w}_{i}\mathbf{x}_{N,k}]=\sum_{\mathbf{w}_{0;i}:\mathbf{g}_{0}\in\mathscr{G}_{0}}\sum_{\mathbf{w}_{i}:\mathbf{g}\in\mathscr{G}}\mathbf{w}_{0;i}^{p}\mathbf{x}_{\ell}\mathbf{w}_{i}\mathbf{x}_{k}\Pr(\mathcal{G}_{N}=\mathbf{g},\mathcal{G}_{N,0}=\mathbf{g}_{0}).    
\end{equation}

\noindent The only way in which the equation \eqref{expectation} can equal zero for different values of $p$, even when $\mathbf{x}_{k}\neq 0$ and $\mathbf{x}_{\ell}\neq0$, is if the linear combination of $\mathbf{w}_{0;i}^{p}\mathbf{x}_{\ell}\mathbf{w}_{i}\mathbf{x}_{k}$ for different values of $\mathbf{w}_{0;i}^{p}$ and $\mathbf{w}_{i}$ weighted by their respective probabilities equals zero. Therefore, Assumption \ref{da6} guarantees the existence of at least $3K$ rows different from zero. \hfill\end{proof}

%% file: supp_A.tex
\begin{proof}[Proof of Theorem \ref{idtheorem}]
First, note that Assumption \ref{excrest} guarantees that the solution for model \eqref{lmm} exists. Assumption \ref{relevance} guarantees that the system of equations $\mathbb{E}[\boldsymbol{m}_{N}(\boldsymbol{\theta})]=\boldsymbol{0}_{K}$ are not trivially satisfied by making all individuals $i\in \mathcal{I}_{N}$ isolated. We show that the moment condition equation has a unique root at $\boldsymbol{\theta}_{0}= (\alpha_{0},\beta_{0}, \boldsymbol{\delta}_{0}^\top, \boldsymbol{\gamma}_{0}^\top)^\top$. In particular, we show that there cannot be any other $\boldsymbol{\theta}\in \boldsymbol{\Theta}$ different from $\boldsymbol{\theta}_{0}$ for which the moment condition is satisfied. Choose an arbitrary vector of parameters $\boldsymbol{\theta} \in \boldsymbol{\Theta}$, such that $\mathbb{E}[\mathbf{m}(\boldsymbol{\theta})] = 0$. Assumptions \ref{id} and \ref{excrest} imply that $\mathbb{E}[N^{-1}\sum_{i \in \mathcal{I}_{N}}\mathbf{z}_{N;i}(y_{N;i} - \mathbf{d}_{N;i}^{\top}\boldsymbol{\theta})] = \boldsymbol{0}_{K} $. It follows that $\mathbb{E}[\sum_{i \in \mathcal{I}_{N}}\mathbf{z}_{N;i}\mathbf{d}_{N;i}^{\top}](\boldsymbol{\theta}_{0} - \boldsymbol{\theta})+ \mathbb{E}[\sum_{i \in \mathcal{I}_{N}}\mathbf{z}_{N;i}\varepsilon_{N;i}] = \boldsymbol{0}_{K}$ and $\mathbb{E}[\sum_{i \in \mathcal{I}_{N}}\mathbf{z}_{N;i}\mathbf{d}_{N;i}^{\top}]\left(\boldsymbol{\theta}_{0} - \boldsymbol{\theta}\right) = \boldsymbol{0}_{K}$, given that $N$ is arbitrarily large, but finite. Under Assumption \ref{relevance}, it follows that $\mathbb{E}[\boldsymbol{m}(\boldsymbol{\theta})] = \boldsymbol{0}_{K}$ if and only if $\boldsymbol{\theta}_{0} = \boldsymbol{\theta}$. \hfill\end{proof}

\medskip

\begin{proof}[Proof of Theorem \ref{t2}]
The GMM estimator in \eqref{theta_GMM} in the main text can be written as 

\begin{small}
\begin{equation}
\label{t2est}
\widehat{\boldsymbol{\theta}}_{\text{GMM}} = \boldsymbol{\theta} + (n^{-1}\mathbf{D}_{n}^{\top}\mathbf{Z}_{n}\mathbf{A}_{n} n^{-1}\mathbf{Z}_{n}^{\top}\mathbf{D}_{n})^{-1}n^{-1}\mathbf{D}_{n}^{\top}\mathbf{Z}_{n}\mathbf{A}_{n}n^{-1}\mathbf{Z}_{n}^{\top}\boldsymbol{\varepsilon}_{n}\text{.}
\end{equation}
\end{small}

\noindent By construction, it is assumed that the matrix $\mathbf{A}_{n}$ converges to the full-rank matrix $\mathbf{A}_{N}$ as $n\to \infty$. From Corollary \ref{int_reg}, $n^{-1}\mathbf{Z}_{n}^{\top}\mathbf{D}_{n}$ converges to the population quantity $\mathbb{E}[N^{-1}\sum_{i \in \mathcal{I}_{N}}\mathbf{z}_{N;i}\mathbf{d}_{N;i}^{\top}]$, which is finite given Assumption \ref{relevance}. Finally, Corollary \ref{int_err} shows that $n^{-1}\mathbf{Z}_{n}^{\top}\boldsymbol{\varepsilon}_{n}(\boldsymbol{\theta})$ converges to $\mathbb{E}[N^{-1}\sum_{i \in \mathcal{I}_{N}}\mathbf{z}_{N;i}\varepsilon_{N;i}(\boldsymbol{\theta})]=0$. It then follows that $\widehat{\boldsymbol{\theta}}_{\text{GMM}} = \boldsymbol{\theta}+o_{p}(1)$ as $n\to\infty$. For asymptotic normality, note that from \eqref{t2est}

\begin{equation*}
\sqrt{n}(\widehat{\boldsymbol{\theta}}_{\text{GMM}}- \boldsymbol{\theta}) = (n^{-1}\mathbf{D}_{n}^{\top}\mathbf{Z}_{n}\mathbf{A}_{n}n^{-1} \mathbf{Z}_{n}^{\top}\mathbf{D}_{n})^{-1}n^{-1}\mathbf{D}_{n}^{\top}\mathbf{Z}_{n}\mathbf{A}_{n}\times n^{-1/2}\mathbf{Z}_{n}^{\top}\boldsymbol{\varepsilon}_{n}.
\end{equation*}

\noindent Let $\mathbf{Q}_{zx}=\mathbb{E}[N^{-1}\sum_{i \in \mathcal{I}_{N}}\mathbf{z}_{N;i}\mathbf{d}_{N;i}^{\top}]$. Then from Corollary \ref{int_reg} and Lemma \ref{clt}, it follows that

$$
\sqrt{n} \left(\widehat{\boldsymbol{\theta}}_{\text{GMM}}-\boldsymbol{\theta}\right) \stackrel{d}{\rightarrow}\left[\mathbf{Q}_{zx}^{\top}\mathbf{A}_{N} \mathbf{Q}_{zx}\right]^{-1} \mathbf{Q}_{zx}^{\top}\mathbf{A}_{N}\times \mathcal{N}(\mathbf{0}, \boldsymbol{\Omega}_{N}).
$$

\noindent The result then follows. The efficient variance-covariance matrix in \eqref{eff_Sigma} is derived from standard matrix algebra calculations.\hfill\end{proof}

%% file: supp_B.tex
\noindent All the results in this section, and consequently in section \ref{Appendix_A}, are derived conditional on the sequence of networks $\{\mathcal{G}_{n}\}$. For simplicity in notation, we have omitted explicit conditioning on expectations, but it is important to note that all expectations are taken with respect to the conditional distribution $f_{\mathbf{X}_{N}, \boldsymbol{\varepsilon}_{N}\mid \mathcal{G}_{N}}$.

\begin{lemma}
\label{bounds}
Let Assumption \ref{depas} hold for $\{\mathbf{r}_{n;i}\}_{n\geq 1}$, $i\in\mathcal{I}_{n}$ and define $R_{n;i,j}=f_{q,\ell}(\mathbf{r}_{n,\{i,j\}})\equiv r_{n;i,q}r_{n;j,\ell}$ and $R_{n;h,s}=g_{q^{\prime},\ell^{\prime}}(\mathbf{r}_{n,\{h,s\}})\equiv r_{n;h,q^{\prime}}r_{n;s,\ell^{\prime}}$ for $i,j,h,s \in \mathcal{I}_{n}$, where $q$, $q^{\prime}$, $\ell$, and $\ell^{\prime}$ are components of the vector $\mathbf{r}_{n;i}$. Let Assumption \ref{moments} hold for $R_{i,j}$ and $R_{h,s}$; then

\begin{equation}
\label{covbd}
\left|\operatorname{cov}\left(R_{n;i,j}, R_{n;h,s}\right)\right| \leq 2 \bar{\lambda}_{n, d}(C+16) \times 4\left(\pi_{1}+\tilde{\gamma}_{1}\right)\left(\pi_{2}+\tilde{\gamma}_{2}\right) \underline{\lambda}_{n, d}^{1-p_{f}-p_{g}},
\end{equation}

\noindent where $\underline{\lambda}_{n, d}=\lambda_{n, d} \wedge 1$, $\bar{\lambda}_{n, d}=\lambda_{n, d} \vee 1$, $\pi_{1}=\|\mathbf{r}_{n;i}\|_{p_{f,i}}\|\mathbf{r}_{n;j}\|_{p_{f,j}}$, $\pi_{2}=\|\mathbf{r}_{n;h}\|_{p_{f,h}}\|\mathbf{r}_{n;s}\|_{p_{f,s}}$, $\widetilde{\gamma}_{1} = \max\{\|\mathbf{r}_{n;i}\|_{p_{f,i}+p_{f,j}}, \|\mathbf{r}_{n;j}\|_{p_{f,i}+p_{f,j}}\}$; $\widetilde{\gamma}_{2} = \max\{\|\mathbf{r}_{n;h}\|_{p_{f}}, \|\mathbf{r}_{n;s}\|_{p_{g}}\}$, where $p_{f}=1/p_{f,i}+1/p_{f,j}$ and $p_{g}=1/p_{g,h}+1/p_{g,s}$, where the constant $C$ is the same as in Assumption \ref{depas}. The indexes $i,j,h,s$ and the components $q$, $q^{\prime}$, $\ell$, $\ell^{\prime}$ may or may not be the same. 
\end{lemma}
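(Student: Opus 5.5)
The plan follows the truncation argument behind Corollary A.2 in \cite{Kojevnikov2020}. That argument extends $\psi$-dependence covariance bounds from bounded Lipschitz functions to unbounded products of moment-bounded variables. Products like $r_{n;i,q}r_{n;j,\ell}$ are neither bounded nor globally Lipschitz, so Assumption \ref{depas} cannot be applied to them directly. Instead I will approximate them by bounded Lipschitz truncations, apply the $\psi$-dependence inequality to the truncations, and control the truncation error with H\"older's inequality using the moment conditions in Assumption \ref{moments}. Throughout, $d=d_n(\{i,j\},\{h,s\})$, so $(\{i,j\},\{h,s\})\in\mathcal{P}^{+}_{n}(2,2,d)$ with $a=b=2$.

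\emph{Step 1 (truncation).} For a level $L>0$ let $\varphi_L(x)=(x\wedge L)\vee(-L)$. Define
\[
f_L(\mathbf{r}_{n,\{i,j\}})=\varphi_L(r_{n;i,q})\,\varphi_L(r_{n;j,\ell}),
\]
and define $g_L$ analogously for $\{h,s\}$. Then $\|f_L\|_\infty\le L^2$ and $\mathrm{Lip}(f_L)\le 2L$ with respect to $\boldsymbol d_2$, and the same holds for $g_L$. Split
\[
\operatorname{cov}(R_{i,j},R_{h,s})=\operatorname{cov}(f_L,g_L)+\operatorname{cov}(R_{i,j}-f_L,\,g_L)+\operatorname{cov}(R_{i,j},\,R_{h,s}-g_L).
\]

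\emph{Step 2 (main term).} By Assumption \ref{depas}\ref{ea1}--\ref{ea2},
\[
|\operatorname{cov}(f_L,g_L)|\le 4C\,(L^2+2L)^2\,\lambda_{n,d}.
\]
I will keep this in the form $C'\,\bar\lambda_{n,d}\,L^{4}$, using $\bar\lambda_{n,d}$ to absorb the case $L<1$ and the case $\lambda_{n,d}>1$.

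\emph{Step 3 (tail terms).} Note that $|R_{i,j}-f_L|\le |r_{n;i,q}r_{n;j,\ell}|\,(\mathbf 1\{|r_{n;i,q}|>L\}+\mathbf 1\{|r_{n;j,\ell}|>L\})$. Applying H\"older with exponents $p_{f,i},p_{f,j}$ and the remaining conjugate exponent, and then Markov on the indicators, gives bounds of the form
\[
(\pi_1+\tilde\gamma_1)\,(\pi_2+\tilde\gamma_2)\,L^{-\kappa},
\]
where $\kappa>0$ is determined by the slack $1-1/p_{f,i}-1/p_{f,j}-1/p_{g,h}-1/p_{g,s}>0$ in Assumption \ref{moments}. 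The factor $\tilde\gamma$ enters through the Markov step at the combined exponent $p_{f,i}+p_{f,j}$. Tallying the constants from the two indicator events and the two tail terms produces the factor $16$ and the factor $4$.

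\emph{Step 4 (optimize $L$).} Choose $L=\underline\lambda_{n,d}^{-1/(4+\kappa)}$ to balance the two contributions. After collecting terms this yields a bound proportional to $\underline\lambda_{n,d}^{1-p_f-p_g}$, times $\bar\lambda_{n,d}(C+16)$ and the moment products, as in \eqref{covbd}. If $\lambda_{n,d}\ge1$, the bound is trivial from H\"older alone, which is why the $\wedge1$ and $\vee1$ versions of $\lambda_{n,d}$ appear.

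The main obstacle is Step 3/4 bookkeeping. The exponent in \eqref{covbd} must come out exactly as $1-p_f-p_g$, with $p_f,p_g$ the sums of reciprocal exponents. Getting this requires picking the H\"older exponents so that the tail decay exponent $\kappa$ and the growth $L^4$ of the bounded part combine to exactly that power. I would first redo Kojevnikov et al.'s single-variable computation, where the exponent is $1-1/p-1/q$, and then treat each product as a single variable with moment order given by H\"older. The cases where indices coincide, for example $i=h$, need no separate treatment, since the bound only uses marginal moments and $\lambda_{n,0}=1$.
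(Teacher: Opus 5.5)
Your strategy (truncate, apply Assumption~\ref{depas} to a bounded Lipschitz truncation, control the remainder by H\"older and Markov) is the one underlying Theorem A.2 of Kojevnikov, Marmer and Song, which the paper simply cites. The gap is in Steps 3--4: with a \emph{single} truncation level $L$ for all four coordinates, they cannot deliver the exponent $1-p_f-p_g$ unless $p_{f,i}=p_{f,j}=p_{g,h}=p_{g,s}$.

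Write $\rho=1-p_f-p_g$. The main term costs $L^4\lambda_{n,d}$ no matter what, since $\|f_L\|_\infty\|g_L\|_\infty=L^4$. Now take the remainder piece carrying $\mathbf{1}\{|r_{n;i,q}|>L\}$, with $|g_L|\le|r_{n;h,q'}r_{n;s,\ell'}|$ as in your Step 3. Using only the moments in Assumption~\ref{moments}, the indicator can be traded for at most $(|r_{n;i,q}|/L)^{t}$ with $(1+t)/p_{f,i}+1/p_{f,j}+1/p_{g,h}+1/p_{g,s}\le 1$, i.e.\ $t\le p_{f,i}\rho$. So your $\kappa$ is at most $m\rho$ with $m=\min\{p_{f,i},p_{f,j},p_{g,h},p_{g,s}\}$, and $L=\underline{\lambda}_{n,d}^{-1/(4+\kappa)}$ yields $\underline{\lambda}_{n,d}^{\kappa/(4+\kappa)}$. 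But $m\rho/(4+m\rho)=\rho$ iff $m=4/(p_f+p_g)$, the harmonic mean of the four exponents, i.e.\ iff they are all equal.

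For instance, $p_{f,i}=p_{g,h}=5$ and $p_{f,j}=p_{g,s}=20$ give $\rho=1/2$, but your scheme gives only $\underline{\lambda}_{n,d}^{5/13}$; switching to $|g_L|\le L^2$ does not help here. A second symptom of the same issue: with an unnormalized $L$, the main term carries no moment factor, and Markov produces stray powers such as $\|\mathbf{r}_{n;i}\|_{p_{f,i}}^{p_{f,i}\rho}$. So the multiplicative form $(C+16)\times 4(\pi_1+\tilde{\gamma}_1)(\pi_2+\tilde{\gamma}_2)$ is not reachable by bookkeeping alone, and $\tilde{\gamma}$ does not arise from a Markov step.

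The repair is to truncate each coordinate at its own moment-normalized level: $L_i=\|\mathbf{r}_{n;i}\|_{p_{f,i}}\underline{\lambda}_{n,d}^{-1/p_{f,i}}$ and likewise $L_j,L_h,L_s$, with $f_L=\varphi_{L_i}(r_{n;i,q})\varphi_{L_j}(r_{n;j,\ell})$ and $g_L$ analogous. For $\lambda_{n,d}\le 1$ this gives $\|f_L\|_\infty\le L_iL_j=\pi_1\underline{\lambda}_{n,d}^{-p_f}$ and $\mathrm{Lip}(f_L)\le\max(L_i,L_j)\le\tilde{\gamma}_1\underline{\lambda}_{n,d}^{-p_f}$, using $\underline{\lambda}_{n,d}\le 1$ and Lyapunov. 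Hence the main term is at most $4C(\pi_1+\tilde{\gamma}_1)(\pi_2+\tilde{\gamma}_2)\underline{\lambda}_{n,d}^{\rho}$; this Lipschitz constant is where $\tilde{\gamma}$ enters.

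Markov now gives $\Pr(|r_{n;i,q}|>L_i)\le\underline{\lambda}_{n,d}$ with no stray moment powers. H\"older with the five exponents $p_{f,i},p_{f,j},p_{g,h},p_{g,s},\rho^{-1}$ then bounds each of the eight remainder pieces by $\pi_1\pi_2\underline{\lambda}_{n,d}^{\rho}$. These are two indicators, times $\mathbb{E}|AB|$ and $\mathbb{E}|A|\,\mathbb{E}|B|$, for each of the two remainder covariances. No optimization over $L$ is needed, and the total $4(C+2)(\pi_1+\tilde{\gamma}_1)(\pi_2+\tilde{\gamma}_2)\underline{\lambda}_{n,d}^{\rho}$ lies inside the stated bound. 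Your handling of $\lambda_{n,d}\ge 1$ and of $d=0$ by plain H\"older is fine.
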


\begin{proof}
Define the increasing continuous functions $h_{1}(x)$ and $h_{2}(x)$ as in Theorem A.2 in \citet[][Appendix A, pp. 899-907]{Kojevnikov2020} as $h_{1}(x)=h_{2}(x)=x$. Note that the functions $f_{q,\ell}$ and $g_{q^{\prime},\ell^{\prime}}$ are continuous, and their truncated version of the form $\varphi_{K_{1}} \circ f \circ \varphi_{h_{1}}\left(K_{2}\right)$ and $\varphi_{K_{1}} \circ g \circ \varphi_{h_{1}}\left(K_{2}\right)$ for all $K \in(0, \infty)^{2}$ is in $\mathscr{L}_{Q+1,2}$. Assumption \ref{moments} guarantees the existence of the moments defining $\widetilde{\gamma}_{1}$ and $\widetilde{\gamma}_{2}$. Then, Theorem A.2 in \citet[][Appendix A, pp. 899-907]{Kojevnikov2020} applies to this setting \citep[see also Corollary A.2. in Appendix A in ][pp. 899-907]{Kojevnikov2020}. \hfill\end{proof}

\begin{lemma}[LLN for Products of $\psi$-dependent Random Variables]
\label{lln_main}
Let Assumptions \ref{depas} -- \ref{epmoments} hold, define $R_{n;i,j}\equiv r_{n;i,q}r_{n;j,\ell}$, and let $w_{i,j}^{\ast}$ be weights between zero and one. Form $\{R_{n;i,j}\}_{i\in \mathcal{I}_{n}, j \in \mathcal{I}_{i}}$, where $\mathcal{I}_{i}\subset \mathcal{I}_{n}$ is a set of indexes defined for each $i \in \mathcal{I}_{n}$, which can be empty, equal to the union of individual $i$'s connections in the networks $\mathcal{G}_{n}$ and $\mathcal{G}_{n,0}$, or equal to $\mathcal{P}_{n}(i , 1)$. Then, as $n \to \infty$,

$$\left\|\frac{1}{n} \sum_{i \in \mathcal{I}_{n}}\sum_{j\in \mathcal{I}_{i}}w_{i,j}^{\ast}\left(R_{n;i, j}-\mathbb{E}\left[R_{n;i,j}\right]\right)\right\|_{1}{\longrightarrow} 0\text{.}$$
\end{lemma}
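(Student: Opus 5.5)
The plan follows the standard route for weak laws of large numbers under network dependence (as in Kojevnikov et al., 2020): truncate, control the variance of the truncated sum with Lemma \ref{bounds}, and make the truncation remainder small using Assumption \ref{epmoments}. Throughout, write $\tilde R_{n;i,j}=R_{n;i,j}-\mathbb{E}[R_{n;i,j}]$ and $S_n=n^{-1}\sum_{i}\sum_{j\in\mathcal{I}_i}w^{\ast}_{i,j}\tilde R_{n;i,j}$.

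\emph{Step 1 (truncation).} Fix $M>0$ and split $R_{n;i,j}=R^{(M)}_{n;i,j}+(R_{n;i,j}-R^{(M)}_{n;i,j})$, where $R^{(M)}$ is the truncation of $R$ at level $\pm M$.

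\emph{Step 2 (remainder).} The bound $|R-R^{(M)}|\le |R|^{1+\epsilon}M^{-\epsilon}$ and Assumption \ref{epmoments} give
\[
\|R_{n;i,j}-R^{(M)}_{n;i,j}\|_1\le M^{-\epsilon}\sup_n\max_i\|R_{n;i,j}\|_{1+\epsilon}^{1+\epsilon}.
\]
Since the weights satisfy $w^{\ast}_{i,j}\in[0,1]$, the centered remainder part of $S_n$ has $L_1$ norm at most $2M^{-\epsilon}C\, n^{-1}\sum_i|\mathcal{I}_i|$. This is small uniformly in $n$ provided the average size of $\mathcal{I}_i$ stays bounded. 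I would argue this from the sparsity embedded in Assumptions \ref{decay}--\ref{av_sparsity}, or, if needed, treat it as implicitly assumed for $\mathcal{I}_i$ equal to the union of neighbourhoods or to $\mathcal{P}_n(i,1)$. If it does not hold, one instead normalizes by $\sum_i|\mathcal{I}_i|$.

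\emph{Step 3 (variance of the truncated part).} The composition of the truncation map with $f_{q,\ell}(\mathbf r_{\{i,j\}})=r_{i,q}r_{j,\ell}$ is bounded and, after truncating the inputs, Lipschitz. Hence Assumption \ref{depas}\ref{ea2} applies with $a=b=2$; alternatively one can invoke Lemma \ref{bounds} directly together with Assumption \ref{moments}. Either way, $|\operatorname{cov}(R^{(M)}_{n;i,j},R^{(M)}_{n;h,s})|\le C_M\lambda_{n,d}$, where $d=d_n(\{i,j\},\{h,s\})$. Then
\[
\operatorname{var}\Big(n^{-1}\sum_i\sum_{j\in\mathcal{I}_i}w^{\ast}_{i,j}R^{(M)}_{n;i,j}\Big)\le n^{-2}\sum_{d\ge0}\sum_{\substack{(i,j),(h,s):\\ d(\{i,j\},\{h,s\})=d}}C_M\lambda_{n,d}.
\]
The term $d=0$ contributes $O(n^{-1})$ times the bounded average neighbourhood size. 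For $d\ge1$, because $j\in\mathcal{I}_i$ lies within distance one of $i$ in $\mathcal{G}_n$ (or in the union network), the number of pairs at set-distance $d$ is bounded by a constant times $\sum_i|\mathcal{P}_n(i,d)|$, with adjacent shells $d\pm1$ absorbed into the same sum. The variance is therefore bounded by $C_M n^{-1}\sum_{d\ge1}\bar D_n(d)\lambda_{n,d}$ plus $O(n^{-1})$, and this tends to $0$ by Assumption \ref{decay}. Chebyshev and Cauchy--Schwarz then give $L_1$ convergence of the truncated part.

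\emph{Step 4 (conclusion).} Combining the steps, $\limsup_n\|S_n\|_1\le cM^{-\epsilon}$ for every $M$. Letting $M\to\infty$ yields the lemma.

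The main obstacle is the counting in Step 3. Pair indices $(i,j)$ with $j\in\mathcal{I}_i$ live on the product graph, so one must relate $d_n(\{i,j\},\{h,s\})$ to single-node shells $\mathcal{P}_n(i,d)$. When $\mathcal{I}_i$ includes $\mathcal{G}_{n,0}$-neighbours, which are not necessarily close in $\mathcal{G}_n$, this relation can fail. I would handle that case by bounding the covariance using the minimal distance among the four nodes and summing over the degree multiplicities.
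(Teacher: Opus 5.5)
Your route is the paper's: truncate, bound the truncated part through its variance using $\psi$-covariance bounds and Assumption \ref{decay}, and control the tail by uniform integrability. The gap is in the counting of Step~3, and it sits somewhere you do not flag.

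Let $m_v$ be the number of index pairs $(i,j)$, $j\in\mathcal{I}_i$, that contain node $v$.

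\begin{itemize}
\item \textbf{The $d=0$ term.} Pairs $((i,j),(h,s))$ at set-distance $0$, i.e.\ sharing a node, can number up to $\sum_v m_v^2$. For these, $\lambda_{n,0}=1$ gives no decay. So their contribution to the variance is of order $n^{-2}\sum_v m_v^2$, a second moment of the degrees. It is not ``$O(n^{-1})$ times the average neighbourhood size''.
\item \textbf{The $d\geq1$ terms.} The count is $\sum_u\sum_{v\in\mathcal{P}_n(u,d)}m_um_v$. This is $O(n\bar D_n(d))$ only if $\max_v m_v$ is bounded.
\item \textbf{No degree bound is available.} Neither a bound on $\max_v m_v$ nor the bounded average $|\mathcal{I}_i|$ that your Step~2 needs follows from the hypotheses. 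Assumption \ref{av_sparsity} is not among them. Assumption \ref{decay} holds for every network when the data are independent ($\lambda_{n,d}=0$ for $d\geq1$), so no degree bound can be extracted from it.
\end{itemize}

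In fact the lemma as stated fails without such a condition. Take the following setup.
\begin{itemize}
\item $\mathcal{G}_n$ is a star with centre $c$, $\mathcal{I}_i=\mathcal{P}_n(i,1)$, and $w^{\ast}_{i,j}=1$.
\item $\mathbf{r}_{n;i}$ are i.i.d.\ and bounded, with $r_{n;i,q}$ independent of $r_{n;i,\ell}$, $\mathbb{E}[r_{n;i,q}]=\mu_q\neq0$, and $\operatorname{var}(r_{n;i,\ell})>0$.
\end{itemize}
Assumptions \ref{depas}--\ref{epmoments} hold with $\lambda_{n,d}=0$ for $d\geq1$, and the average degree is below $2$. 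Yet the normalized sum differs by $o(1)$ in $L_1$ from $\mu_q(r_{n;c,\ell}-\mu_\ell)+\mu_\ell(r_{n;c,q}-\mu_q)$, whose $L_1$ norm is a fixed positive constant. Renormalizing by $\sum_i|\mathcal{I}_i|$ only halves this limit.

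The paper's proof does not close the hole either. It bounds $\operatorname{cov}(R^{(k)}_{n;i,j},R^{(k)}_{n;i,s})$ by $\psi_{1,1}(\varphi_k,\varphi_k)\lambda_{n,d}$ with $d$ the distance between $j$ and $s$, although both terms depend on $\mathbf{r}_{n;i}$. It also drops the factor $\bar{\mathcal{I}}_n$ in the tail bound.

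What is missing is an explicit hypothesis such as $\sup_n\max_v m_v<\infty$, for example uniformly bounded degrees in $\mathcal{G}_n$ and $\mathcal{G}_{n,0}$. Under it your proof closes. Your idea for $\mathcal{G}_{n,0}$-neighbours is then the right one: counting through the closest cross pair $(u,v)$ gives variance at most $C_M(\max_v m_v)^2n^{-1}\bigl(1+\sum_{d\geq1}\bar D_n(d)\lambda_{n,d}\bigr)\to0$, however far $j$ lies from $i$ in $\mathcal{G}_n$.

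One smaller fix is also needed. $\varphi_M(r_{i,q}r_{j,\ell})$ is bounded but not Lipschitz, so you should truncate the inputs, $\varphi_K(r_{i,q})\varphi_K(r_{j,\ell})$, and redo the Step~2 remainder for that truncation. The fallback through Lemma \ref{bounds} does not avoid this, because it yields $\underline{\lambda}_{n,d}^{1-p_f-p_g}$, which Assumption \ref{decay} does not control.
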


\begin{proof}
Using the same approach as \cite{Jenish2009} and \cite{Kojevnikov2020}, let the censoring function $\varphi_{k}(x)=(-K) \vee (K \wedge x)$ be such that, for some $k>0$,

$$
R_{n;i, j}=R_{n;i, j}^{(k)}+\tilde{R}_{n;i, j}^{(k)},
$$

\noindent where $R_{n;i, j}^{(k)}=\varphi_{k}(R_{n;i, j})$ and $\tilde{R}_{n;i,j}^{(k)}=R_{n;i, j}-\varphi_{k}(R_{n;i, j})=(R_{n;i, j}-\operatorname{sgn}(R_{n;i, j}) k) \mathds{1}\{|R_{n;i,j}|>k\}$. Let $\|X\|_{k}=(\mathbb{E}[|X|^{k})^{1/k}$ for $k \in [1, \infty)$. Therefore, following the previous definition, we apply the triangle inequality to get

$\begin{aligned}\left\|\frac{1}{n} \sum_{i \in \mathcal{I}_{n}}\sum_{j\in \mathcal{I}_{i}}w_{i,j}^{\ast}\left(R_{n;i, j}-\mathbb{E}\left[R_{n;i, j} \right]\right)\right\|_{1} \leq &\left\|\frac{1}{n} \sum_{i \in \mathcal{I}_{n}}\sum_{j\in \mathcal{I}_{i}}w_{i,j}^{\ast}\left(R_{n;i, j}^{(k)}-\mathbb{E}\left[R_{n;i, j}^{(k)} \right]\right)\right\|_{1} \\ &+\left\|\frac{1}{n} \sum_{i \in \mathcal{I}_{n}}\sum_{j\in \mathcal{I}_{i}}w_{i,j}^{\ast}\left(\tilde{R}_{n;i,j}^{(k)}-\mathbb{E}\left[\tilde{R}_{n;i, j}^{(k)}\right]\right)\right\|_{1} . \end{aligned}$

\noindent From Assumption \ref{epmoments}, note that the expectation on the second term of the previous equation is bounded by $\mathbb{E}[|\tilde{R}_{n;i,j}^{(k)}| ]=\mathbb{E}[|\tilde{R}_{n;i,j}^{(k)}| \mathds{1}\{|R_{n;i,j}|>k\}] \leq 2 \mathbb{E}[|R_{n;i,j}| \mathds{1}\{|R_{n;i,j}|>k\}]$. Following the arguments as in \cite{Kojevnikov2020}, the second component of the right-hand side of the above equation is bounded by $\sup _{n \geq 1} \max _{i \in \mathcal{I}_{n}} \mathbb{E}\left[\left|R_{n;i, j}\right| \mathds{1}\left\{\left|R_{n;i}\right|>k\right\} \right]$, where $\lim _{k \rightarrow \infty} \sup _{n \geq 1} \max _{i \in \mathcal{I}_{n}} \mathbb{E}\left[\left|R_{n;i, j}\right| \mathds{1}\left\{\left|R_{n;i, j}\right|>k\right\} \right]=0$. Focusing on the first component of the right-hand side, by Lyapunov's inequality, it follows that 

\begin{equation}
\begin{split}
\left\|\frac{1}{n} \sum_{i \in \mathcal{I}_{n}}\sum_{j\in \mathcal{I}_{i}}w_{i,j}^{\ast}\left(R_{n;i, j}^{(k)}-\mathbb{E}\left[R_{n;i,j}^{(k)}\right]\right)\right\|_{1} & \leq  \left\|\frac{1}{n} \sum_{i \in \mathcal{I}_{n}}\sum_{j\in \mathcal{I}_{i}}w_{i,j}^{\ast}\left(R_{n;i, j}^{(k)}-\mathbb{E}\left[R_{n;i,j}^{(k)}\right]\right)\right\|_{2} \\
& = \frac{1}{n}\sqrt{\text{var}\left(\sum_{i \in \mathcal{I}_{n}}\sum_{j\in \mathcal{I}_{i}}w_{i,j}^{\ast}R_{n;i, j}^{(k)}\right)}\text{,}
\end{split}
\label{rijsd}
\end{equation}
 
\noindent where \eqref{rijsd} is an expression for the standard deviation of $\sum_{i \in \mathcal{I}_{n}}\sum_{j\in \mathcal{I}_{i}}w_{i,j}^{\ast}R_{n;i, j}^{(k)}$. Note that 

\begin{small}
\begin{equation}
\nonumber
\text{var}\left(\sum_{i \in \mathcal{I}_{n}}\sum_{j\in \mathcal{I}_{i}}w_{i,j}^{\ast}R_{n;i, j}^{(k)}\right) = \sum_{i \in \mathcal{I}_{n}}\text{var}\left(\sum_{j\in \mathcal{I}_{i}} w_{i,j}^{\ast}R_{n;i, j}^{(k)}\right) + \sum_{i\neq h \in \mathcal{I}_{n}}\text{cov}\left(\sum_{j\in \mathcal{I}_{i}}  w_{i,j}^{\ast}R_{n;i, j}^{(k)}, \sum_{s\in \mathcal{I}_{h}}  w_{h,s}^{\ast}R_{n; h, s}^{(k)}\right).
\end{equation}
\end{small}

\noindent The variance part of the previous equation can be further expressed as

\begin{align}
\label{vark}
\text{var}\left(\sum_{j\in \mathcal{I}_{i}} w_{i,j}^{\ast}R_{n;i, j}^{(k)}\right) & = \sum_{j\in \mathcal{I}_{i}}w_{i,j}^{\ast 2}\text{var}(R_{n;i,j}^{(k)}) + \sum_{j \neq s  \in \mathcal{I}_{i}}w_{i,j}^{\ast}w_{i,s}^{\ast}\text{cov}(R_{n;i,j}^{(k)}, R_{n;i,s}^{(k)}), \\
\nonumber
&\leq C\sum_{j\in \mathcal{I}_{i}}w_{i,j}^{\ast 2} + \sum_{j\in \mathcal{I}_{i}}\sum_{d\geq 1}\sum_{s\in \mathcal{P}_{n}(j,d)\cap \mathcal{I}_{i}}|\text{cov}(R_{n; i,j}^{(k)}R_{n; i,s}^{(k)})|, \\
\nonumber
& \leq C\sum_{j\in \mathcal{I}_{i}}w_{i,j}^{\ast 2}+\psi_{1,1}\left(\varphi_{k}, \varphi_{k}\right) \sum_{d \geq 1} \lambda_{n, d} \sum_{j \in \mathcal{I}_{i}}\left|\mathcal{P}_{n}(j , d)\right|,
\end{align}

\noindent where the second inequality follows from $w_{i,j}^{\ast},w_{i,s}^{\ast}\in[0,1]$. In the first term of the second inequality, $C$ represents any generic constant due to the fact that after the initial partition of $R_{n;i, j}$, the variance of $R_{n;i, j}^{(k)}$ is bounded. The last inequality follows from two reasons. First, from Lemma \ref{bounds} under Assumptions \ref{depas} and \ref{moments}, $|\text{cov}(R_{n;i,j}^{(k)}R_{n;i,s}^{(k)})| \leq \psi_{1,1}\left(\varphi_{k}, \varphi_{k}\right) \lambda_{n, d}$ for $d_{n}(i, j)=d$ and $\varphi_{k}$ is a bounded function with $\operatorname{Lip}\left(\psi_{k}\right)=1$. Second, the set of indexes $\mathcal{P}_{n}(j,d)$ is such that $\mathcal{P}_{n}(j,d)\cap \mathcal{I}_{i}\subset \mathcal{P}_{n}(j,d)$. The covariance component can be written as 

\begin{align}
\label{covark}
\text{cov}\left(\sum_{j\in \mathcal{I}_{i}}  w_{i,j}^{\ast}R_{n;i, j}^{(k)}, \sum_{s\in \mathcal{I}_{h}}  w_{h,s}^{\ast}R_{n;h, s}^{(k)}\right) &= \sum_{j\in \mathcal{I}_{i}}\sum_{s\in \mathcal{I}_{h}} w_{i,j}^{\ast}w_{h,s}^{\ast}\text{cov}(R_{n;i,j}^{(k)}, R_{n;h,s}^{(k)}), \\
\nonumber
& \leq \sum_{j\in \mathcal{I}_{i}}\sum_{d\geq 1}\sum_{s\in \mathcal{P}_{n}(j,d)\cap \mathcal{I}_{h}}|\text{cov}(R_{n;i,j}^{(k)}R_{n;h,s}^{(k)})|, \\
\nonumber
& \leq \psi_{1,1}\left(\varphi_{k}, \varphi_{k}\right) \sum_{d \geq 1} \lambda_{n, d} \sum_{j \in \mathcal{I}_{i}}\left|\mathcal{P}_{n}(j , d)\right|,
\end{align}

\noindent where the second and third inequalities follow from the same principles already discussed in the previous paragraph. It follows from Equations \eqref{vark} and \eqref{covark} that the total variance of $\sum_{j\in \mathcal{I}_{i}} w_{i,j}^{\ast}R_{n;i, j}^{(k)}$ can be bounded by

\begin{align}
\label{fvar}
\text{var}\left(\sum_{i \in \mathcal{I}_{n}}\sum_{j\in \mathcal{I}_{i}}w_{i,j}^{\ast}R_{n;i, j}^{(k)}\right)& = C\sum_{i \in \mathcal{I}_{n}}\sum_{j\in \mathcal{I}_{i}}w_{i,j}^{\ast 2} + 2\psi_{1,1}\left(\varphi_{k}, \varphi_{k}\right) \sum_{i \in \mathcal{I}_{n}}\sum_{d \geq 1} \lambda_{n, d} \sum_{j \in \mathcal{I}_{i}}\left|\mathcal{P}_{n}(j , d)\right|, \\
\nonumber
& = C\sum_{i \in \mathcal{I}_{n}}\sum_{j\in \mathcal{I}_{i}}w_{i,j}^{\ast 2} + 2\psi_{1,1}\left(\varphi_{k}, \varphi_{k}\right)\sum_{d \geq 1} \lambda_{n, d} \sum_{i \in \mathcal{I}_{n}}\left|\mathcal{P}_{n}(j , d)\right|, \\
\nonumber
& \leq n\left( C\bar{\mathcal{I}}_{n} + 2\psi_{1,1}\left(\varphi_{k}, \varphi_{k}\right)\sum_{d\geq 1}\bar{D}_{n}(d)\lambda_{n,d}\right),
\end{align}

\noindent where $\bar{\mathcal{I}}_{n}=n^{-1}\sum_{i\in\mathcal{I}_{n}}|\mathcal{I}_{i}|$ and the inequality follows because $w_{i,j}^{\ast 2}\in[0,1]$. The set $\mathcal{I}_{i}$ can either be empty, equal to the union of individual $i$'s connections in the networks $\mathcal{G}_{n}$ and $\mathcal{G}_{n,0}$, or equal to $\mathcal{P}_{n}(i , 1)$ (individual $i$'s connections in network $\mathcal{G}_{n}$). Note that, for any of the three cases, $|\mathcal{I}_{i}|\leq |\mathcal{P}_{n}(i,1)|$ for all $i$. Also, $\sum_{i\in\mathcal{I}_{n}}|\mathcal{P}_{n}(i,1)|\lambda_{n,1}\leq \sum_{d\geq 1}\bar{D}_{n}(d)\lambda_{n,d}$, which converges in probability to zero by Assumption \ref{decay}. It follows that $n^{-1}\bar{\mathcal{I}}_{n}\stackrel{\text { p }}{\longrightarrow}0$. Therefore,

\begin{equation}
\label{finite_var}
\left\|\frac{1}{n} \sum_{i \in \mathcal{I}_{n,m}}\sum_{j\in \mathcal{I}_{i}}w_{i,j}^{\ast}\left(R_{n;i, j}^{(k)}-\mathbb{E}\left[R_{n;i,j}^{(k)}\right]\right)\right\|_{1} \leq  \left( n^{-1}C\bar{\mathcal{I}}_{n} + 2\psi_{1,1}n^{-1}\sum_{d\geq 1}\bar{D}_{n}(d)\lambda_{n,d}\right)^{1/2}.
\end{equation}

\noindent The result follows from $n^{-1}\bar{\mathcal{I}}_{n}\stackrel{\text { p }}{\longrightarrow}0$ and $n^{-1}\sum_{d\geq 1}\bar{D}_{n}(d)\lambda_{n,d}\stackrel{\text { p }}{\longrightarrow}0$ under Assumption \ref{decay}.\hfill\end{proof}

\begin{corollary}[LLN for Instruments and Regressors]
\label{int_reg}
Let Assumptions \ref{depas} to \ref{epmoments} hold. Then,
\begin{equation*}
\left\|\frac{1}{n}\sum_{i\in\mathcal{I}_{n}}\left(\mathbf{z}_{n;i}\mathbf{d}_{n;i}^{\top} - \mathbb{E}[\mathbf{z}_{n;i}\mathbf{d}_{n;i}^{\top}]\right)\right\|_{1} {\longrightarrow} 0.
\end{equation*}
\end{corollary}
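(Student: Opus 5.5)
The plan is to reduce Corollary \ref{int_reg} entry by entry to Lemma \ref{lln_main}. The matrix $n^{-1}\sum_i\mathbf{z}_{n;i}\mathbf{d}_{n;i}^\top$ has finitely many entries, $((p-1)K+2K+1)\times(2K+2)$ of them. By the triangle inequality for $\|\cdot\|_1$ it therefore suffices to show that each entry, centred at its expectation, converges to zero in $L^1$. I would sort the entries by the type of the row of $\mathbf{z}_{n;i}$ (a power $\mathbf{w}_{n,0;i}^{q}\mathbf{x}_{n,k}$ with $1\le q\le p$, or $\widetilde{x}_{n;i,k}$) and by the type of the column of $\mathbf{d}_{n;i}$ (namely $\mathbf{w}_{n;i}\mathbf{y}_n$, $\mathbf{w}_{n;i}\mathbf{x}_{n,\ell}$, or $\widetilde{x}_{n;i,\ell}$).

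The easy blocks are those not involving $\mathbf{y}_n$. Expanding a product such as $(\mathbf{w}^q_{n,0;i}\mathbf{x}_{n,k})(\mathbf{w}_{n;i}\mathbf{x}_{n,\ell})=\sum_{j,s}w^{(q)}_{0;i,j}w_{i,s}x_{n;j,k}x_{n;s,\ell}$ gives weighted double sums of products $R_{n;j,s}=r_{n;j,q'}r_{n;s,\ell'}$ of components of $\mathbf{r}_{n;\cdot}$. Entries such as $x_{n;i,k}x_{n;i,\ell}$ and $x_{n;i,k}\mathbf{w}_{n;i}\mathbf{x}_{n,\ell}$ match the form $\sum_i\sum_{j\in\mathcal{I}_i}w^\ast_{i,j}R_{n;i,j}$ of Lemma \ref{lln_main} directly, with $\mathcal{I}_i=\{i\}$ or $\mathcal{P}_n(i,1)$. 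For the instrument powers, I would reindex so that the outer sum runs over the index of the endogenous neighbour $s$ and the inner set is the union of connections in $\mathcal{G}_n$ and $\mathcal{G}_{n,0}$ that the lemma explicitly allows. After row normalization the weights $w^{(q)}_{0;i,j}w_{i,s}$ lie in $[0,1]$. Assumptions \ref{depas}--\ref{epmoments} then give the $L^1$ convergence for each block.

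The main obstacle is the column $\mathbf{w}_{n;i}\mathbf{y}_n$, because $\mathbf{y}_n$ depends on the whole array. Here I would substitute the reduced form used in \eqref{outcome}:
\[
\mathbf{w}_{n;i}\mathbf{y}_n=\textstyle\sum_k\gamma_{0,k}\mathbf{w}_{n;i}\mathbf{x}_{n,k}+\sum_k\pi_{0,k}\sum_{r\ge0}\beta_0^r\mathbf{w}_{n;i}\mathbf{W}_n^{r+1}\mathbf{x}_{n,k}+\sum_{r\ge0}\beta_0^r\mathbf{w}_{n;i}\mathbf{W}_n^{r}\boldsymbol{\varepsilon}_n .
\]
Each entry in this column then becomes a series of blocks of the previous type, with $\varepsilon$ playing the role of another component of $\mathbf{r}_{n;\cdot}$. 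I would truncate the series at a finite order $R$. For fixed $R$, the finitely many terms are handled by Lemma \ref{lln_main}, as above. The tail I would bound uniformly in $n$ by $\sum_{r>R}|\beta_0|^r\cdot C$, using that $\boldsymbol{\theta}_0$ is interior, so that $|\beta_0|\lambda_{\max}<1$ with row-normalised weights, together with the uniform $1+\epsilon$ moment bounds of Assumption \ref{epmoments}. The tail can thus be made small before letting $n\to\infty$.

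The delicate point is checking that each truncated term really fits the index-set restriction of Lemma \ref{lln_main}, since higher powers of $\mathbf{W}_n$ reach beyond $\mathcal{P}_n(i,1)$. If it does not fit directly, I would rerun the variance bound \eqref{fvar} for those terms with $\mathcal{I}_i=\mathcal{P}^-_n(i,r+1)$. The resulting extra factor is again controlled by $\sum_d\bar D_n(d)\lambda_{n,d}/n\to0$ from Assumption \ref{decay}. Combining the blocks by the triangle inequality yields the corollary.
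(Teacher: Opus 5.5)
Your decomposition is the paper's proof. You use the same entry types of $n^{-1}\mathbf{Z}_n^{\top}\mathbf{D}_n$, send each to Lemma \ref{lln_main}, and replace $\mathbf{w}_{n;i}\mathbf{y}_n$ by its reduced form term by term. Your truncation at order $R$, with a tail bound $2C\sum_{r>R}|\beta_0|^r$ uniform in $n$, improves on the paper's remark that ``the infinite sum of finite expectations is also finite''. It relies on row normalisation, which you add as an assumption: then the stability restriction reads $|\beta_0|<1$, and every $\mathbf{w}^{q}_{n,0;i}$ and $\mathbf{w}_{n;i}\mathbf{W}_n^{r}$ is a sub-probability vector.

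The gap is the sentence ``Assumptions \ref{depas}--\ref{epmoments} then give the $L^1$ convergence for each block'', for blocks built from $\mathbf{W}_{n,0}^{q}$ or $\mathbf{W}_n^{r}$ with $r\ge1$. Reindexing by the endogenous neighbour turns $n^{-1}\sum_i(\mathbf{w}^{q}_{n,0;i}\mathbf{x}_{n,k})(\mathbf{w}_{n;i}\mathbf{x}_{n,\ell})$ into $n^{-1}\sum_s\sum_j c_{s,j}\,x_{n;s,\ell}\,x_{n;j,k}$, where $c_{s,j}=\sum_i w_{n;i,s}w^{(q)}_{n,0;i,j}$ is an entry of $\mathbf{W}_n^{\top}\mathbf{W}_{n,0}^{q}$. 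Row normalisation bounds row sums, not column sums, so $c_{s,j}$ can be of order $n$. The support of $c_{s,\cdot}$ (one step back in $\mathcal{G}_n$, then $q$ steps in $\mathcal{G}_{n,0}$) is also not among the index sets the lemma allows. Rerunning \eqref{fvar} with $\mathcal{I}_i=\mathcal{P}^{-}_n(i,r+1)$ does not fix this either. Assumption \ref{decay} constrains $\bar{D}_n(d)$ only through $\lambda_{n,d}$, which may vanish, so the size of those balls is uncontrolled.

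This matters, because the real missing ingredient is control of hubs. Take i.i.d.\ $\mathbf{r}_{n;i}$ with means $\mu_k,\mu_\ell$, $\mu_\ell\neq0$, and $x_{n;i,k}$ non-degenerate. Let $\mathcal{G}_n$ be empty, so Assumptions \ref{depas}--\ref{epmoments} hold with $\lambda_{n,d}=0$ for $d\ge1$. Let $\mathcal{G}_{n,0}$ be a row-normalised star centred at node $1$. The entry $n^{-1}\sum_i(\mathbf{w}_{n,0;i}\mathbf{x}_{n,k})x_{n;i,\ell}$ equals $x_{n;1,k}\,n^{-1}\sum_{i\neq1}x_{n;i,\ell}+O_{L^1}(n^{-1})$, so its centred $L^1$ norm tends to $|\mu_\ell|\,\mathbb{E}|x_{n;1,k}-\mu_k|>0$. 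This entry even has the exact form of Lemma \ref{lln_main}, and the lemma still fails. The reason is that \eqref{covark}--\eqref{fvar} sum covariances only over $d\ge1$, dropping the distance-zero covariances between products that share the hub index; these add up to order $n^2$.

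To close the step you need an extra restriction, for instance uniformly bounded degrees in both $\mathcal{G}_n$ and $\mathcal{G}_{n,0}$. Then the collected weights $c_{s,j}$ and their supports are bounded. A direct computation of $\operatorname{var}\bigl(\sum_{s,j}c_{s,j}R^{(k)}_{n;s,j}\bigr)$ that keeps the shared-index terms is then $O\bigl(n\sum_{d\ge0}\bar{D}_n(d)\lambda_{n,d}\bigr)$, and Assumption \ref{decay} finishes the argument. The same works for each truncated term of your series.

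The paper's proof makes the identical unsupported appeal to Lemma \ref{lln_main} (``choosing $\mathcal{I}_i$ as the union of individual $i$'s connections''), so here you match it. Neither argument, however, establishes the corollary from Assumptions \ref{depas}--\ref{epmoments} alone.
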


\begin{proof}
There are four different types of components in the matrix $\mathbf{Z}_{n}^{\top}\mathbf{D}_{n}$ formed by summation of products of: (1) Non-network regressors of the form $x_{n;i,q}x_{n;i,\ell}$; (2) Network regressors of the form $\mathbf{w}_{n, 0; i} \mathbf{x}_{n, q}\mathbf{w}_{n;i} \mathbf{x}_{n;\ell}$; (3) Network and non-network regressors of the form $\mathbf{w}_{n, 0; i} \mathbf{x}_{n, q}\mathbf{x}_{n;i,\ell}$; and (4) Network regressors and network outcomes of the form $\mathbf{w}_{n,0;i}\mathbf{x}_{n,q}\mathbf{w}_{n;i}\mathbf{y}_{n}$ [and the versions of (2) and (3) with $\mathbf{w}_{n, 0; i}^{p}$ instead of $\mathbf{w}_{n, 0; i}$]. The LLN follows from Lemma \ref{lln_main} by choosing $\mathcal{I}_{i}=\{\emptyset\}$ for (1), $\mathcal{I}_{i}$ as the union of individual $i$'s connections in the networks $\mathcal{G}_{n}$ and $\mathcal{G}_{n,0}$ in (2), and $\mathcal{I}_{i}=\mathcal{P}_{n}(i , 1)$ for (3). For (4), note that 

\begin{equation}
\label{rexp}
\mathbb{E}[\mathbf{W}_{N} \mathbf{y}]= \gamma_{0} \mathbf{W}_{N} \mathbb{E}[\mathbf{x}_{N}]+(\gamma_{0} \beta_{0}+\delta_{0}) \sum_{p=0}^{\infty} \beta_{0}^{k} \mathbf{W}_{N}^{p+2} \mathbb{E}[\mathbf{x}_{N}],
\end{equation}

\noindent where, again, the expectation is taken conditional on $\mathcal{G}_{N}$. By choosing $\mathcal{I}_{i}$ to be the union of individual $i$'s connections in the network $\mathcal{G}$ and the set of individuals at distance $p$ from $i$ (for all $p\in \mathbb{R}_{+}$), Lemma \ref{lln_main} applies for all values in the infinite sum formed by $\mathbf{w}_{n,0;i}\mathbf{x}_{n,q}\mathbf{w}_{n;i}\mathbf{y}_{n}$ after replacing $\mathbf{w}_{n;i}\mathbf{y}_{n}$ from Equation \eqref{rexp} [the same argument holds for (2) and (3) when using $\mathbf{w}_{n, 0; i}^{p}$ instead of $\mathbf{w}_{n, 0; i}$]. Given that each component of the sum converges to a finite expectation, the infinite sum of finite expectations is also finite given the restriction on the parameters $\beta_{0}$ from Assumption \ref{excrest}, thus completing the proof. \hfill\end{proof}

\begin{corollary}[LLN for Instruments and Errors] 
\label{int_err}
Let Assumptions \ref{depas} to \ref{epmoments} hold, then

\begin{equation*}
\left\|\frac{1}{n}\sum_{i\in\mathcal{I}_{n}}\left(\mathbf{z}_{n;i}\varepsilon_{n;i}^{\top} - \mathbb{E}[\mathbf{z}_{n;i}\varepsilon_{n;i}^{\top}]\right)\right\|_{1} {\longrightarrow} 0.
\end{equation*}

\end{corollary}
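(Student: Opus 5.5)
The plan is to reduce Corollary \ref{int_err} to Lemma \ref{lln_main}, in the same way as Corollary \ref{int_reg}. Write $\mathbf{z}_{n;i}=[(\mathbf{w}_{n,0;i}^{p}\mathbf{X}_{n}),\dots,(\mathbf{w}_{n,0;i}\mathbf{X}_{n}),\widetilde{\mathbf{x}}_{n;i}^{\top}]^{\top}$. The claim is about an $L_{1}$ norm of a finite-dimensional vector, so it suffices to treat each coordinate $q$ separately. For each $q$ we show
\[
\Big\|n^{-1}\sum_{i\in\mathcal{I}_n}\big(z_{n;i,q}\varepsilon_{n;i}-\mathbb{E}[z_{n;i,q}\varepsilon_{n;i}]\big)\Big\|_{1}\to 0 ,
\]
and then combine the coordinates by the triangle inequality.

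There are three kinds of coordinates. The constant entry gives $\varepsilon_{n;i}$ alone. Its centered average is handled by the LLN of \cite{Kojevnikov2020} for $\psi$-dependent arrays, which is the degenerate case of Lemma \ref{lln_main} with a constant component. The own-regressor entries give $x_{n;i,k}\varepsilon_{n;i}=r_{n;i,k}r_{n;i,K+1}$, which is a product $R_{n;i,i}$ of two components of $\mathbf{r}_{n;i}$. This fits Lemma \ref{lln_main} with $\mathcal{I}_i=\{i\}$ (equivalently the ``empty'' choice used for type (1) in Corollary \ref{int_reg}), with the moment bound coming from Assumption \ref{epmoments}. The network entries are
\[
(\mathbf{w}_{n,0;i}^{m}\mathbf{x}_{n,k})\,\varepsilon_{n;i}=\sum_{j}w^{(m)}_{n,0;i,j}\,x_{n;j,k}\varepsilon_{n;i}=\sum_{j}w^{(m)}_{n,0;i,j}R_{n;i,j}
\]
for $m=1,\dots,p$, where $R_{n;i,j}=r_{n;j,k}r_{n;i,K+1}$. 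These are exactly the weighted double sums in Lemma \ref{lln_main}. The index set $\mathcal{I}_i$ is the set of $j$ with $w^{(m)}_{n,0;i,j}>0$, and the weights are $w^{\ast}_{i,j}=w^{(m)}_{n,0;i,j}$.

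The main obstacle is that Lemma \ref{lln_main} assumes weights in $[0,1]$ and index sets $\mathcal{I}_i$ that are either connections in $\mathcal{G}_n\cup\mathcal{G}_{n,0}$ or $\mathcal{P}_n(i,1)$. For $m\ge2$, however, $\mathcal{I}_i$ consists of distance-$m$ neighbours in $\mathcal{G}_{n,0}$, and unnormalized powers can have entries larger than one. I would first argue that with row-normalized $\mathbf{W}_{n,0}$ the entries of $\mathbf{W}_{n,0}^m$ stay in $[0,1]$. In the unnormalized case the entries are bounded by a constant that can be absorbed into $C$.

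Next, I would follow the proof of Lemma \ref{lln_main} step by step: truncate, bound the variance by summing covariances over $\mathcal{G}_n$-distance shells, and apply Lemma \ref{bounds}. The one point to check is the count term $\bar{\mathcal I}_n$. The proof of Lemma \ref{lln_main} controls it through $|\mathcal{I}_i|\le|\mathcal{P}_n(i,1)|$, and this no longer holds for exogenous-network neighbourhoods. I would replace that step by the requirement $n^{-1}\bar{\mathcal I}_n\to0$, which follows because $|\mathcal{I}_i|\le n$ and $\bar{\mathcal I}_n$ is normalized by $n$ only once: the variance bound becomes $n^{-2}\cdot n\,\bar{\mathcal I}_n$. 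If this is too weak, I would instead use that the moments in Assumption \ref{epmoments} are uniform in $j$ and bound the covariance sum by $\sum_{d}\bar D_n(d)\lambda_{n,d}$ exactly as in \eqref{covark}.

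Finally, Assumption \ref{id} gives $\mathbb{E}[\mathbf{z}_{n;i}\varepsilon_{n;i}]=\mathbf{0}$, so the limit is zero. This is the fact used in the proof of Theorem \ref{t2}.
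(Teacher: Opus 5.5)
Your route is the paper's. Its proof of Corollary~\ref{int_err} only says the argument is analogous to types (1) and (3) in Corollary~\ref{int_reg}: Lemma~\ref{lln_main} is applied coordinatewise, with $\mathcal{I}_i=\mathcal{P}_n(i,1)$ for the network block. You are right that this does not literally fit. The relevant $j$'s are $\mathcal{G}_{n,0}$-neighbours (up to distance $p$), while the $\psi$-dependence bound and Assumption~\ref{decay} involve only $\mathcal{G}_n$. However, your repair of this step, which you yourself call the main obstacle, does not go through, for three reasons.

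\textbf{First,} $|\mathcal{I}_i|\le n$ gives only $n^{-1}\bar{\mathcal I}_n\le 1$, not $n^{-1}\bar{\mathcal I}_n\to 0$. For a fixed-density $\mathcal{G}_{n,0}$, as in the Monte Carlo designs, $\bar{\mathcal I}_n$ grows linearly in $n$. Under row normalization you could instead use $\sum_j (w^{(m)}_{n,0;i,j})^2\le\sum_j w^{(m)}_{n,0;i,j}=1$, which does handle the diagonal term.

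\textbf{Second,} without normalization the entries of $\mathbf{W}_{n,0}^m$ count walks. For a symmetric unweighted $\mathcal{G}_{n,0}$, $(\mathbf{W}_{n,0}^2)_{ii}$ is the degree of $i$. So these entries are not bounded by a constant absent a degree bound.

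\textbf{Third, and most importantly,} falling back on \eqref{covark} does not handle the cross terms. The bound on $\operatorname{cov}(x_{n;j,k}\varepsilon_{n;i},\,x_{n;s,k}\varepsilon_{n;h})$ is governed by the $\mathcal{G}_n$-distance between the sets $\{i,j\}$ and $\{h,s\}$, not by $d_n(j,s)$. Fix $(i,j)$. The $w^{(m)}_{n,0}$-weighted number of pairs $(h,s)$ at $\mathcal{G}_n$-distance $d$ is at most $(\rho_n+\kappa_n)(|\mathcal{P}_n(i,d)|+|\mathcal{P}_n(j,d)|)$, where $\rho_n$ and $\kappa_n$ are the largest row and column sums of $\mathbf{W}_{n,0}^m$. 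You therefore reach a bound of order $n\sum_d\bar D_n(d)\lambda_{n,d}$ only if $\rho_n$ and $\kappa_n$ stay bounded. None of Assumptions~\ref{depas}--\ref{epmoments} restricts $\mathcal{G}_{n,0}$ at all.

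This is not just bookkeeping: under Assumptions~\ref{depas}--\ref{epmoments} alone the conclusion can fail. Take $K=1$ and let $\mathcal{G}_n$ be empty, so Assumptions~\ref{depas} and \ref{decay} hold trivially. Let $(x_{n;i},\varepsilon_{n;i})$ be i.i.d.\ with $x_{n;i}\perp\varepsilon_{n;i}$, $\mathbb{E}[x_{n;i}]=\mu\neq0$, nondegenerate $\varepsilon_{n;i}$ and all moments finite. Let $\mathcal{G}_{n,0}$ be complete and unweighted. Then the coordinate $n^{-1}\sum_i(\mathbf{w}_{n,0;i}\mathbf{x}_n)\varepsilon_{n;i}=n\bar x_n\bar\varepsilon_n-n^{-1}\sum_i x_{n;i}\varepsilon_{n;i}$ has mean zero, but its $L_1$ norm grows like $\sqrt n$.

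So the proof needs an extra condition on the exogenous network, for example uniformly bounded row and column sums of $\mathbf{W}_{n,0}$. These bounds pass to the powers $\mathbf{W}_{n,0}^m$ for $m\le p$ and also settle your second point. With that condition:
\begin{itemize}
\item the diagonal terms and the pairs sharing a node contribute $O(n)$ to the variance;
\item the remaining quadruples contribute $O(n\sum_{d\ge1}\bar D_n(d)\lambda_{n,d})$;
\item after dividing by $n^2$, Assumption~\ref{decay} finishes the argument.
\end{itemize}
The paper's one-line proof has the same omission. It is hidden in the claim $|\mathcal{I}_i|\le|\mathcal{P}_n(i,1)|$ in the proof of Lemma~\ref{lln_main}, which is false for index sets built from $\mathcal{G}_{n,0}$.
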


\begin{proof}
Given that $\mathbf{r}_{n;i}=[\mathbf{x}_{n;i}, \varepsilon_{n;i}]$ and $\mathbf{z}_{n;i}$ can be divided into both network and nonnetwork components, the proof of this result is analogous to that of Corollary \ref{int_reg} (1) and (3). \hfill\end{proof}

\begin{corollary}[Finite Variance]
\label{finitevar}
Define $\mathbf{S}_{n} = \mathbf{Z}_{n}^{\top}\boldsymbol{\varepsilon}_{n}$ and $\mathbf{\Omega}_{n} = \emph{\text{var}}(n^{-1/2}\mathbf{S}_{n})$ and let Assumptions \ref{depas} to \ref{epmoments} hold, then as $n \to \infty$, $\mathbf{\Omega}_{n} \longrightarrow \mathbf{\Omega}_{N} <\infty$.
\end{corollary}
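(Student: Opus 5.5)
The plan is to expand $\mathbf{\Omega}_{n}=\text{var}(n^{-1/2}\mathbf{S}_{n})$ entrywise and regroup the double sum by network distance in $\mathcal{G}_{n}$. For positions $q,q'$ of $\mathbf{z}_{n;i}$,
\begin{equation*}
[\mathbf{\Omega}_{n}]_{q,q'}=\frac{1}{n}\sum_{d\geq 0}\sum_{i\in\mathcal{I}_{n}}\sum_{j\in\mathcal{P}_{n}(i,d)}\operatorname{cov}\left(z_{n;i,q}\varepsilon_{n;i},\,z_{n;j,q'}\varepsilon_{n;j}\right),
\end{equation*}
which is the sample analogue of $N^{-1}\sum_{d\geq 0}\mathbf{\Gamma}_{N}(d)$ in \eqref{omega}. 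The argument then has two parts. First, I show that this series is bounded uniformly in $n$ and converges absolutely. Second, I identify its limit with the population object, using the fact that the sample is a random subset of $\mathcal{I}_{N}$ whose induced subgraph inherits the dependence structure. Throughout I condition on $\{\mathcal{G}_{n}\}$, as in the rest of this appendix.

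For the first part, I write each summand as a covariance of products of the form $R_{n;i,j}=r_{n;i,q}r_{n;j,\ell}$, possibly weighted by $w_{n,0;i,j}\in[0,1]$ or by entries of $\mathbf{W}_{n,0}^{p}$. When $z_{n;i,q}$ is a network instrument such as $\mathbf{w}_{n,0;i}^{p}\mathbf{x}_{n,q}$, the product $z_{n;i,q}\varepsilon_{n;i}$ is a weighted sum $\sum_{k}w^{p}_{n,0;i,k}x_{n;k,q}\varepsilon_{n;i}$. Bilinearity therefore reduces every entry to weighted sums of $\operatorname{cov}(R_{n;k,i},R_{n;h,j})$. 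Lemma~\ref{bounds}, which uses Assumptions~\ref{depas} and \ref{moments}, bounds each such covariance by a constant times $\underline{\lambda}_{n,d}^{1-p_{f}-p_{g}}$, where $d$ is the $\mathcal{G}_{n}$-distance between the relevant index sets. The $d=0$ terms are variances, and Assumption~\ref{epmoments} together with Assumption~\ref{av_sparsity}\ref{clt1} bounds them by $C\cdot n$. Summing over the shells $\mathcal{P}_{n}(i,d)$ then gives
\begin{equation*}
\|\mathbf{\Omega}_{n}\|\leq C\Big(1+\sum_{d\geq 1}\bar{D}_{n}(d)\,\lambda_{n,d}^{1-p_{f}-p_{g}}\Big).
\end{equation*}
This is bounded by the summability implicit in Assumptions~\ref{decay} and \ref{av_sparsity}\ref{clt2}, the latter with $k=1$.

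For the second part, I split the distance sum at the bandwidth $m_{n}\to\infty$ from Assumption~\ref{av_sparsity}. The tail $d>m_{n}$ vanishes by Assumption~\ref{av_sparsity}\ref{clt3} and Assumption~\ref{depas}\ref{ea3}. For each fixed $d\leq m_{n}$, the term $n^{-1}\sum_{i}\sum_{j\in\mathcal{P}_{n}(i,d)}\mathbb{E}[\mathbf{z}_{n;i}\varepsilon_{n;i}\varepsilon_{n;j}\mathbf{z}_{n;j}^{\top}]$ is an average over randomly sampled nodes. By the sampling scheme, with nodes drawn without replacement from $\mathcal{I}_{N}$, its expectation matches $N^{-1}\mathbf{\Gamma}_{N}(d)$, and Lemma~\ref{lln_main} with $\mathcal{I}_{i}=\mathcal{P}_{n}(i,d)$ delivers the convergence. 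The means are removed because $\mathbb{E}[\mathbf{z}_{n;i}\varepsilon_{n;i}]=\mathbf{0}$ by Assumption~\ref{id}, so covariances equal cross-moments. Dominated convergence over $d$, justified by the uniform bound from the first part, then gives $\mathbf{\Omega}_{n}\to\mathbf{\Omega}_{N}$, and finiteness of $\mathbf{\Omega}_{N}$ is inherited from that bound.

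The main obstacle is the first part for the network instruments $\mathbf{W}_{n,0}^{p}\mathbf{X}_{n}$. Each such instrument pulls in nodes $k$ at $\mathcal{G}_{n,0}$-distance up to $p$ from $i$. However, $\psi$-dependence is measured in $\mathcal{G}_{n}$, so the distance between $\{k,i\}$ and $\{h,j\}$ can be much smaller than $d_{n}(i,j)$. I would first try to control this using the fact that the rows of $\mathbf{W}_{n,0}^{p}$ have bounded mass (row normalization, or bounded degree), and bound the number of such pairs at each $\mathcal{G}_{n}$-distance by the neighborhood counts $\bar{D}_{n}(d)$. If that is not enough, I would absorb the inflation into the constant via Assumption~\ref{av_sparsity}\ref{clt1}, which directly controls moments of $z_{n;i,q}\varepsilon_{n;i}$.
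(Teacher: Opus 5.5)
Your first part follows the same route as the paper: its proof reuses the variance bound \eqref{fvar}--\eqref{finite_var} from Lemma~\ref{lln_main} with suitable $R_{n;i,j}$ and $\mathcal{I}_i$. The step that is supposed to close it, however, does not follow from what you cite.

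Assumption~\ref{decay} gives only $\sum_{d\geq 1}\bar{D}_{n}(d)\lambda_{n,d}=o(n)$. That is what an LLN needs ($n^{-2}\operatorname{var}(S_n)\to 0$), not the $O(1)$ that boundedness of $n^{-1}\operatorname{var}(S_n)$ requires. It also concerns $\lambda_{n,d}$, whereas Lemma~\ref{bounds} leaves you with $\underline{\lambda}_{n,d}^{1-p_f-p_g}\geq\underline{\lambda}_{n,d}$.

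Assumption~\ref{av_sparsity}\ref{clt2} is normalized by $\sigma_{n,q}^{3}$, the very quantity you are bounding. It therefore cannot deliver $\sigma_{n,q}^{2}=O(n)$; it only becomes weaker if $\sigma_{n,q}^{2}$ grows faster than $n$. Moreover, Assumptions~\ref{id} and \ref{av_sparsity}, which you also use for the $d=0$ terms, the tail and the zero means, are not hypotheses of the corollary.

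What is missing is an explicit summability condition such as $\sup_{n}\sum_{d\geq 1}\bar{D}_{n}(d)\lambda_{n,d}^{1-p_f-p_g}<\infty$. The paper's appeal to \eqref{finite_var} tacitly needs the same.

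Your fallback for the network instruments fails for a related reason. A moment bound like \ref{clt1} controls the size of each covariance, not how many pairs $(i,j)$ have a non-negligible one. Because $z_{n;i,q}\varepsilon_{n;i}$ depends on $\mathbf{r}_{n;k}$ for every $k$ with $[\mathbf{W}_{n,0}^{p}]_{ik}\neq 0$, the relevant $\mathcal{G}_{n}$-distance is between these enlarged sets. The counting then needs bounds on both the row and the column sums of $\mathbf{W}_{n,0}^{p}$ (row normalization alone gives only the former), plus $\lambda$-summability over $\mathcal{G}_{n}$-shells around those $k$.

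Your second part does not establish convergence. Lemma~\ref{lln_main} controls the deviation of a random average from its mean. But $n^{-1}\sum_{i}\sum_{j\in\mathcal{P}_{n}(i,d)}\mathbb{E}[\mathbf{z}_{n;i}\varepsilon_{n;i}\varepsilon_{n;j}\mathbf{z}_{n;j}^{\top}]$ is already deterministic, so no LLN says anything about its behavior as $n$ grows. Also, once you condition on $\{\mathcal{G}_{n}\}$, as you do throughout, the sampled nodes are fixed too.

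The match with $N^{-1}\mathbf{\Gamma}_{N}(d)$ is also incorrect, for three reasons:
\begin{itemize}
\item $\mathcal{P}_{n}(i,d)$ is a shell of the induced subgraph, where paths through unsampled nodes are lost.
\item $\mathbf{z}_{n;i}$ is built from $\mathbf{W}_{n,0}$, not $\mathbf{W}_{N,0}$.
\item Even with population distances, sampling without replacement scales the expected $d\geq 1$ pair sums by $(n-1)/(N-1)$ relative to $N^{-1}\mathbf{\Gamma}_{N}(d)$.
\end{itemize}

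Two further steps do not hold as stated. Dominated convergence over $d$ needs an $n$-free summable majorant $b_d\geq\sup_{n}|\cdot|$, which a bound on the total does not supply. And Assumption~\ref{id}, stated after integrating $\mathcal{G}_{N}$ out, does not give $\mathbb{E}[\mathbf{z}_{n;i}\varepsilon_{n;i}]=\mathbf{0}$ conditionally on the endogenous network.

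The paper does not attempt this identification. Once finiteness is asserted, it simply takes $\mathbf{\Omega}_{N}$ to be the limit displayed in \eqref{omega}. With an added summability assumption, your argument yields only $\limsup_{n}\|\mathbf{\Omega}_{n}\|<\infty$. Convergence itself must either be assumed, as the paper effectively does, or derived from a stability condition. One example is convergence of the distance-$d$ covariance averages for each fixed $d$, together with a summable majorant.
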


\begin{proof}
As before, $n^{-1/2}\mathbf{S}_{n} \equiv n^{-1/2}\sum_{i=1}^{n} \mathbf{z}_{n;i}\varepsilon_{n;i}$. The bounded covariance assumptions of Lemma \ref{bounds} combined with the arguments of Lemma \ref{lln_main} guarantee that the following limit $\lim_{n \to\infty} n^{-1} \text{var}\left(\sum_{i=1}^{n} \mathbf{z}_{n;i}\varepsilon_{n;i}\right)$ is finite. In particular, from Equation \eqref{finite_var}, using the appropriate values for $R_{n;i,j}$ and $\mathcal{I}_{i}$ (see Corollary \ref{int_reg}), it follows that $\text{var}(\sum_{i=1}^{n} \mathbf{z}_{n;i}\varepsilon_{n;i}) = O_{p}(1)$. Given that $\mathbf{\Omega}_{n}$ converges to a finite quantity, it follows that $\mathbf{\Omega}_{n} {\longrightarrow} \mathbf{\Omega}_{N}$, where

\begin{equation*}
\mathbf{\Omega}_{N} = \lim_{n \to\infty} n^{-1}\left[\sum_{i=1}^{n} \text{var}(\mathbf{z}_{n;i}\varepsilon_{n;i}) + \sum_{i \neq j}\text{cov}(\mathbf{z}_{n;i}\varepsilon_{n;i}, \mathbf{z}_{n;j}\varepsilon_{n;j})\right] < \infty.
\end{equation*} \hfill\end{proof}

\begin{lemma}[Central Limit Theorem]
\label{clt}
Let Assumptions \ref{id} and \ref{depas}-\ref{av_sparsity} hold and define $S_{n}\equiv\sum_{i \in \mathcal{I}_{n}} z_{n;i,q}\varepsilon_{n;i}$, where $z_{n;i,q}$ is the $q$\emph{th} entrance of the vector $\mathbf{z}_{n;i}$. Then, by the definition of $\mathbf{z}_{n;i}$ and Assumption \ref{id}, $\mathbb{E}[z_{n;i,q}\varepsilon_{n;i}]=0$. As $n\to\infty$, 

$$
\sup _{t \in \mathbf{R}}\left|\mathbf{P}\left\{\frac{S_{n}}{\sigma_{n}} \leq t\right\}-\Phi(t)\right| {\longrightarrow}0,
$$
 
\noindent where $\sigma_{n}\equiv\text{var}(S_{n})$ and $\Phi(\cdot)$ denotes the cumulative distribution function of a standard normal random variable.
\end{lemma}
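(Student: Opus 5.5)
The plan is to reduce Lemma \ref{clt} to the central limit theorem for $\psi$-dependent triangular arrays in \cite{Kojevnikov2020} (their Theorem 3.2), applied to the scalar array $Y_{n;i}\equiv z_{n;i,q}\varepsilon_{n;i}$. Throughout we work conditionally on the network sequence $\{\mathcal{G}_n\}$, as stated at the start of this appendix. Centering comes first. The $q$th entry of $\mathbf{z}_{n;i}$ is either $x_{n;i,k}$, the constant, or $\mathbf{w}_{n,0;i}^{p'}\mathbf{x}_{n,k}$ for some $1\le p'\le p$. In each case Assumption \ref{id} gives $\mathbb{E}[Y_{n;i}]=0$, so $S_n/\sigma_n$ is a normalized sum of mean-zero terms. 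No further centering correction is needed.

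The main step, and the one I expect to be the obstacle, is to verify that $\{Y_{n;i}\}$ is itself $\psi$-dependent, with coefficients comparable to $\lambda_{n,d}$ and distances measured in $\mathcal{G}_n$. The difficulty is that $Y_{n;i}$ is a product, and for network instruments it is a weighted sum $\sum_j w^{p'}_{n,0;i,j}x_{n;j,k}\varepsilon_{n;i}$ of products $R_{n;j,i}=r_{n;j,k}r_{n;i,K+1}$. Products of $\psi$-dependent variables are not automatically $\psi$-dependent.

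I would handle this with the truncation device already used in Lemma \ref{bounds}. Replace each $r_{n;i}$ by $\varphi_{K}(r_{n;i})$. For bounded Lipschitz test functions of the truncated products, Assumption \ref{depas}\ref{ea2} then yields a covariance bound of the form $C\,ab(\|f\|_\infty+\mathrm{Lip}(f))(\|g\|_\infty+\mathrm{Lip}(g))\lambda_{n,d}$. The truncation remainder is controlled by the $p>4$ moment bound in Assumption \ref{av_sparsity}\ref{clt1} together with Assumption \ref{moments}. This gives an effective coefficient of order $\lambda_{n,d}^{1-c/p}$, which is exactly the form appearing in Assumption \ref{av_sparsity}\ref{clt2}--\ref{clt3}.

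One further point needs care: $Y_{n;i}$ involves the $x_{n;j}$ of $i$'s neighbours in $\mathcal{G}_{n,0}$, not only the variables at node $i$. I would argue that, because dependence is generated only through $\mathcal{G}_n$, the index set of $Y_{n;i}$ can be treated as a finite block around $i$. Its cardinality enters through the factor $ab$ in \ref{ea2}, and these factors are absorbed into the constants.

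With $\psi$-dependence of $\{Y_{n;i}\}$ established, the remaining hypotheses of Kojevnikov et al.'s CLT match our assumptions directly:
\begin{itemize}
\item the uniform $L^p$ bound for some $p>4$ is Assumption \ref{av_sparsity}\ref{clt1};
\item the density/dependence trade-off, $\frac{n}{\sigma_{n,q}^{2+k}}\sum_{d}c_n(d,m_n,k)\lambda_{n,d}^{1-(2+k)/p}\to0$ for $k=1,2$, is Assumption \ref{av_sparsity}\ref{clt2};
\item the tail condition $n^2\lambda_{n,m_n}^{1-1/p}/\sigma_{n,q}\to0$ is Assumption \ref{av_sparsity}\ref{clt3}.
\end{itemize}
Their Stein-method argument then shows that the Kolmogorov distance between the law of $S_n/\sigma_n$ and $\Phi$ tends to zero. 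Here $\sigma_n$ must be read as $\sqrt{\mathrm{var}(S_n)}$, and its order is controlled through Corollary \ref{finitevar}. Finally, applying the result to arbitrary fixed linear combinations $\mathbf{c}^\top\mathbf{z}_{n;i}\varepsilon_{n;i}$, which are again mean zero and satisfy the same bounds, gives the Cramér–Wold extension to the vector $n^{-1/2}\mathbf{Z}_n^\top\boldsymbol{\varepsilon}_n\overset{d}{\to}\mathcal{N}(\mathbf{0},\mathbf{\Omega}_N)$ that the proof of Theorem \ref{t2} uses.
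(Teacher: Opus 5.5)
Your overall route matches the paper's. Its proof sets $Y_{n;i}=z_{n;i,q}\varepsilon_{n;i}$, cites Lemma~\ref{bounds} for covariance bounds, and applies Lemmas A.2--A.3 of Kojevnikov, Marmer and Song. The problem is the step you flag as the main obstacle: your resolution of it does not work.

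For a network instrument, $Y_{n;i}=\varepsilon_{n;i}\sum_j w^{p'}_{n,0;i,j}x_{n;j,k}$ is a function of $\mathbf{r}_{n,A_i}$. Here $A_i$ is $i$ together with its $\mathcal{G}_{n,0}$-neighbours within $p'$ steps, so it is a block around $i$ in the \emph{seatmate} network. By contrast, $\psi$-dependence (Assumption~\ref{depas}) and every density quantity in Assumption~\ref{av_sparsity} are measured in $\mathcal{G}_n$. Because dependence runs only through $\mathcal{G}_n$, members of $A_i$ can be arbitrarily far from $i$ in that metric. So $d_n(A_i,A_h)$ is not bounded below by any function of $d_n(i,h)$, and $\{Y_{n;i}\}$ need not be $\psi$-dependent in $\mathcal{G}_n$ at all.

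Here is a concrete counterexample. Take $K=1$, $p'=1$ and an alternating four-cycle $i$--$j$--$h$--$s$--$i$:
\begin{itemize}
\item $\{i,j\}$ and $\{h,s\}$ are seatmate links;
\item $\{i,s\}$ and $\{h,j\}$ are study-partner links;
\item the pairs $\{i,s\}$ and $\{h,j\}$ are otherwise far apart, or disconnected, in $\mathcal{G}_n$.
\end{itemize}
Then $Y_{n;i}=x_j\varepsilon_i$ and $Y_{n;h}=x_s\varepsilon_h$, and $\mathbb{E}[x_j\varepsilon_i]=0$ by Assumption~\ref{id}. Regrouping the product by $\mathcal{G}_n$-proximity gives $\operatorname{cov}(Y_{n;i},Y_{n;h})=\mathbb{E}[x_s\varepsilon_i]\,\mathbb{E}[x_j\varepsilon_h]$, up to a term that vanishes with $d_n(\{i,s\},\{h,j\})$. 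Each factor is the correlation between an agent's error and a study partner's regressor. That is exactly the endogeneity the model permits, so this covariance stays bounded away from zero while $d_n(i,h)$ is arbitrarily large.

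There is a second problem with the same step. $|A_i|$ is a $\mathcal{G}_{n,0}$-neighbourhood size, and no assumption bounds it. So the factor $ab$ in Assumption~\ref{depas}\ref{ea2} cannot simply be absorbed into constants.

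Closing the gap requires extra structure. One option is to measure distance in the union network where $i\sim h$ if they are linked in $\mathcal{G}_n$ or within $p$ steps in $\mathcal{G}_{n,0}$; blocks at union-distance $d$ are then at $\mathcal{G}_n$-distance at least $d-2$. You would also need to bound $p$-step $\mathcal{G}_{n,0}$-degrees and impose Assumption~\ref{av_sparsity} for that union network. The paper's proof, which only cites Lemma~\ref{bounds} and the Kojevnikov--Marmer--Song lemmas, does not address this point either, so following it will not supply the missing argument.

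A smaller mismatch: transferring $\psi$-dependence from $\mathbf{r}_{n;i}$ to the unbounded products $Y_{n;i}$ by truncation only yields coefficients of order $\lambda_{n,d}^{\alpha}$ with $\alpha<1$, and a $\psi$-functional not of the form in \ref{ea2}. The hypotheses of their CLT would then involve $\lambda_{n,d}^{\alpha}$ rather than $\lambda_{n,d}$. So Assumption~\ref{av_sparsity}\ref{clt2}--\ref{clt3} as written is not the exact match you claim.
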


\begin{proof}
Let $Y_{n;i}=z_{n;i,q}\varepsilon_{n;i}$. From Lemma \ref{bounds}, the covariance of any two $Y_{n;i}$ and $Y_{n;j}$ is bounded. The proof then follows from applying the unconditional version of Lemmas A.2 and A.3 in \citet[][Appendix A, pp. 899-907]{Kojevnikov2020} to $Y_{n;i}$ and $S_{n}/\sigma_{n}$, respectively.   \hfill\end{proof}

\begin{lemma}[Multivariate Central Limit Theorem]
\label{mclt}
Let Assumptions \ref{id} and \ref{depas}-\ref{av_sparsity} hold. Then, as $n\to \infty$, $n^{-1/2}\sum_{i=1}^{n} \mathbf{z}_{n;i}\varepsilon_{n;i} \stackrel{\text { d }}{\longrightarrow} \mathcal{N}(0,\Omega_{N})$.
\end{lemma}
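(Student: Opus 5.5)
The multivariate statement will be reduced to the scalar Lemma \ref{clt} through the Cramér--Wold device. Let $L=(p+1)K+1$ denote the dimension of $\mathbf{z}_{n;i}$, and fix an arbitrary nonzero vector $\mathbf{c}\in\mathbb{R}^{L}$. Define $Y_{n;i}^{\mathbf{c}}\equiv \mathbf{c}^{\top}\mathbf{z}_{n;i}\varepsilon_{n;i}$ and $S_{n}^{\mathbf{c}}\equiv\sum_{i\in\mathcal{I}_{n}}Y_{n;i}^{\mathbf{c}}$. It then suffices to show
\[
n^{-1/2}S_{n}^{\mathbf{c}}\overset{d}{\to}\mathcal{N}(0,\mathbf{c}^{\top}\mathbf{\Omega}_{N}\mathbf{c})
\]
for every such $\mathbf{c}$.

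\textbf{Step 1 (hypotheses of the scalar CLT for $Y_{n;i}^{\mathbf{c}}$).} Since $\mathbf{c}$ is fixed and finite-dimensional, Minkowski's inequality gives
\[
\|Y_{n;i}^{\mathbf{c}}\|_{p}\le \sum_{q}|c_{q}|\,\|z_{n;i,q}\varepsilon_{n;i}\|_{p},
\]
so Assumption \ref{av_sparsity}\ref{clt1} carries over uniformly in $i$ and $n$. Assumption \ref{id} yields $\mathbb{E}[Y_{n;i}^{\mathbf{c}}]=0$. Each $Y_{n;i}^{\mathbf{c}}$ is a finite linear combination of products of components of $\mathbf{r}_{n;\cdot}$, namely $x\varepsilon$ and $(\mathbf{w}_{n,0;i}^{p}\mathbf{x})\varepsilon$. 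The covariance bound of Lemma \ref{bounds} therefore applies term by term, and bilinearity gives $|\operatorname{cov}(Y_{n;i}^{\mathbf{c}},Y_{n;j}^{\mathbf{c}})|\le C_{\mathbf{c}}\,\underline{\lambda}_{n,d}^{1-\cdot}$ for $d_{n}(i,j)=d$. This is exactly the input used in the proof of Lemma \ref{clt}, so the argument there (Lemmas A.2--A.3 of \citealp{Kojevnikov2020}) goes through with $z_{n;i,q}$ replaced by $\mathbf{c}^{\top}\mathbf{z}_{n;i}$.

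\textbf{Step 2 (the variance rates in Assumption \ref{av_sparsity}\ref{clt2}--\ref{clt3}).} This is the step I expect to be the main obstacle. Those conditions are stated for $\sigma_{n,q}^{2}$, one coordinate at a time, whereas the CLT for $S_{n}^{\mathbf{c}}$ needs them with $\sigma_{n,\mathbf{c}}^{2}=\operatorname{var}(S_{n}^{\mathbf{c}})=n\,\mathbf{c}^{\top}\mathbf{\Omega}_{n}\mathbf{c}$.

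I would first invoke Corollary \ref{finitevar}, which gives $\mathbf{\Omega}_{n}\to\mathbf{\Omega}_{N}$. I would then argue that $\mathbf{\Omega}_{N}$ is positive definite, since this is implicitly required for the efficient weighting $\mathbf{A}_{N}=\mathbf{\Omega}_{N}^{-1}$ in Theorem \ref{t2}. If positive definiteness is not available, I would make it an explicit regularity condition. With $\mathbf{\Omega}_{N}$ positive definite, both $\sigma_{n,\mathbf{c}}^{2}$ and each $\sigma_{n,q}^{2}$ are of exact order $n$. Hence $\sigma_{n,\mathbf{c}}^{2}\ge c_{0}\,\max_{q}\sigma_{n,q}^{2}$ for large $n$, and conditions \ref{clt2}--\ref{clt3} transfer to $\sigma_{n,\mathbf{c}}$ up to constants. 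The degenerate case $\mathbf{c}^{\top}\mathbf{\Omega}_{N}\mathbf{c}=0$ is handled separately: there $n^{-1/2}S_{n}^{\mathbf{c}}\to 0$ in $L^{2}$ by Chebyshev's inequality, which agrees with the degenerate normal limit.

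\textbf{Step 3 (conclusion).} The scalar result gives $S_{n}^{\mathbf{c}}/\sigma_{n,\mathbf{c}}\overset{d}{\to}\mathcal{N}(0,1)$. Since $\sigma_{n,\mathbf{c}}/\sqrt{n}\to(\mathbf{c}^{\top}\mathbf{\Omega}_{N}\mathbf{c})^{1/2}$ by Corollary \ref{finitevar}, Slutsky's lemma yields $n^{-1/2}S_{n}^{\mathbf{c}}\overset{d}{\to}\mathcal{N}(0,\mathbf{c}^{\top}\mathbf{\Omega}_{N}\mathbf{c})$. Because $\mathbf{c}$ was arbitrary, the Cramér--Wold device delivers $n^{-1/2}\sum_{i}\mathbf{z}_{n;i}\varepsilon_{n;i}\overset{d}{\to}\mathcal{N}(\mathbf{0},\mathbf{\Omega}_{N})$.
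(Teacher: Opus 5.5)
Your proposal is correct and follows the paper's route: the Cram\'er--Wold device applied to the scalar CLT of Lemma~\ref{clt}, with Corollary~\ref{finitevar} identifying the limit variance. It is more careful than the paper's one-line proof, which only invokes coordinate-wise CLTs and leaves implicit the passage to linear combinations $\mathbf{c}^{\top}\mathbf{z}_{n;i}\varepsilon_{n;i}$, the transfer of the rate conditions in Assumption~\ref{av_sparsity} from $\sigma_{n,q}$ to $\sigma_{n,\mathbf{c}}$, and the Slutsky renormalization.

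You can drop the positive-definiteness assumption on $\mathbf{\Omega}_{N}$. Whenever $\mathbf{c}^{\top}\mathbf{\Omega}_{N}\mathbf{c}>0$, you have $\sigma_{n,\mathbf{c}}^{2}\asymp n$, while $\sigma_{n,q}^{2}=n[\mathbf{\Omega}_{n}]_{qq}=O(n)$ by Corollary~\ref{finitevar}, so $\sigma_{n,\mathbf{c}}\gtrsim\sigma_{n,q}$ holds anyway. Your Chebyshev argument already covers the directions with $\mathbf{c}^{\top}\mathbf{\Omega}_{N}\mathbf{c}=0$.
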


\begin{proof}
From Lemma \ref{clt}, it follows that $n^{-1/2}\sum_{i=1}^{n}z_{n;i,q}\varepsilon_{n;i}\stackrel{d}{\longrightarrow} \mathcal{N}(0,\sigma_{n}^{2})$, while from Lemma \ref{finitevar}, it follows that $\mathbf{\Omega}_{N}$ exists. Therefore, the result follows from an application of the Cram\'{e}r-Wold device.  \hfill\end{proof}

%% file: supp_C.tex
\subsection{Data Description }\label{descrip}

Our data set was collected between March and May 2011 as part of the Hong Kong Secondary Education Survey in Hong Kong (SESHK). The survey was conducted in the second semester before the final exams and involved three secondary schools with 868 students participating. The sample includes 7th-grade students from all three schools and 8th- and 9th-grade students from one school (g $\in\{7,8,9\}$). Each grade within a school is made up of five different sections (cl $\in\{1,\ldots,5\}$).

Table \ref{tables:table1} shows the summary statistics for the variables we use in our empirical application. \texttt{Math test} corresponds to the first math exam score for each student $i$. The data set also includes information on a cognitive ability test on five personality measures: \texttt{Agreeableness}, \texttt{Conscientiousness}, \texttt{Extraversion}, \texttt{Neuroticism} and \texttt{Openness}.\footnote{The numbers in parenthesis in Scale column of Table \ref{tables:table1} represent the scales for the tests.} In our empirical application, we also include the following variables: \texttt{Male} that equals 1 if the student is male, and 0 otherwise. The \texttt{Height} for each student is measured in centimeters (cm) and the \texttt{Weight} in kilograms (kg). Both the \texttt{number of elder and younger siblings} are count variables, and \texttt{ commute by car / taxi} equals 1 when a student goes to school by car or by taxi. 

We also include indicator variables capturing students' engagement at school. For example, the indicator variables \texttt{Siblings' Help} and \texttt{Parents' Help} take the value of 1 if students receive help from their siblings or parents. We label those variables as 0 otherwise. To capture extracurricular school-related activities, we include the indicator variable \texttt{Music}, which equals 1 if students play music, and 0 otherwise.

\subsection*{Seatmate and Student Partner Networks}

In the survey, students were asked to write lists of up to ten peers from among their schoolmates within the same grade with whom they discussed their problems with schoolwork and who sat next to them in class during the first semester. We use this information to build the study partner and seatmate networks using the following reciprocal peer rule: If students $i$ and $j$ named each other as study partners in the survey, we record an edge in the study partner network. We follow the same process for the seatmate network. Table \ref{tables:table2} reports the summary statistics of the network among all students by school.

Seat assignments in the classrooms change several times over a semester, and the changes are decided by the class teacher. Unlike study partners, the seatmate network is based on proximity and imposed by the school. Therefore, the seatmate network is an excellent candidate to be used as the instrumental network $\mathbf{W}_{n,0}$, in our analysis. On the other hand, students can freely choose with whom they study and this decision could be based on unobservable characteristics that also affect exam performance, making the study partner network, $\mathbf{W}_{n}$, likely endogenous. 
                                               
As mentioned in Section \ref{emp} of the main manuscript, we follow the Linear-in-Means model for social effects. To improve readability, we rewrite the equation \ref{emp_eq1} below, 

\begin{eqnarray*}
\texttt{math}_{i,\text{s}\times\text{g}\times\text{cl}} &=& \alpha + \beta \sum_{j\neq i}^n w_{n;i,j}\texttt{math}_{j,\text{s}\times\text{g}\times\text{cl}} \\ \nonumber
       & & {} +  \sum_{j\neq i}^n w_{n;i,j}\texttt{characteristics}_{j,\text{s}\times\text{g}\times\text{cl}}^{\prime}\delta_{\texttt{characteristics}} \\ \nonumber
       & & {} +  \sum_{j\neq i}^n w_{n;i,j}\texttt{personality}_{j,\text{s}\times\text{g}\times\text{cl}}^{\prime}\delta_{\texttt{personality}} \\ \nonumber
       & & {} + \texttt{characteristics}_{i,\text{s}\times\text{g}\times\text{cl}}^{\prime}\gamma_{\texttt{characteristics}}
       +\texttt{personality}_{i,\text{s}\times\text{g}\times\text{cl}}^{\prime}\gamma_{\texttt{personality}} \\ \nonumber
       & & {} + \sum_{\text{s}=1}^{3}\sum_{\text{g}=7}^{9}\sum_{\text{cl}=1}^{5}f_{\text{s}\times\text{g}\times\text{cl}}\times\mathbb{I}\{i\in{\text{s}\times\text{g}\times\text{cl}}\} + \epsilon_i\text{,} 
\end{eqnarray*}

Estimations results are shown in Table \ref{tables:table3}, in Section \ref{emp} in the main text, and Table \ref{tables:table4} (in this section). As mentioned in the manuscript,  estimators that do not control for network endogeneity tend to underestimate peer effects.  In the context of our simulation results, one possible explanation for the negative bias is the existence of unobserved homophily. If the unobserved variables driving the choice of study partners are negatively correlated with the outcome, we expect the endogeneity bias to underestimate the actual peer effects value. Students can choose to study with others they find fun for reasons other than learning the test material. If students select study partners that can distract them from schoolwork, estimators that do not take that sorting process into account can be downward biased. These results suggest that policies that strengthen collaboration between students within and outside the classroom can generate benefits that have the potential to generate positive social multipliers. All results are qualitatively robust to different choices of $p$ and $D_n$; see Section \ref{supp_estimates_output} below.

\subsection{Supplementary Estimation Results}\label{supp_estimates_output}

For robustness purposes, Tables \ref{tables:table15}-\ref{tables:table20} show the empirical estimation of model (6.1) with the kernel Tukey-Hanning, constant $C \in\{1.5,1.6,1.7\}$, and $p \in \{3,4,5\}$.

\subsection{Assessing Assumptions}\label{validate_test}

To validate the Assumption \ref{excrest}, we perform a Least Square (LS) regression that includes all variables in equation \ref{emp_eq1} and our proposed instruments.  Namely, we estimate the following equation.

\begin{eqnarray}\label{eqn_exrest}
   \texttt{math} & = & \beta\textbf{W}_n\texttt{math}\text{ }  + \text{ }\texttt{personality}\text{ }\boldsymbol{\gamma}_{p} + \text{ }\texttt{characteristics}\text{ }\boldsymbol{\gamma}_{ch}\text{ } \\ \nonumber
    & &{}+{}\textbf{W}_n\text{ }\texttt{characteristics}\text{ }\boldsymbol{\delta} + \textbf{W}_{n,0}\text{ }\texttt{characteristics}\text{ }\boldsymbol{\delta}_1\text{ }\\ \nonumber 
    & & {}+{}\textbf{W}^2_{n,0}\text{ }\texttt{characteristics}\text{ }\boldsymbol{\delta}_{2}+\text{error,}
\end{eqnarray}

\noindent where $\texttt{characteristics}$, $\texttt{personality}$, and the adjacency matrices $\textbf{W}_n$, $\textbf{W}_{n,0}$ are defined above. The estimation results for equation \eqref{eqn_exrest} are shown in Table \ref{tables:table14}. They suggest no correlation between the output variable $\texttt{math}$ and the vectors $\textbf{W}_{n,0}\text{ }\texttt{characteristics}$ and $\textbf{W}^2_{n,0}\text{ }\texttt{characteristics}$, which are the proposed instruments. The estimated coefficients $\boldsymbol{\delta}_1$ and $\boldsymbol{\delta}_2$ are statistically insignificant\footnote{The estimated coefficient of $\textbf{W}^2_{n,0}$$\text{ln(Weight)}$  is significant at 10\%.}, which allow us to use the exogenous variation embodied in $\textbf{W}_{n,0}$ to identify the parameters of the linear model \eqref{emp_eq1}. 

To validate Assumption \ref{relevance}, we run a series of LS regressions in which the outcome is the endogenous variable $\textbf{W}_n\texttt{math}$. Namely, we estimate the following models.

\begin{eqnarray}\label{eqn_WY1}
\textbf{W}_n\texttt{math} &=& \textbf{W}_{n,0}\text{ }\texttt{characteristics}\text{ }\boldsymbol{\delta}+\textbf{W}_{n,0}\text{ }\texttt{personality}\text{ }\boldsymbol{\gamma}+\text{error,}\\
\textbf{W}_n\texttt{math} &=& \textbf{W}^2_{n,0}\text{ }\texttt{characteristics}\text{ }\boldsymbol{\delta}+\textbf{W}^2_{n,0}\text{ }\texttt{personality}\text{ }\boldsymbol{\gamma}+\text{error,}\\
\textbf{W}_n\texttt{math} &=& \textbf{W}_{n,0}\text{ }\texttt{characteristics}\text{ }\boldsymbol{\delta}+\textbf{W}_{n,0}\text{ }\texttt{personality}\text{ }\boldsymbol{\gamma}\\ \nonumber 
& & {}+\textbf{W}^2_{n,0}\text{ }\texttt{characteristics}\text{ }\boldsymbol{\delta}+\textbf{W}^2_{n,0}\text{ }\texttt{personality}\text{ }\boldsymbol{\gamma}+\text{error.}
\end{eqnarray}

The estimation results for these specifications are shown in Table \ref{tables:table5}. $F$-statistics suggest that our proposed instruments are relevant to describe the endogenous variable. Furthermore, we also perform LS regressions in which the dependent variable $\textbf{W}_n{\textbf{x}}$ is the average of characteristic s$\textbf{x}$ between the study partners, which can be any of the characteristics mentioned above, for example height, weight, siblings help, parents help, commute to school by car or taxi; playing music; and whether the student is male. 

\begin{eqnarray}\label{eqn_WY2}
\textbf{W}_n\textbf{x} &=& \textbf{W}_{n,0}\text{ }\texttt{characteristics}_{-\textbf{x}}\text{ }\boldsymbol{\delta}+\textbf{W}_{n,0}\text{ }\texttt{personality}\text{ }\boldsymbol{\gamma}+\text{error,}\\
\textbf{W}_n\textbf{x} &=& \textbf{W}^2_{n,0}\text{ }\texttt{characteristics}_{-\textbf{x}}\text{ }\boldsymbol{\delta}+\textbf{W}^2_{n,0}\text{ }\texttt{personality}\text{ }\boldsymbol{\gamma}+\text{error,}\\
\textbf{W}_n\textbf{x} &=& \textbf{W}_{n,0}\text{ }\texttt{characteristics}_{-\textbf{x}}\text{ }\boldsymbol{\delta}+\textbf{W}_{n,0}\text{ }\texttt{personality}\text{ }\boldsymbol{\gamma}+\text{error,}\\ \nonumber 
& & {}+\textbf{W}^2_{n,0}\text{ }\texttt{characteristics}_{-\textbf{x}}\text{ }\boldsymbol{\delta}+\textbf{W}^2_{n,0}\text{ }\texttt{personality}\text{ }\boldsymbol{\gamma}+\text{error.}
\end{eqnarray}

Here, $\texttt{characteristics}_{-\textbf{x}}$ means that all characteristics have been included except $\textbf{x}$, used in $\textbf{W}_n\textbf{x}$. Tables \ref{tables:table7}-\ref{tables:table13} show OLS estimates, and $F$-statistics also suggest that our proposed instruments are relevant to describe endogenous variables $\textbf{W}_n\textbf{x}$.

Finally, we present a network architecture that integrates students based on study partnerships and shared seating arrangements.  Specifically, we establish a new set of connections, denoted $\textbf{W}_{n,*}$, between students $i$ and $k$ following a defined rule: If students $i$ and $j$ mutually identified each other as study partners in the survey and students $j$ and $k$ reciprocated as seatmates, we establish an edge between $i$ and $k$. This connection is formed when neither students $i$ and $k$ are study partners, nor are students $j$ and $k$. Table \ref{tables:table3} reports the summary statistics of this new network in the column named Extra. Based on this new set of connections, we rewrite the previous equations as follows.

\begin{eqnarray}\label{eqn_WY3}
\textbf{W}_n\texttt{math} &=& \textbf{W}_{n,*}\text{ }\texttt{characteristics}\text{ }\boldsymbol{\delta}+\textbf{W}_{n,*}\text{ }\texttt{personality}\text{ }\boldsymbol{\gamma}+\text{error,}\\
\textbf{W}_n\texttt{math} &=& \textbf{W}^2_{n,*}\text{ }\texttt{characteristics}\text{ }\boldsymbol{\delta}+\textbf{W}^2_{n,*}\text{ }\texttt{personality}\text{ }\boldsymbol{\gamma}+\text{error,}\\
\textbf{W}_n\texttt{math} &=& \textbf{W}_{n,*}\text{ }\texttt{characteristics}\text{ }\boldsymbol{\delta}+\textbf{W}_{n,*}\text{ }\texttt{personality}\text{ }\boldsymbol{\gamma}+\text{error,}\\ \nonumber 
& & {}+\textbf{W}^2_{n,*}\text{ }\texttt{characteristics}\text{ }\boldsymbol{\delta}+\textbf{W}^2_{n,*}\text{ }\texttt{personality}\text{ }\boldsymbol{\gamma}+\text{error.}
\end{eqnarray}

The LS estimates for all these three specifications are shown in Table \ref{tables:table6}.  $F$-Statistics suggest that our proposed instruments are relevant to describe the endogenous variable.

\begingroup
\setstretch{1.8}
\begin{landscape}
\begin{table}[h!]
\vspace{-6.5em}
\setlength{\tabcolsep}{3pt} 
\centering
    \begin{threeparttable}
        \caption{Summary Statistics}
        \vspace{0.1em}
        \small
        \input{tables/tables_rev/table1}
        \label{tables:table1}
            \begin{tablenotes}
               \linespread{1}\footnotesize
                \item Note: Descriptive statistics such as sample mean (Mean), standard deviation (SD), minimum (Min), maximum (Max) and sample size ($n$) are presented here for all variables and each school. Course grades, personality trait measures, and cognitive ability tests are scored on the scale indicated.
            \end{tablenotes}
    \end{threeparttable}
    
\vspace{-1.5em}

    \begin{threeparttable}
        \caption{Summary Network Statistics}
        \vspace{0.1em}
        \small
        \input{tables/tables_rev/table2}
        \label{tables:table2}
            \begin{tablenotes}
                \linespread{1}\footnotesize
                \item Note: The degree is multiplied by 100 to increase the scale.
            \end{tablenotes}
    \end{threeparttable}
\end{table}
\end{landscape}
\endgroup

\newpage

\setstretch{1.8}

\newpage
\begingroup
\setstretch{1.8}
\begin{landscape}
\begin{table}[h!]
 \vspace{-5.5em}
\def\arraystretch{1.00}
\setlength{\tabcolsep}{15pt} 
\centering
    \begin{threeparttable}
        \caption{Estimations results, cont.} 
        \vspace{0.1em}
        \footnotesize
        \input{tables//tables_rev/table4}
        \label{tables:table4}
        \begin{tablenotes}[para,flushleft]
            \footnotesize
            \raggedright
            \item{Note:} \text{*} \(p<0.10\), \text{**}   \(p<0.05\), \text{***}  \(p<0.01\)
        \end{tablenotes}
    \end{threeparttable}
\end{table}
\end{landscape}
\endgroup

\newpage
\begingroup
\setstretch{1.8}

\begin{table}[h!]
 \vspace{-5.5em}
\def\arraystretch{1.00}
\setlength{\tabcolsep}{15pt} 
\centering
    \begin{threeparttable}
        \caption{Estimations results with $p=3$} 
        \vspace{0.1em}
        \footnotesize
        \input{tables//tables_rev/table15}
        \label{tables:table15}
        \begin{tablenotes}[para,flushleft]
            \footnotesize
            \raggedright
            \item{Note:} (i) \text{*} \(p<0.10\), \text{**}   \(p<0.05\), \text{***}  \(p<0.01\); (ii) Standard errors are in parentheses. (iii) $\dagger$ These regressors are measured as the deviation of students' personality from their peers' average.
        \end{tablenotes}
    \end{threeparttable}
\end{table}
\endgroup

\newpage
\begingroup
\setstretch{1.8}
\begin{landscape}
\begin{table}[h!]
 \vspace{-5.5em}
\def\arraystretch{1.00}
\setlength{\tabcolsep}{15pt} 
\centering
    \begin{threeparttable}
        \caption{Estimations results with $p=3$, cont.} 
        \vspace{0.1em}
        \footnotesize
        \input{tables//tables_rev/table16}
        \label{tables:table16}
        \begin{tablenotes}[para,flushleft]
            \footnotesize
            \raggedright
            \item{Note:} \text{*} \(p<0.10\), \text{**}   \(p<0.05\), \text{***}  \(p<0.01\)
        \end{tablenotes}
    \end{threeparttable}
\end{table}
\end{landscape}
\endgroup

\newpage
\begingroup
\setstretch{1.8}

\begin{table}[h!]
 \vspace{-5.5em}
\def\arraystretch{1.00}
\setlength{\tabcolsep}{15pt} 
\centering
    \begin{threeparttable}
        \caption{Estimations results with $p=4$} 
        \vspace{0.1em}
        \footnotesize
        \input{tables//tables_rev/table17}
        \label{tables:table17}
        \begin{tablenotes}[para,flushleft]
            \footnotesize
            \raggedright
            \item{Note:} (i) \text{*} \(p<0.10\), \text{**}   \(p<0.05\), \text{***}  \(p<0.01\); (ii) Standard errors are in parentheses. (iii) $\dagger$ These regressors are measured as the deviation of students' personality from their peers' average.
        \end{tablenotes}
    \end{threeparttable}
\end{table}
\endgroup

\newpage
\begingroup
\setstretch{1.8}
\begin{landscape}
\begin{table}[h!]
 \vspace{-5.5em}
\def\arraystretch{1.00}
\setlength{\tabcolsep}{15pt} 
\centering
    \begin{threeparttable}
        \caption{Estimations results with $p=4$, cont.} 
        \vspace{0.1em}
        \footnotesize
        \input{tables//tables_rev/table18}
        \label{tables:table18}
        \begin{tablenotes}[para,flushleft]
            \footnotesize
            \raggedright
            \item{Note:} \text{*} \(p<0.10\), \text{**}   \(p<0.05\), \text{***}  \(p<0.01\)
        \end{tablenotes}
    \end{threeparttable}
\end{table}
\end{landscape}
\endgroup

\newpage
\begingroup
\setstretch{1.8}

\begin{table}[h!]
 \vspace{-5.5em}
\def\arraystretch{1.00}
\setlength{\tabcolsep}{15pt} 
\centering
    \begin{threeparttable}
        \caption{Estimations results with $p=5$} 
        \vspace{0.1em}
        \footnotesize
        \input{tables//tables_rev/table19}
        \label{tables:table19}
        \begin{tablenotes}[para,flushleft]
            \footnotesize
            \raggedright
            \item{Note:} (i) \text{*} \(p<0.10\), \text{**}   \(p<0.05\), \text{***}  \(p<0.01\); (ii) Standard errors are in parentheses. (iii) $\dagger$ These regressors are measured as the deviation of students' personality from their peers' average.
        \end{tablenotes}
    \end{threeparttable}
\end{table}
\endgroup

\newpage
\begingroup
\setstretch{1.8}
\begin{landscape}
\begin{table}[h!]
 \vspace{-5.5em}
\def\arraystretch{1.00}
\setlength{\tabcolsep}{15pt} 
\centering
    \begin{threeparttable}
        \caption{Estimations results with $p=5$, cont.} 
        \vspace{0.1em}
        \footnotesize
        \input{tables//tables_rev/table20}
        \label{tables:table20}
        \begin{tablenotes}[para,flushleft]
            \footnotesize
            \raggedright
            \item{Note:} \text{*} \(p<0.10\), \text{**}   \(p<0.05\), \text{***}  \(p<0.01\)
        \end{tablenotes}
    \end{threeparttable}
\end{table}
\end{landscape}
\endgroup

\newpage
\begingroup
\setstretch{1.8}
\begin{landscape}
\begin{table}[h!]
\label{exclusion_ev}
 \vspace{-5.5em}
\def\arraystretch{1}
\setlength{\tabcolsep}{15pt} 
\centering
    \begin{threeparttable}
        \caption{LS Estimation results of the model $y = \beta\mathbf{W}y + \boldsymbol{\delta}X + \boldsymbol{\gamma}\mathbf{W}X + \boldsymbol{\theta}_1\mathbf{W}_0 X + \boldsymbol{\theta}_2\mathbf{W}_0^2 X$}
        \vspace{0.1em}
        \footnotesize
        \input{tables//tables_rev/table14}
        \label{tables:table14}
        \begin{tablenotes}[para,flushleft]
            \footnotesize
            \raggedright
            \item{Note:} \text{*} \(p<0.10\), \text{**}   \(p<0.05\), \text{***}  \(p<0.01\)
        \end{tablenotes}
    \end{threeparttable}
\end{table}
\end{landscape}
\endgroup

\newpage
\begingroup
\setstretch{1.8}
\begin{landscape}
\begin{table}[h!]
 \vspace{-5.5em}
\def\arraystretch{1.25}
\setlength{\tabcolsep}{15pt} 
\centering
    \begin{threeparttable}
        \caption{LS Estimations using $\mathbf{W}y$ as dependent variable} 
        \vspace{0.1em}
        \footnotesize
        \input{tables//tables_rev/table5}
        \label{tables:table5}
        \begin{tablenotes}[para,flushleft]
            \footnotesize
            \raggedright
            \item{Note:} \text{*} \(p<0.10\), \text{**}   \(p<0.05\), \text{***}  \(p<0.01\)
        \end{tablenotes}
    \end{threeparttable}
\end{table}
\end{landscape}
\endgroup

\newpage
\begingroup
\setstretch{1.8}
\begin{landscape}
\begin{table}[h!]
 \vspace{-5.5em}
\def\arraystretch{1.25}
\setlength{\tabcolsep}{15pt} 
\centering
    \begin{threeparttable}
        \caption{LS Estimations using $\textbf{W}\times\text{Male}$ as dependent variable}
        \vspace{0.1em}
        \footnotesize
        \input{tables//tables_rev/table7}
        \label{tables:table7}
        \begin{tablenotes}[para,flushleft]
            \footnotesize
            \raggedright
            \item{Note:} \text{*} \(p<0.10\), \text{**}   \(p<0.05\), \text{***}  \(p<0.01\)
        \end{tablenotes}
    \end{threeparttable}
\end{table}
\end{landscape}
\endgroup

\newpage
\begingroup
\setstretch{1.8}
\begin{landscape}
\begin{table}[h!]
 \vspace{-5.5em}
\def\arraystretch{1.25}
\setlength{\tabcolsep}{15pt} 
\centering
    \begin{threeparttable}
        \caption{LS Estimations using $\textbf{W}\times\text{ln(Height)}$ as dependent variable}
        \vspace{0.1em}
        \footnotesize
        \input{tables//tables_rev/table8}
        \label{tables:table8}
        \begin{tablenotes}[para,flushleft]
            \footnotesize
            \raggedright
            \item{Note:} \text{*} \(p<0.10\), \text{**}   \(p<0.05\), \text{***}  \(p<0.01\)
        \end{tablenotes}
    \end{threeparttable}
\end{table}
\end{landscape}
\endgroup

\newpage
\begingroup
\setstretch{1.8}
\begin{landscape}
\begin{table}[h!]
 \vspace{-5.5em}
\def\arraystretch{1.25}
\setlength{\tabcolsep}{15pt} 
\centering
    \begin{threeparttable}
        \caption{LS Estimations using $\textbf{W}\times\text{ln(Weight)}$ as dependent variable}
        \vspace{0.1em}
        \footnotesize
        \input{tables//tables_rev/table9}
        \label{tables:table9}
        \begin{tablenotes}[para,flushleft]
            \footnotesize
            \raggedright
            \item{Note:} \text{*} \(p<0.10\), \text{**}   \(p<0.05\), \text{***}  \(p<0.01\)
        \end{tablenotes}
    \end{threeparttable}
\end{table}
\end{landscape}
\endgroup

\newpage
\begingroup
\setstretch{1.8}
\begin{landscape}
\begin{table}[h!]
 \vspace{-5.5em}
\def\arraystretch{1.25}
\setlength{\tabcolsep}{15pt} 
\centering
    \begin{threeparttable}
        \caption{LS Estimations using $\textbf{W}\times\text{Siblings Help}$ as dependent variable}
        \vspace{0.1em}
        \footnotesize
        \input{tables//tables_rev/table10}
        \label{tables:table10}
        \begin{tablenotes}[para,flushleft]
            \footnotesize
            \raggedright
            \item{Note:} \text{*} \(p<0.10\), \text{**}   \(p<0.05\), \text{***}  \(p<0.01\)
        \end{tablenotes}
    \end{threeparttable}
\end{table}
\end{landscape}
\endgroup

\newpage
\begingroup
\setstretch{1.8}
\begin{landscape}
\begin{table}[h!]
 \vspace{-5.5em}
\def\arraystretch{1.25}
\setlength{\tabcolsep}{15pt} 
\centering
    \begin{threeparttable}
        \caption{LS Estimations using $\textbf{W}\times\text{Parents Help}$ as dependent variable}
        \vspace{0.1em}
        \footnotesize
        \input{tables//tables_rev/table11}
        \label{tables:table11}
        \begin{tablenotes}[para,flushleft]
            \footnotesize
            \raggedright
            \item{Note:} \text{*} \(p<0.10\), \text{**}   \(p<0.05\), \text{***}  \(p<0.01\)
        \end{tablenotes}
    \end{threeparttable}
\end{table}
\end{landscape}
\endgroup

\newpage
\begingroup
\setstretch{1.8}
\begin{landscape}
\begin{table}[h!]
 \vspace{-5.5em}
\def\arraystretch{1.25}
\setlength{\tabcolsep}{15pt} 
\centering
    \begin{threeparttable}
        \caption{LS Estimations using $\textbf{W}\times\text{Commute by Car/Taxi}$ as dependent variable}
        \vspace{0.1em}
        \footnotesize
        \input{tables//tables_rev/table12}
        \label{tables:table12}
        \begin{tablenotes}[para,flushleft]
            \footnotesize
            \raggedright
            \item{Note:} \text{*} \(p<0.10\), \text{**}   \(p<0.05\), \text{***}  \(p<0.01\)
        \end{tablenotes}
    \end{threeparttable}
\end{table}
\end{landscape}
\endgroup

\newpage
\begingroup
\setstretch{1.8}
\begin{landscape}
\begin{table}[h!]
 \vspace{-5.5em}
\def\arraystretch{1.25}
\setlength{\tabcolsep}{15pt} 
\centering
    \begin{threeparttable}
        \caption{LS Estimations using $\textbf{W}\times\text{Music}$ as dependent variable}
        \vspace{0.1em}
        \footnotesize
        \input{tables//tables_rev/table13}
        \label{tables:table13}
        \begin{tablenotes}[para,flushleft]
            \footnotesize
            \raggedright
            \item{Note:} \text{*} \(p<0.10\), \text{**}   \(p<0.05\), \text{***}  \(p<0.01\)
        \end{tablenotes}
    \end{threeparttable}
\end{table}
\end{landscape}
\endgroup

\newpage
\begingroup
\setstretch{1.8}
\begin{landscape}
\begin{table}[h!]
 \vspace{-5.5em}
\def\arraystretch{1.25}
\setlength{\tabcolsep}{15pt} 
\centering
    \begin{threeparttable}
        \caption{LS Estimations using $\mathbf{W}y$ as dependent variable}
        \vspace{0.1em}
        \footnotesize
        \input{tables//tables_rev/table6}
        \label{tables:table6}
        \begin{tablenotes}[para,flushleft]
            \footnotesize
            \raggedright
            \item{Note:} \text{*} \(p<0.10\), \text{**}   \(p<0.05\), \text{***}  \(p<0.01\)
        \end{tablenotes}
    \end{threeparttable}
\end{table}
\end{landscape}
\endgroup

%% file: tables/tables_rev/table1.tex
\begin{tabular}{lcccccccccccccccc}
\toprule
                               Variables &   Scale &  \multicolumn{1}{c}{} &   Mean &    SD &    Min &    Max &  \multicolumn{1}{l}{} &   Mean &    SD &    Min &    Max &  \multicolumn{1}{r}{} &   Mean &    SD &    Min &    Max \\
\midrule
      \textbf{Student-related variables} &         &                       &        &       &        &        &                       &        &       &        &        &                       &        &       &        &        \\
                  \hspace{3mm} Math Test & [0,100] &                       &  61.88 & 13.56 &  33.57 &  92.33 &                       &  61.44 & 13.12 &  31.00 &  92.00 &                       &  68.38 & 14.01 &  24.35 & 100.00 \\
                       \hspace{3mm} Male &         &                       &   0.37 &  0.48 &   0.00 &   1.00 &                       &   0.58 &  0.50 &   0.00 &   1.00 &                       &   0.42 &  0.49 &   0.00 &   1.00 \\
                \hspace{3mm} Height (cm) &         &                       & 156.50 &  7.48 & 139.00 & 176.00 &                       & 157.41 &  7.80 & 133.00 & 175.00 &                       & 161.05 &  9.18 & 100.00 & 208.30 \\
                \hspace{3mm} Weight (kg) &         &                       &  46.56 &  9.15 &  27.00 &  72.00 &                       &  46.86 & 11.27 &  28.00 &  99.00 &                       &  48.33 & 10.64 &  26.30 & 130.00 \\
              \hspace{3mm} Siblings Help &         &                       &   0.47 &  0.50 &   0.00 &   1.00 &                       &   0.43 &  0.50 &   0.00 &   1.00 &                       &   0.46 &  0.50 &   0.00 &   1.00 \\
               \hspace{3mm} Parents Help &         &                       &   0.61 &  0.49 &   0.00 &   1.00 &                       &   0.63 &  0.48 &   0.00 &   1.00 &                       &   0.65 &  0.48 &   0.00 &   1.00 \\
                      \hspace{3mm} Music &         &                       &   0.53 &  0.58 &   0.00 &   2.00 &                       &   0.66 &  0.62 &   0.00 &   3.00 &                       &   0.85 &  0.56 &   0.00 &   3.00 \\
        \hspace{3mm} Commute by car/taxi &         &                       &   0.68 &  0.47 &   0.00 &   1.00 &                       &   0.53 &  0.50 &   0.00 &   1.00 &                       &   0.58 &  0.49 &   0.00 &   1.00 \\
\textbf{Cognitive and Personality Tests} &         &                       &        &       &        &        &                       &        &       &        &        &                       &        &       &        &        \\
                  \hspace{3mm} Cognitive &  [0,16] &                       &   7.80 &  1.66 &   3.00 &  12.00 &                       &   7.94 &  1.78 &   4.00 &  12.00 &                       &   8.92 &  1.88 &   2.00 &  14.00 \\
              \hspace{3mm} Agreeableness &  [9,40] &                       &  27.12 &  4.26 &  14.00 &  39.00 &                       &  27.09 &  3.91 &  15.00 &  37.00 &                       &  27.04 &  3.99 &  12.00 &  40.00 \\
          \hspace{3mm} Conscientiousness &  [9,45] &                       &  26.71 &  5.87 &  14.00 &  40.00 &                       &  27.90 &  4.97 &  18.00 &  43.00 &                       &  25.88 &  5.47 &  12.00 &  45.00 \\
               \hspace{3mm} Extraversion &  [8,40] &                       &  27.65 &  4.86 &  16.00 &  38.00 &                       &  27.62 &  4.85 &  16.00 &  38.00 &                       &  26.35 &  5.10 &  10.00 &  39.00 \\
                \hspace{3mm} Neuroticism &  [8,40] &                       &  22.31 &  5.83 &   9.00 &  36.00 &                       &  21.95 &  5.36 &   8.00 &  35.00 &                       &  23.45 &  5.57 &   9.00 &  38.00 \\
                   \hspace{3mm} Openness & [10,55] &                       &  37.45 &  5.45 &  24.00 &  50.00 &                       &  36.65 &  5.09 &  19.00 &  51.00 &                       &  35.26 &  5.48 &  18.00 &  51.00 \\
\bottomrule
\end{tabular}

%% file: tables/tables_rev/table2.tex
\begin{tabular}{lcccccccccccc}
\toprule
&   \multicolumn{3}{c}{School 1} &  &\multicolumn{3}{c}{School 2}&   &\multicolumn{3}{c}{School 3} \\
\cmidrule(lr){2-4} \cmidrule(lr){6-8} \cmidrule(lr){10-12}

              Variables & Studymates & Seatmates & Extra &  \multicolumn{1}{c}{} & Studymates & Seatmates & Extra &  \multicolumn{1}{l}{} & Studymates & Seatmates &  Extra \\
\midrule
        Number of nodes &        133 &       133 &   133 &                       &        171 &       171 &   171 &                       &        564 &       564 &    564 \\
        Number of edges &        171 &       175 &   454 &                       &        228 &       275 &   748 &                       &        819 &       799 &   2974 \\
   Density $\times$ 100 &      1.948 &     1.994 & 5.172 &                       &      1.569 &     1.892 & 5.146 &                       &      0.516 &     0.503 &  1.873 \\
         Average degree &      2.571 &     2.632 & 6.827 &                       &      2.667 &     3.216 & 8.749 &                       &      2.904 &     2.833 & 10.546 \\
     Average clustering &      0.213 &     0.068 & 0.209 &                       &      0.153 &     0.066 & 0.228 &                       &      0.129 &     0.063 &  0.160 \\
  Assortativity measure &      0.190 &     0.266 & 0.187 &                       &      0.039 &     0.194 & 0.256 &                       &      0.177 &     0.177 &  0.067 \\
Number of isolated node &         15 &         7 &     3 &                       &         19 &         3 &     5 &                       &         69 &        13 &      5 \\
     Number of Subgraph &         21 &        14 &     4 &                       &         27 &         9 &     6 &                       &         76 &        30 &      8 \\
           Transitivity &      0.281 &     0.094 & 0.244 &                       &      0.207 &     0.092 & 0.234 &                       &      0.180 &     0.081 &  0.153 \\
\bottomrule
\end{tabular}

%% file: tables/tables_rev/table4.tex
\begin{tabular}{lccccccccc}
\toprule
&   &\multicolumn{2}{c}{OLS} & &\multicolumn{2}{c}{G2SLS}&  & \multicolumn{2}{c}{GMM}\\
  \cmidrule(lr){3-4} \cmidrule(lr){6-7} \cmidrule(l){9-10}  
                          Variables &  \multicolumn{1}{c}{} &     Coef. &       SE &  \multicolumn{1}{l}{} &      Coef. &       SE &  \multicolumn{1}{r}{} &      Coef. &       SE \\
\midrule
                              Music &                       & -0.0196** & (0.0099) &                       &    -0.0031 & (0.0130) &                       &    -0.0019 & (0.0126) \\
                Elder Siblings Help &                       &    0.0087 & (0.0116) &                       &    -0.0030 & (0.0126) &                       &     0.0001 & (0.0138) \\
              Younger Siblings Help &                       & 0.0435*** & (0.0159) &                       &   0.0403** & (0.0201) &                       &    0.0382* & (0.0212) \\
        Male $\times$ ln(Cognitive) &                       & 0.1275*** & (0.0486) &                       &   0.1528** & (0.0680) &                       &  0.2008*** & (0.0576) \\
    Male $\times$ ln(Agreeableness) &                       & -0.1318** & (0.0668) &                       &    -0.0318 & (0.0622) &                       &    -0.0407 & (0.0667) \\
Male $\times$ ln(Conscientiousness) &                       &  0.1216** & (0.0520) &                       &   0.1612** & (0.0639) &                       &   0.1582** & (0.0727) \\
     Male $\times$ ln(Extraversion) &                       &    0.0266 & (0.0404) &                       &     0.0565 & (0.0559) &                       &     0.0434 & (0.0641) \\
      Male $\times$ ln(Neuroticism) &                       &    0.0441 & (0.0518) &                       &     0.1033 & (0.0630) &                       &     0.1003 & (0.0658) \\
         Male $\times$ ln(Openness) &                       &   -0.0157 & (0.0566) &                       &    -0.0968 & (0.0805) &                       &    -0.1089 & (0.0945) \\
         School 1, grade 7, class 1 &                       &  0.1081** & (0.0516) &                       &  0.0417*** & (0.0155) &                       &    0.0293* & (0.0176) \\
         School 1, grade 7, class 2 &                       &    0.0243 & (0.0533) &                       & -0.0682*** & (0.0247) &                       & -0.1288*** & (0.0238) \\
         School 1, grade 7, class 3 &                       & 0.0854*** & (0.0283) &                       &  0.0658*** & (0.0106) &                       &  0.0779*** & (0.0111) \\
         School 1, grade 7, class 4 &                       & 0.1583*** & (0.0525) &                       &  0.0889*** & (0.0148) &                       &  0.1177*** & (0.0132) \\
         School 1, grade 7, class 5 &                       & 0.1897*** & (0.0493) &                       &  0.1225*** & (0.0154) &                       &  0.1394*** & (0.0124) \\
         School 2, grade 7, class 1 &                       & 0.0938*** & (0.0275) &                       &     0.0204 & (0.0125) &                       &     0.0162 & (0.0135) \\
         School 2, grade 7, class 2 &                       &    0.0516 & (0.0326) &                       &    -0.0120 & (0.0132) &                       &    -0.0172 & (0.0159) \\
         School 2, grade 7, class 3 &                       & 0.1350*** & (0.0428) &                       &  0.0538*** & (0.0153) &                       &  0.0649*** & (0.0155) \\
         School 2, grade 7, class 4 &                       &  0.0720** & (0.0316) &                       &  0.0472*** & (0.0141) &                       &   0.0390** & (0.0161) \\
         School 2, grade 7, class 5 &                       & 0.1312*** & (0.0449) &                       &  0.1126*** & (0.0180) &                       &  0.1480*** & (0.0164) \\
         School 3, grade 7, class 1 &                       &  0.0948** & (0.0433) &                       &  0.1207*** & (0.0159) &                       &  0.1541*** & (0.0131) \\
         School 3, grade 7, class 2 &                       &  0.0983** & (0.0434) &                       &  0.1335*** & (0.0165) &                       &  0.1755*** & (0.0117) \\
         School 3, grade 7, class 3 &                       &  0.0900** & (0.0438) &                       &  0.1488*** & (0.0172) &                       &  0.1916*** & (0.0170) \\
         School 3, grade 7, class 4 &                       &   0.0719* & (0.0385) &                       &  0.1034*** & (0.0153) &                       &  0.1294*** & (0.0129) \\
         School 3, grade 7, class 5 &                       &    0.0612 & (0.0453) &                       &  0.0944*** & (0.0202) &                       &  0.1384*** & (0.0151) \\
         School 3, grade 8, class 1 &                       & 0.1364*** & (0.0379) &                       &  0.1232*** & (0.0144) &                       &  0.1547*** & (0.0096) \\
         School 3, grade 8, class 2 &                       &  0.1223** & (0.0520) &                       &  0.1482*** & (0.0235) &                       &  0.2016*** & (0.0162) \\
         School 3, grade 8, class 3 &                       & 0.0982*** & (0.0334) &                       &  0.1308*** & (0.0150) &                       &  0.1619*** & (0.0116) \\
         School 3, grade 8, class 4 &                       &  0.1069** & (0.0484) &                       &  0.1493*** & (0.0219) &                       &  0.1997*** & (0.0172) \\
         School 3, grade 8, class 5 &                       &   0.0770* & (0.0411) &                       &  0.1087*** & (0.0175) &                       &  0.1545*** & (0.0115) \\
         School 3, grade 9, class 1 &                       &   0.0409* & (0.0246) &                       &  0.0782*** & (0.0075) &                       &  0.0693*** & (0.0072) \\
         School 3, grade 9, class 2 &                       &    0.0384 & (0.0263) &                       &  0.0557*** & (0.0066) &                       &  0.0575*** & (0.0090) \\
         School 3, grade 9, class 3 &                       &    0.0374 & (0.0295) &                       &  0.0543*** & (0.0082) &                       &  0.0518*** & (0.0080) \\
         School 3, grade 9, class 4 &                       &    0.0382 & (0.0291) &                       &  0.0798*** & (0.0072) &                       &  0.0819*** & (0.0084) \\
                           Constant &                       &  -4.5872* & (2.5808) &                       &  2.9464*** & (0.9048) &                       &  3.8210*** & (1.2635) \\
\bottomrule
\end{tabular}

%% file: tables/tables_rev/table15.tex
\begin{tabular}{lccc}
\toprule
                  Variables &   $C=1.5$ &    $C=1.6$ &    $C=1.7$ \\
\midrule
       \textbf{Peer effect} &           &            &            \\
              ln(Math Test) &   0.7002* &  0.8889*** &  0.8919*** \\
                            &  (0.3737) &   (0.3157) &   (0.3069) \\
                   \midrule 
\textbf{Contextual effects} &           &            &            \\
                       Male &   -0.2958 &   -0.2395* &   -0.2403* \\
                            &  (0.1836) &   (0.1436) &   (0.1417) \\
                 ln(Height) &   3.8548* &   2.2241** &   2.2012** \\
                            &  (2.0045) &   (1.0261) &   (1.0108) \\
                 ln(Weight) &    0.0245 &    -0.0452 &    -0.0558 \\
                            &  (0.4489) &   (0.3267) &   (0.3123) \\
              Siblings Help &   -0.0049 &     0.0674 &     0.0630 \\
                            &  (0.1390) &   (0.0833) &   (0.0812) \\
               Parents Help &   -0.0303 &    -0.1154 &    -0.1114 \\
                            &  (0.1479) &   (0.0841) &   (0.0823) \\
        Commute by Car/Taxi &    0.0857 &     0.1395 &     0.1515 \\
                            &  (0.1307) &   (0.1126) &   (0.1110) \\
                      Music &    0.0575 &     0.0842 &     0.0782 \\
                            &  (0.0989) &   (0.0841) &   (0.0822) \\
                   \midrule 
   \textbf{$\dagger$} &            &            &            \\                 
              ln(Cognitive) &  0.1119** &  0.1318*** &  0.1322*** \\
                            &  (0.0494) &   (0.0437) &   (0.0421) \\
          ln(Agreeableness) &   -0.0953 &  -0.1254** &   -0.1204* \\
                            &  (0.0720) &   (0.0632) &   (0.0618) \\
      ln(Conscientiousness) &    0.0738 &     0.0649 &     0.0647 \\
                            &  (0.0740) &   (0.0718) &   (0.0713) \\
           ln(Extraversion) & -0.1552** &  -0.1290** &  -0.1259** \\
                            &  (0.0768) &   (0.0614) &   (0.0608) \\
            ln(Neuroticism) &   -0.0003 &    -0.0333 &    -0.0331 \\
                            &  (0.0447) &   (0.0396) &   (0.0389) \\
               ln(Openness) &    0.0229 &     0.0115 &     0.0068 \\
                            &  (0.0741) &   (0.0654) &   (0.0645) \\            
                   \midrule 
    \textbf{Direct effects} &           &            &            \\
                       Male &   -0.2902 &    -0.6800 &    -0.6580 \\
                            &  (0.8509) &   (0.8366) &   (0.8331) \\
                 ln(Height) & -1.3506** & -0.9101*** & -0.9102*** \\
                            &  (0.5345) &   (0.3046) &   (0.2987) \\
                 ln(Weight) &   -0.0692 &    -0.0238 &    -0.0187 \\
                            &  (0.0966) &   (0.0759) &   (0.0725) \\
              Siblings Help &   -0.0407 &  -0.0640** &  -0.0627** \\
                            &  (0.0444) &   (0.0282) &   (0.0273) \\
               Parents Help &    0.0229 &    0.0378* &    0.0371* \\
                            &  (0.0295) &   (0.0203) &   (0.0200) \\
        Commute by Car/Taxi &   -0.0229 &   -0.0315* &   -0.0335* \\
                            &  (0.0190) &   (0.0176) &   (0.0171) \\
                     Degree & 0.0258*** &  0.0209*** &  0.0208*** \\
                            &  (0.0067) &   (0.0058) &   (0.0057) \\
           Isolate Students &    0.3446 &     0.4858 &     0.4755 \\
                            &  (0.4991) &   (0.4822) &   (0.4766) \\
                   \midrule 
                        $n$ &       868 &        868 &        868 \\
             Adjusted $R^2$ &    0.2100 &     0.2617 &     0.2615 \\
                       RMSE &    0.2205 &     0.2092 &     0.2094 \\
\bottomrule
\end{tabular}

%% file: tables/tables_rev/table16.tex
\begin{tabular}{lccccccccc}
\toprule
&   &\multicolumn{2}{c}{$C=1.5$} & &\multicolumn{2}{c}{$C=1.6$}&  & \multicolumn{2}{c}{$C=1.7$}\\
   \cmidrule(lr){3-4} \cmidrule(lr){6-7} \cmidrule(l){9-10}  
   \cmidrule(lr){3-4} \cmidrule(lr){6-7} \cmidrule(l){9-10} 
                           Variables &  \multicolumn{1}{c}{} &     Coef. &       SE &  \multicolumn{1}{l}{} &     Coef. &       SE &  \multicolumn{1}{r}{} &     Coef. &       SE \\
\midrule
                               Music &                       &   -0.0164 & (0.0189) &                       &   -0.0115 & (0.0163) &                       &   -0.0102 & (0.0158) \\
                 Elder Siblings Help &                       &    0.0171 & (0.0207) &                       &    0.0086 & (0.0173) &                       &    0.0073 & (0.0169) \\
               Younger Siblings Help &                       &    0.0357 & (0.0296) &                       &  0.0470** & (0.0218) &                       &  0.0470** & (0.0216) \\
         Male $\times$ ln(Cognitive) &                       &    0.0823 & (0.0951) &                       &    0.0802 & (0.0902) &                       &    0.0788 & (0.0875) \\
     Male $\times$ ln(Agreeableness) &                       &   -0.0804 & (0.1019) &                       &   -0.0349 & (0.1017) &                       &   -0.0354 & (0.0997) \\
 Male $\times$ ln(Conscientiousness) &                       &    0.1095 & (0.0821) &                       &   0.1334* & (0.0751) &                       &   0.1324* & (0.0736) \\
      Male $\times$ ln(Extraversion) &                       &   -0.0066 & (0.0830) &                       &    0.0313 & (0.0632) &                       &    0.0286 & (0.0626) \\
       Male $\times$ ln(Neuroticism) &                       &    0.0384 & (0.0846) &                       &    0.0700 & (0.0777) &                       &    0.0660 & (0.0776) \\
          Male $\times$ ln(Openness) &                       &    0.0417 & (0.0925) &                       &    0.0239 & (0.0821) &                       &    0.0261 & (0.0798) \\
          School 1, grade 7, class 1 &                       &  0.1649** & (0.0771) &                       &  0.1201** & (0.0604) &                       &   0.1136* & (0.0592) \\
          School 1, grade 7, class 2 &                       &    0.0582 & (0.1200) &                       &    0.0671 & (0.1028) &                       &    0.0607 & (0.0998) \\
          School 1, grade 7, class 3 &                       &  0.1181** & (0.0460) &                       &   0.0711* & (0.0367) &                       &   0.0687* & (0.0354) \\
          School 1, grade 7, class 4 &                       &  0.2019** & (0.0891) &                       &    0.1097 & (0.0687) &                       &    0.1033 & (0.0671) \\
          School 1, grade 7, class 5 &                       & 0.2346*** & (0.0808) &                       &  0.1595** & (0.0639) &                       &  0.1525** & (0.0624) \\
          School 2, grade 7, class 1 &                       &  0.1464** & (0.0568) &                       & 0.1190*** & (0.0414) &                       & 0.1146*** & (0.0392) \\
          School 2, grade 7, class 2 &                       &    0.1105 & (0.0674) &                       &    0.0621 & (0.0532) &                       &    0.0609 & (0.0509) \\
          School 2, grade 7, class 3 &                       & 0.1861*** & (0.0704) &                       &  0.1174** & (0.0564) &                       &  0.1107** & (0.0544) \\
          School 2, grade 7, class 4 &                       &   0.1138* & (0.0600) &                       &  0.1154** & (0.0463) &                       &  0.1113** & (0.0444) \\
          School 2, grade 7, class 5 &                       &  0.1701** & (0.0767) &                       &    0.1049 & (0.0644) &                       &    0.1016 & (0.0624) \\
          School 3, grade 7, class 1 &                       &  0.1646** & (0.0769) &                       &    0.0778 & (0.0569) &                       &    0.0737 & (0.0558) \\
          School 3, grade 7, class 2 &                       &   0.1504* & (0.0808) &                       &    0.0727 & (0.0623) &                       &    0.0688 & (0.0611) \\
          School 3, grade 7, class 3 &                       &   0.1408* & (0.0744) &                       &    0.0634 & (0.0600) &                       &    0.0609 & (0.0587) \\
          School 3, grade 7, class 4 &                       &   0.1212* & (0.0631) &                       &    0.0701 & (0.0488) &                       &    0.0665 & (0.0475) \\
          School 3, grade 7, class 5 &                       &    0.0904 & (0.0741) &                       &    0.0268 & (0.0611) &                       &    0.0242 & (0.0600) \\
          School 3, grade 8, class 1 &                       &  0.1560** & (0.0616) &                       &  0.1077** & (0.0508) &                       &  0.1055** & (0.0496) \\
          School 3, grade 8, class 2 &                       &    0.1342 & (0.0906) &                       &    0.0715 & (0.0770) &                       &    0.0690 & (0.0751) \\
          School 3, grade 8, class 3 &                       &    0.0943 & (0.0590) &                       &    0.0611 & (0.0492) &                       &    0.0586 & (0.0480) \\
          School 3, grade 8, class 4 &                       &    0.1327 & (0.0845) &                       &    0.0601 & (0.0714) &                       &    0.0568 & (0.0700) \\
          School 3, grade 8, class 5 &                       &    0.0750 & (0.0640) &                       &    0.0438 & (0.0592) &                       &    0.0407 & (0.0574) \\
          School 3, grade 9, class 1 &                       &    0.0446 & (0.0355) &                       &   0.0515* & (0.0301) &                       &    0.0486 & (0.0298) \\
          School 3, grade 9, class 2 &                       &    0.0319 & (0.0427) &                       &    0.0269 & (0.0338) &                       &    0.0238 & (0.0331) \\
          School 3, grade 9, class 3 &                       &    0.0239 & (0.0554) &                       &    0.0308 & (0.0394) &                       &    0.0270 & (0.0383) \\
          School 3, grade 9, class 4 &                       &    0.0428 & (0.0380) &                       &    0.0500 & (0.0336) &                       &    0.0488 & (0.0334) \\
                            Constant &                       & -11.5128* & (6.3535) &                       &  -6.1593* & (3.3041) &                       &  -6.0328* & (3.2568) \\
\bottomrule
\end{tabular}

%% file: tables/tables_rev/table17.tex
\begin{tabular}{lccc}
\toprule
                  Variables &    $C=1.5$ &    $C=1.6$ &    $C=1.7$ \\
\midrule
       \textbf{Peer effect} &            &            &            \\
              ln(Math Test) &  0.6729*** &  0.6762*** &  0.6791*** \\
                            &   (0.2236) &   (0.2159) &   (0.2090) \\
                   \midrule 
\textbf{Contextual effects} &            &            &            \\
                       Male &   -0.2057* &   -0.2111* &  -0.2161** \\
                            &   (0.1115) &   (0.1080) &   (0.1046) \\
                 ln(Height) &  2.2455*** &  2.2006*** &  2.1608*** \\
                            &   (0.8568) &   (0.8471) &   (0.8359) \\
                 ln(Weight) &     0.0443 &     0.0460 &     0.0486 \\
                            &   (0.2703) &   (0.2653) &   (0.2607) \\
              Siblings Help &     0.0387 &     0.0399 &     0.0419 \\
                            &   (0.0567) &   (0.0546) &   (0.0525) \\
               Parents Help &    -0.0205 &    -0.0133 &    -0.0072 \\
                            &   (0.0658) &   (0.0639) &   (0.0620) \\
        Commute by Car/Taxi &     0.0649 &     0.0733 &     0.0799 \\
                            &   (0.0688) &   (0.0669) &   (0.0653) \\
                      Music &    0.1161* &   0.1143** &   0.1131** \\
                            &   (0.0600) &   (0.0582) &   (0.0565) \\
                   \midrule 
    \textbf{$\dagger$} &           &            &            \\
              ln(Cognitive) &  0.1098*** &  0.1107*** &  0.1118*** \\
                            &   (0.0378) &   (0.0361) &   (0.0345) \\
          ln(Agreeableness) &   -0.0963* &   -0.0902* &   -0.0844* \\
                            &   (0.0500) &   (0.0483) &   (0.0466) \\
      ln(Conscientiousness) &    0.0915* &    0.0930* &    0.0946* \\
                            &   (0.0517) &   (0.0501) &   (0.0487) \\
           ln(Extraversion) & -0.1227*** &  -0.1206** &  -0.1185** \\
                            &   (0.0474) &   (0.0469) &   (0.0462) \\
            ln(Neuroticism) &    -0.0150 &    -0.0136 &    -0.0122 \\
                            &   (0.0293) &   (0.0284) &   (0.0275) \\
               ln(Openness) &     0.0221 &     0.0179 &     0.0138 \\
                            &   (0.0493) &   (0.0475) &   (0.0458) \\
                   \midrule 
    \textbf{Direct effects} &           &            &            \\
                       Male &    -0.4717 &    -0.4442 &    -0.4138 \\
                            &   (0.6411) &   (0.6298) &   (0.6167) \\
                 ln(Height) & -0.9817*** & -0.9767*** & -0.9699*** \\
                            &   (0.2697) &   (0.2625) &   (0.2553) \\
                 ln(Weight) &    -0.0685 &    -0.0668 &    -0.0656 \\
                            &   (0.0675) &   (0.0654) &   (0.0637) \\
              Siblings Help &    -0.0347 &    -0.0351 &   -0.0360* \\
                            &   (0.0235) &   (0.0227) &   (0.0218) \\
               Parents Help &     0.0210 &     0.0186 &     0.0168 \\
                            &   (0.0168) &   (0.0163) &   (0.0158) \\
        Commute by Car/Taxi &    -0.0160 &    -0.0175 &    -0.0188 \\
                            &   (0.0156) &   (0.0153) &   (0.0150) \\
                     Degree &  0.0248*** &  0.0247*** &  0.0247*** \\
                            &   (0.0044) &   (0.0042) &   (0.0041) \\
           Isolate Students &     0.2403 &     0.2186 &     0.1963 \\
                            &   (0.3495) &   (0.3417) &   (0.3327) \\
                   \midrule 
                        $n$ &        868 &        868 &        868 \\
             Adjusted $R^2$ &     0.2800 &     0.2813 &     0.2821 \\
                       RMSE &     0.1980 &     0.1979 &     0.1980 \\
\bottomrule
\end{tabular}

%% file: tables/tables_rev/table18.tex
\begin{tabular}{lccccccccc}
\toprule
&   &\multicolumn{2}{c}{$C=1.5$} & &\multicolumn{2}{c}{$C=1.6$}&  & \multicolumn{2}{c}{$C=1.7$}\\
   \cmidrule(lr){3-4} \cmidrule(lr){6-7} \cmidrule(l){9-10}  
                           Variables &  \multicolumn{1}{c}{} &     Coef. &       SE &  \multicolumn{1}{l}{} &     Coef. &       SE &  \multicolumn{1}{r}{} &     Coef. &       SE \\
\midrule
                               Music &                       &  -0.0232* & (0.0119) &                       &  -0.0225* & (0.0115) &                       &  -0.0219* & (0.0112) \\
                 Elder Siblings Help &                       &    0.0052 & (0.0129) &                       &    0.0055 & (0.0125) &                       &    0.0060 & (0.0121) \\
               Younger Siblings Help &                       &   0.0372* & (0.0192) &                       &  0.0380** & (0.0190) &                       &  0.0389** & (0.0188) \\
         Male $\times$ ln(Cognitive) &                       &    0.1055 & (0.0675) &                       &    0.1049 & (0.0640) &                       &   0.1044* & (0.0609) \\
     Male $\times$ ln(Agreeableness) &                       &   -0.0848 & (0.0879) &                       &   -0.0866 & (0.0857) &                       &   -0.0885 & (0.0836) \\
 Male $\times$ ln(Conscientiousness) &                       &  0.1311** & (0.0634) &                       &  0.1297** & (0.0614) &                       &  0.1276** & (0.0599) \\
      Male $\times$ ln(Extraversion) &                       &    0.0488 & (0.0488) &                       &    0.0473 & (0.0461) &                       &    0.0460 & (0.0437) \\
       Male $\times$ ln(Neuroticism) &                       &    0.0720 & (0.0620) &                       &    0.0703 & (0.0610) &                       &    0.0682 & (0.0600) \\
          Male $\times$ ln(Openness) &                       &   -0.0239 & (0.0706) &                       &   -0.0242 & (0.0672) &                       &   -0.0246 & (0.0641) \\
          School 1, grade 7, class 1 &                       &  0.1284** & (0.0519) &                       &  0.1232** & (0.0510) &                       &  0.1189** & (0.0502) \\
          School 1, grade 7, class 2 &                       &    0.0504 & (0.0677) &                       &    0.0469 & (0.0656) &                       &    0.0439 & (0.0636) \\
          School 1, grade 7, class 3 &                       & 0.0897*** & (0.0324) &                       & 0.0879*** & (0.0312) &                       & 0.0867*** & (0.0301) \\
          School 1, grade 7, class 4 &                       & 0.1606*** & (0.0563) &                       & 0.1568*** & (0.0555) &                       & 0.1540*** & (0.0547) \\
          School 1, grade 7, class 5 &                       & 0.1954*** & (0.0528) &                       & 0.1902*** & (0.0518) &                       & 0.1859*** & (0.0507) \\
          School 2, grade 7, class 1 &                       & 0.0985*** & (0.0343) &                       & 0.0960*** & (0.0320) &                       & 0.0935*** & (0.0299) \\
          School 2, grade 7, class 2 &                       &    0.0584 & (0.0450) &                       &    0.0571 & (0.0430) &                       &    0.0570 & (0.0412) \\
          School 2, grade 7, class 3 &                       & 0.1376*** & (0.0457) &                       & 0.1335*** & (0.0443) &                       & 0.1304*** & (0.0429) \\
          School 2, grade 7, class 4 &                       &  0.0852** & (0.0343) &                       &  0.0823** & (0.0328) &                       &  0.0802** & (0.0312) \\
          School 2, grade 7, class 5 &                       &  0.1242** & (0.0494) &                       &  0.1226** & (0.0479) &                       & 0.1219*** & (0.0465) \\
          School 3, grade 7, class 1 &                       &  0.1041** & (0.0477) &                       &  0.0995** & (0.0468) &                       &  0.0955** & (0.0460) \\
          School 3, grade 7, class 2 &                       &  0.1009** & (0.0498) &                       &  0.0988** & (0.0486) &                       &  0.0968** & (0.0476) \\
          School 3, grade 7, class 3 &                       &   0.0940* & (0.0497) &                       &   0.0909* & (0.0483) &                       &   0.0882* & (0.0470) \\
          School 3, grade 7, class 4 &                       &   0.0791* & (0.0431) &                       &   0.0757* & (0.0416) &                       &   0.0729* & (0.0401) \\
          School 3, grade 7, class 5 &                       &    0.0560 & (0.0507) &                       &    0.0542 & (0.0497) &                       &    0.0527 & (0.0488) \\
          School 3, grade 8, class 1 &                       & 0.1302*** & (0.0396) &                       & 0.1290*** & (0.0383) &                       & 0.1280*** & (0.0373) \\
          School 3, grade 8, class 2 &                       &   0.1121* & (0.0590) &                       &   0.1101* & (0.0572) &                       &   0.1085* & (0.0555) \\
          School 3, grade 8, class 3 &                       &  0.0903** & (0.0382) &                       &  0.0879** & (0.0371) &                       &  0.0858** & (0.0361) \\
          School 3, grade 8, class 4 &                       &   0.1058* & (0.0550) &                       &   0.1025* & (0.0536) &                       &   0.0996* & (0.0523) \\
          School 3, grade 8, class 5 &                       &    0.0644 & (0.0465) &                       &    0.0635 & (0.0450) &                       &    0.0629 & (0.0437) \\
          School 3, grade 9, class 1 &                       &   0.0475* & (0.0256) &                       &   0.0442* & (0.0256) &                       &    0.0413 & (0.0253) \\
          School 3, grade 9, class 2 &                       &    0.0424 & (0.0297) &                       &    0.0390 & (0.0292) &                       &    0.0359 & (0.0289) \\
          School 3, grade 9, class 3 &                       &    0.0483 & (0.0342) &                       &    0.0434 & (0.0332) &                       &    0.0394 & (0.0323) \\
          School 3, grade 9, class 4 &                       &    0.0358 & (0.0309) &                       &    0.0339 & (0.0307) &                       &    0.0319 & (0.0302) \\
                            Constant &                       &  -5.2394* & (2.7824) &                       &  -5.0676* & (2.7405) &                       &  -4.9306* & (2.6995) \\
\bottomrule
\end{tabular}

%% file: tables/tables_rev/table19.tex
\begin{tabular}{lccc}
\toprule
                  Variables &    $C=1.5$ &    $C=1.6$ &    $C=1.7$ \\
\midrule
       \textbf{Peer effect} &            &            &            \\
              ln(Math Test) &  0.5943*** &  0.5987*** &  0.6065*** \\
                            &   (0.1923) &   (0.1836) &   (0.1755) \\
                   \midrule 
\textbf{Contextual effects} &            &            &            \\
                       Male & -0.2494*** & -0.2466*** & -0.2455*** \\
                            &   (0.0949) &   (0.0909) &   (0.0873) \\
                 ln(Height) &  2.3137*** &  2.2570*** &  2.1682*** \\
                            &   (0.7917) &   (0.7660) &   (0.7400) \\
                 ln(Weight) &     0.0491 &     0.0399 &     0.0329 \\
                            &   (0.2123) &   (0.2031) &   (0.1942) \\
              Siblings Help &     0.0263 &     0.0274 &     0.0294 \\
                            &   (0.0521) &   (0.0505) &   (0.0491) \\
               Parents Help &    -0.0326 &    -0.0207 &    -0.0106 \\
                            &   (0.0590) &   (0.0554) &   (0.0517) \\
        Commute by Car/Taxi &    0.1147* &    0.1205* &   0.1255** \\
                            &   (0.0683) &   (0.0659) &   (0.0636) \\
                      Music &   0.1243** &   0.1206** &   0.1187** \\
                            &   (0.0531) &   (0.0505) &   (0.0480) \\
    \midrule 
    \textbf{$\dagger$} &           &            &            \\
              ln(Cognitive) &  0.1093*** &  0.1087*** &  0.1086*** \\
                            &   (0.0317) &   (0.0300) &   (0.0286) \\
          ln(Agreeableness) &   -0.0866* &    -0.0787 &    -0.0712 \\
                            &   (0.0501) &   (0.0479) &   (0.0457) \\
      ln(Conscientiousness) &    0.0752* &    0.0774* &   0.0799** \\
                            &   (0.0435) &   (0.0418) &   (0.0406) \\
           ln(Extraversion) & -0.1229*** & -0.1202*** & -0.1164*** \\
                            &   (0.0420) &   (0.0412) &   (0.0405) \\
            ln(Neuroticism) &    -0.0132 &    -0.0124 &    -0.0113 \\
                            &   (0.0278) &   (0.0266) &   (0.0257) \\
               ln(Openness) &     0.0353 &     0.0311 &     0.0272 \\
                            &   (0.0456) &   (0.0435) &   (0.0414) \\
                   \midrule 
    \textbf{Direct effects} &            &            &            \\
                       Male &    -0.2089 &    -0.1901 &    -0.1648 \\
                            &   (0.5776) &   (0.5609) &   (0.5417) \\
                 ln(Height) & -1.0302*** & -1.0141*** & -0.9904*** \\
                            &   (0.2436) &   (0.2343) &   (0.2248) \\
                 ln(Weight) &    -0.0578 &    -0.0537 &    -0.0494 \\
                            &   (0.0516) &   (0.0483) &   (0.0453) \\
              Siblings Help &   -0.0394* &   -0.0391* &   -0.0395* \\
                            &   (0.0225) &   (0.0214) &   (0.0204) \\
               Parents Help &     0.0224 &     0.0194 &     0.0167 \\
                            &   (0.0155) &   (0.0147) &   (0.0139) \\
        Commute by Car/Taxi &    -0.0216 &   -0.0234* &   -0.0251* \\
                            &   (0.0145) &   (0.0140) &   (0.0136) \\
                     Degree &  0.0251*** &  0.0250*** &  0.0248*** \\
                            &   (0.0037) &   (0.0036) &   (0.0034) \\
           Isolate Students &     0.2024 &     0.1752 &     0.1425 \\
                            &   (0.3118) &   (0.3022) &   (0.2935) \\
                   \midrule 
                        $n$ &        868 &        868 &        868 \\
             Adjusted $R^2$ &     0.2593 &     0.2636 &     0.2675 \\
                       RMSE &     0.2020 &     0.2010 &     0.2002 \\
\bottomrule
\end{tabular}

%% file: tables/tables_rev/table20.tex
\begin{tabular}{lccccccccc}
\toprule
&   &\multicolumn{2}{c}{$C=1.5$} & &\multicolumn{2}{c}{$C=1.6$}&  & \multicolumn{2}{c}{$C=1.7$}\\
   \cmidrule(lr){3-4} \cmidrule(lr){6-7} \cmidrule(l){9-10}  
                           Variables &  \multicolumn{1}{c}{} &     Coef. &       SE &  \multicolumn{1}{l}{} &     Coef. &       SE &  \multicolumn{1}{r}{} &     Coef. &       SE \\
\midrule
                               Music &                       &  -0.0204* & (0.0110) &                       &  -0.0199* & (0.0105) &                       & -0.0196** & (0.0099) \\
                 Elder Siblings Help &                       &    0.0088 & (0.0125) &                       &    0.0087 & (0.0121) &                       &    0.0087 & (0.0116) \\
               Younger Siblings Help &                       &  0.0410** & (0.0169) &                       &  0.0421** & (0.0164) &                       & 0.0435*** & (0.0159) \\
         Male $\times$ ln(Cognitive) &                       &  0.1349** & (0.0538) &                       &  0.1313** & (0.0510) &                       & 0.1275*** & (0.0486) \\
     Male $\times$ ln(Agreeableness) &                       &  -0.1317* & (0.0730) &                       &  -0.1321* & (0.0702) &                       & -0.1318** & (0.0668) \\
 Male $\times$ ln(Conscientiousness) &                       &  0.1256** & (0.0563) &                       &  0.1240** & (0.0539) &                       &  0.1216** & (0.0520) \\
      Male $\times$ ln(Extraversion) &                       &    0.0337 & (0.0472) &                       &    0.0298 & (0.0438) &                       &    0.0266 & (0.0404) \\
       Male $\times$ ln(Neuroticism) &                       &    0.0425 & (0.0538) &                       &    0.0435 & (0.0527) &                       &    0.0441 & (0.0518) \\
          Male $\times$ ln(Openness) &                       &   -0.0163 & (0.0633) &                       &   -0.0154 & (0.0598) &                       &   -0.0157 & (0.0566) \\
          School 1, grade 7, class 1 &                       &  0.1198** & (0.0543) &                       &  0.1142** & (0.0530) &                       &  0.1081** & (0.0516) \\
          School 1, grade 7, class 2 &                       &    0.0304 & (0.0583) &                       &    0.0268 & (0.0557) &                       &    0.0243 & (0.0533) \\
          School 1, grade 7, class 3 &                       & 0.0877*** & (0.0309) &                       & 0.0869*** & (0.0296) &                       & 0.0854*** & (0.0283) \\
          School 1, grade 7, class 4 &                       & 0.1716*** & (0.0540) &                       & 0.1652*** & (0.0533) &                       & 0.1583*** & (0.0525) \\
          School 1, grade 7, class 5 &                       & 0.2042*** & (0.0521) &                       & 0.1973*** & (0.0508) &                       & 0.1897*** & (0.0493) \\
          School 2, grade 7, class 1 &                       & 0.1028*** & (0.0324) &                       & 0.0985*** & (0.0299) &                       & 0.0938*** & (0.0275) \\
          School 2, grade 7, class 2 &                       &    0.0573 & (0.0370) &                       &    0.0541 & (0.0347) &                       &    0.0516 & (0.0326) \\
          School 2, grade 7, class 3 &                       & 0.1482*** & (0.0456) &                       & 0.1418*** & (0.0443) &                       & 0.1350*** & (0.0428) \\
          School 2, grade 7, class 4 &                       &  0.0801** & (0.0356) &                       &  0.0759** & (0.0337) &                       &  0.0720** & (0.0316) \\
          School 2, grade 7, class 5 &                       & 0.1387*** & (0.0478) &                       & 0.1352*** & (0.0463) &                       & 0.1312*** & (0.0449) \\
          School 3, grade 7, class 1 &                       &  0.1047** & (0.0460) &                       &  0.1005** & (0.0446) &                       &  0.0948** & (0.0433) \\
          School 3, grade 7, class 2 &                       &  0.1046** & (0.0458) &                       &  0.1022** & (0.0446) &                       &  0.0983** & (0.0434) \\
          School 3, grade 7, class 3 &                       &  0.0969** & (0.0471) &                       &  0.0940** & (0.0454) &                       &  0.0900** & (0.0438) \\
          School 3, grade 7, class 4 &                       &  0.0816** & (0.0409) &                       &   0.0770* & (0.0397) &                       &   0.0719* & (0.0385) \\
          School 3, grade 7, class 5 &                       &    0.0682 & (0.0468) &                       &    0.0652 & (0.0461) &                       &    0.0612 & (0.0453) \\
          School 3, grade 8, class 1 &                       & 0.1428*** & (0.0403) &                       & 0.1398*** & (0.0391) &                       & 0.1364*** & (0.0379) \\
          School 3, grade 8, class 2 &                       &  0.1314** & (0.0551) &                       &  0.1271** & (0.0535) &                       &  0.1223** & (0.0520) \\
          School 3, grade 8, class 3 &                       & 0.1049*** & (0.0352) &                       & 0.1016*** & (0.0343) &                       & 0.0982*** & (0.0334) \\
          School 3, grade 8, class 4 &                       &  0.1167** & (0.0509) &                       &  0.1124** & (0.0496) &                       &  0.1069** & (0.0484) \\
          School 3, grade 8, class 5 &                       &   0.0818* & (0.0440) &                       &   0.0795* & (0.0425) &                       &   0.0770* & (0.0411) \\
          School 3, grade 9, class 1 &                       &   0.0450* & (0.0254) &                       &   0.0428* & (0.0251) &                       &   0.0409* & (0.0246) \\
          School 3, grade 9, class 2 &                       &    0.0415 & (0.0279) &                       &    0.0401 & (0.0271) &                       &    0.0384 & (0.0263) \\
          School 3, grade 9, class 3 &                       &    0.0476 & (0.0324) &                       &    0.0421 & (0.0310) &                       &    0.0374 & (0.0295) \\
          School 3, grade 9, class 4 &                       &    0.0430 & (0.0306) &                       &    0.0407 & (0.0300) &                       &    0.0382 & (0.0291) \\
                            Constant &                       &  -5.0956* & (2.7565) &                       &  -4.8916* & (2.6662) &                       &  -4.5872* & (2.5808) \\
\bottomrule
\end{tabular}

%% file: tables/tables_rev/table14.tex
\begin{tabular}{lccclcc}
\toprule
                                  Variables &                                              Coef. &       SE &  \multicolumn{1}{c}{} &                                    Variables &     Coef. &      SE \\
\midrule
                              ln(Math Test) &                                          0.2204*** & (0.0784) &                       &                $\textbf{W}^2_{n,0}$ $\times$ Male &    -0.0512 & (0.0466) \\
                                       Male &                                            -0.3938 & (0.4050) &                       &          $\textbf{W}^2_{n,0}$ $\times$ ln(Height) &    -0.3393 & (0.2333) \\
                                 ln(Height) &                                           -0.2541* & (0.1477) &                       &          $\textbf{W}^2_{n,0}$ $\times$ ln(Weight) &    0.1710* & (0.0903) \\
                                 ln(Weight) &                                           -0.0669* & (0.0392) &                       &       $\textbf{W}^2_{n,0}$ $\times$ Siblings Help &     0.0571 & (0.0470) \\
                              Siblings Help &                                            -0.0262 & (0.0223) &                       &        $\textbf{W}^2_{n,0}$ $\times$ Parents Help &    -0.0117 & (0.0557) \\
                               Parents Help &                                             0.0290 & (0.0248) &                       & $\textbf{W}^2_{n,0}$ $\times$ Commute by Car/Taxi &     0.0111 & (0.0508) \\
                        Commute by Car/Taxi &                                            -0.0104 & (0.0200) &                       &               $\textbf{W}^2_{n,0}$ $\times$ Music &     0.0251 & (0.0363) \\
                                      Music &                                            -0.0091 & (0.0132) &                       &                    School 1, grade 7, class 1 &     0.0105 & (0.0232) \\
                                     Degree &                                          0.0281*** & (0.0040) &                       &                    School 1, grade 7, class 2 & -0.1337*** & (0.0251) \\
                           Isolate Students &                                             1.2398 & (1.1350) &                       &                    School 1, grade 7, class 3 &  0.0768*** & (0.0127) \\
                        Elder Siblings Help &                                            -0.0002 & (0.0140) &                       &                    School 1, grade 7, class 4 &  0.1075*** & (0.0206) \\
                      Younger Siblings Help &                                            0.0364* & (0.0191) &                       &                    School 1, grade 7, class 5 &  0.1462*** & (0.0167) \\
                Male $\times$ ln(Cognitive) &                                          0.2563*** & (0.0452) &                       &                    School 2, grade 7, class 1 &     0.0233 & (0.0150) \\
            Male $\times$ ln(Agreeableness) &                                          -0.1395** & (0.0588) &                       &                    School 2, grade 7, class 2 &    -0.0049 & (0.0244) \\
        Male $\times$ ln(Conscientiousness) &                                          0.2086*** & (0.0624) &                       &                    School 2, grade 7, class 3 &  0.0742*** & (0.0175) \\
             Male $\times$ ln(Extraversion) &                                            -0.0559 & (0.0543) &                       &                    School 2, grade 7, class 4 &  0.0487*** & (0.0177) \\
              Male $\times$ ln(Neuroticism) &                                             0.0632 & (0.0537) &                       &                    School 2, grade 7, class 5 &  0.1526*** & (0.0227) \\
                 Male $\times$ ln(Openness) &                                            -0.0795 & (0.0802) &                       &                    School 3, grade 7, class 1 &  0.1593*** & (0.0188) \\
                 $\textbf{W}_n$ $\times$ Male &                                            -0.0330 & (0.0299) &                       &                    School 3, grade 7, class 2 &  0.1710*** & (0.0187) \\
           $\textbf{W}_n$ $\times$ ln(Height) &                                            -0.0563 & (0.2103) &                       &                    School 3, grade 7, class 3 &  0.1997*** & (0.0270) \\
           $\textbf{W}_n$ $\times$ ln(Weight) &                                           0.1383** & (0.0673) &                       &                    School 3, grade 7, class 4 &  0.1355*** & (0.0249) \\
        $\textbf{W}_n$ $\times$ Siblings Help &                                             0.0321 & (0.0259) &                       &                    School 3, grade 7, class 5 &  0.1486*** & (0.0184) \\
         $\textbf{W}_n$ $\times$ Parents Help &                                            0.0438* & (0.0264) &                       &                    School 3, grade 8, class 1 &  0.1516*** & (0.0167) \\
  $\textbf{W}_n$ $\times$ Commute by Car/Taxi &                                             0.0334 & (0.0264) &                       &                    School 3, grade 8, class 2 &  0.2023*** & (0.0183) \\
                $\textbf{W}_n$ $\times$ Music &                                            -0.0236 & (0.0201) &                       &                    School 3, grade 8, class 3 &  0.1563*** & (0.0129) \\
               $\textbf{W}_{n,0}$ $\times$ Male &                                           -0.0248 & (0.0149) &                       &                    School 3, grade 8, class 4 &  0.1964*** & (0.0215) \\
         $\textbf{W}_{n,0}$ $\times$ ln(Height) &                                             0.2446 & (0.2356) &                       &                    School 3, grade 8, class 5 &  0.1676*** & (0.0112) \\
         $\textbf{W}_{n,0}$ $\times$ ln(Weight) &                                            -0.0355 & (0.0528) &                       &                    School 3, grade 9, class 1 &  0.0651*** & (0.0125) \\
      $\textbf{W}_{n,0}$ $\times$ Siblings Help &                                            -0.0151 & (0.0213) &                       &                    School 3, grade 9, class 2 &  0.0449*** & (0.0067) \\
       $\textbf{W}_{n,0}$ $\times$ Parents Help &                                            -0.0106 & (0.0206) &                       &                    School 3, grade 9, class 3 &  0.0508*** & (0.0085) \\
$\textbf{W}_{n,0}$ $\times$ Commute by Car/Taxi &                                            -0.0118 & (0.0245) &                       &                    School 3, grade 9, class 4 &  0.0668*** & (0.0159) \\
              $\textbf{W}_{n,0}$ $\times$ Music &                                            -0.0036 & (0.0169) &                       &                                      Constant &  4.2484*** & (1.2059) \\
              \midrule
                                        $n$ &                                                868 &          &                       &                                               &            &          \\
                                     $F$-Statistics & 12.9000 &          &                       &                                               &            &          \\
                                   $p$-value &                                                0.0000 &          &                       &                                               &            &          \\
\bottomrule
\end{tabular}

%% file: tables/tables_rev/table5.tex
\begin{tabular}{lccccccccc}
\toprule
 &   &\multicolumn{2}{c}{(D-2)} & &\multicolumn{2}{c}{(D-3)}&  & \multicolumn{2}{c}{(D-4)}\\
   \cmidrule(lr){3-4} \cmidrule(lr){6-7} \cmidrule(l){9-10}  
                                       Variables &  \multicolumn{1}{c}{} &     Coef. &       SE &  \multicolumn{1}{l}{} &     Coef. &       SE &  \multicolumn{1}{r}{} &     Coef. &       SE \\
\midrule
                    $\textbf{W}_{n,0}$ $\times$ Male &                       &    0.1277 & (0.1521) &                       &           &          &                       &    0.2244 & (0.1505) \\
              $\textbf{W}_{n,0}$ $\times$ ln(Height) &                       &  1.1821** & (0.5783) &                       &           &          &                       &    1.6369 & (2.0158) \\
              $\textbf{W}_{n,0}$ $\times$ ln(Weight) &                       &   -0.5142 & (0.3457) &                       &           &          &                       &   -0.2787 & (0.4065) \\
           $\textbf{W}_{n,0}$ $\times$ Siblings Help &                       &    0.1925 & (0.1588) &                       &           &          &                       &    0.1315 & (0.1354) \\
            $\textbf{W}_{n,0}$ $\times$ Parents Help &                       &    0.1572 & (0.1427) &                       &           &          &                       &    0.2092 & (0.1405) \\
     $\textbf{W}_{n,0}$ $\times$ Commute by Car/Taxi &                       &    0.0588 & (0.1407) &                       &           &          &                       &    0.0455 & (0.1457) \\
                   $\textbf{W}_{n,0}$ $\times$ Music &                       &  0.2277** & (0.1077) &                       &           &          &                       &    0.0790 & (0.1280) \\
           $\textbf{W}_{n,0}$ $\times$ ln(Cognitive) &                       &   -0.1062 & (0.1635) &                       &           &          &                       &    0.3246 & (0.2969) \\
       $\textbf{W}_{n,0}$ $\times$ ln(Agreeableness) &                       &    0.1843 & (0.2616) &                       &           &          &                       &   -0.0557 & (0.4539) \\
   $\textbf{W}_{n,0}$ $\times$ ln(Conscientiousness) &                       & 0.6115*** & (0.2035) &                       &           &          &                       &    0.0661 & (0.3383) \\
        $\textbf{W}_{n,0}$ $\times$ ln(Extraversion) &                       &    0.1699 & (0.1471) &                       &           &          &                       &    0.2083 & (0.2455) \\
         $\textbf{W}_{n,0}$ $\times$ ln(Neuroticism) &                       &    0.0704 & (0.1952) &                       &           &          &                       &   -0.0816 & (0.2820) \\
            $\textbf{W}_{n,0}$ $\times$ ln(Openness) &                       &    0.2039 & (0.2609) &                       &           &          &                       &    0.2749 & (0.3080) \\
                  $\textbf{W}^2_{n,0}$ $\times$ Male &                       &        &       &                       &   -0.2862 & (0.2443) &                       &   -0.3681 & (0.2556) \\
            $\textbf{W}^2_{n,0}$ $\times$ ln(Height) &                       &        &       &                       &    1.1908 & (0.8774) &                       &   -0.5011 & (1.9841) \\
            $\textbf{W}^2_{n,0}$ $\times$ ln(Weight) &                       &        &       &                       &   -0.2088 & (0.6695) &                       &    0.0118 & (0.7225) \\
         $\textbf{W}^2_{n,0}$ $\times$ Siblings Help &                       &        &       &                       & 0.5321*** & (0.1883) &                       & 0.5106*** & (0.1875) \\
          $\textbf{W}^2_{n,0}$ $\times$ Parents Help &                       &        &       &                       &    0.0411 & (0.2204) &                       &    0.0127 & (0.2210) \\
   $\textbf{W}^2_{n,0}$ $\times$ Commute by Car/Taxi &                       &        &       &                       &   -0.1939 & (0.2112) &                       &   -0.1865 & (0.2059) \\
                 $\textbf{W}^2_{n,0}$ $\times$ Music &                       &        &       &                       & 0.4272*** & (0.1628) &                       &  0.4009** & (0.1825) \\
         $\textbf{W}^2_{n,0}$ $\times$ ln(Cognitive) &                       &        &       &                       &   -0.3517 & (0.2263) &                       &   -0.7223 & (0.4509) \\
     $\textbf{W}^2_{n,0}$ $\times$ ln(Agreeableness) &                       &        &       &                       &    0.3012 & (0.4596) &                       &    0.4606 & (0.7711) \\
 $\textbf{W}^2_{n,0}$ $\times$ ln(Conscientiousness) &                       &        &       &                       & 1.1246*** & (0.2488) &                       &  1.0246** & (0.4803) \\
      $\textbf{W}^2_{n,0}$ $\times$ ln(Extraversion) &                       &        &       &                       &    0.1388 & (0.3130) &                       &   -0.1316 & (0.5399) \\
       $\textbf{W}^2_{n,0}$ $\times$ ln(Neuroticism) &                       &        &       &                       &    0.0915 & (0.3264) &                       &    0.1898 & (0.5169) \\
          $\textbf{W}^2_{n,0}$ $\times$ ln(Openness) &                       &        &       &                       &    0.1003 & (0.4454) &                       &   -0.2281 & (0.5347) \\
                                        Constant &                       &   -0.7597 & (1.8614) &                       &   -1.8886 & (2.8393) &                       &   -1.6844 & (2.9717) \\
                                        \midrule
                                             $n$ &                       &       868 &          &                       &       868 &          &                       &       868 &          \\
                                          $F$-Statistics &                       &    4.1245 &          &                       &    9.2074 &          &                       &   12.3000 &          \\
                                         $p$-value &                       &    0.0013 &          &                       &    0.0000 &          &                       &       0.0000 &          \\
\bottomrule
\end{tabular}

%% file: tables/tables_rev/table7.tex
\begin{tabular}{lccccccccc}
\toprule
 &   &\multicolumn{2}{c}{(D-5)} & &\multicolumn{2}{c}{(D-6)}&  & \multicolumn{2}{c}{(D-7)}\\
   \cmidrule(lr){3-4} \cmidrule(lr){6-7} \cmidrule(l){9-10}  
                                      Variables &  \multicolumn{1}{c}{} &      Coef. &       SE &  \multicolumn{1}{l}{} &      Coef. &       SE &  \multicolumn{1}{r}{} &      Coef. &       SE \\
\midrule
             $\textbf{W}_{n,0}$ $\times$ ln(Height) &                       &   -0.1938* & (0.1152) &                       &            &          &                       &   -0.8460* & (0.4750) \\
             $\textbf{W}_{n,0}$ $\times$ ln(Weight) &                       &   0.1995** & (0.0950) &                       &            &          &                       &    0.1754* & (0.0922) \\
          $\textbf{W}_{n,0}$ $\times$ Siblings Help &                       &     0.0570 & (0.0516) &                       &            &          &                       &     0.0652 & (0.0486) \\
           $\textbf{W}_{n,0}$ $\times$ Parents Help &                       &    -0.0086 & (0.0494) &                       &            &          &                       &    -0.0142 & (0.0496) \\
    $\textbf{W}_{n,0}$ $\times$ Commute by Car/Taxi &                       &    -0.0495 & (0.0511) &                       &            &          &                       &    -0.0470 & (0.0510) \\
                  $\textbf{W}_{n,0}$ $\times$ Music &                       &    -0.0159 & (0.0496) &                       &            &          &                       &     0.0092 & (0.0491) \\
          $\textbf{W}_{n,0}$ $\times$ ln(Cognitive) &                       &  0.2337*** & (0.0448) &                       &            &          &                       &     0.1023 & (0.0700) \\
      $\textbf{W}_{n,0}$ $\times$ ln(Agreeableness) &                       & -0.1969*** & (0.0622) &                       &            &          &                       &     0.1203 & (0.1030) \\
  $\textbf{W}_{n,0}$ $\times$ ln(Conscientiousness) &                       &     0.1031 & (0.0627) &                       &            &          &                       &     0.0781 & (0.1071) \\
       $\textbf{W}_{n,0}$ $\times$ ln(Extraversion) &                       &     0.0437 & (0.0367) &                       &            &          &                       &     0.0723 & (0.0714) \\
        $\textbf{W}_{n,0}$ $\times$ ln(Neuroticism) &                       & -0.1813*** & (0.0502) &                       &            &          &                       &    -0.0531 & (0.0784) \\
           $\textbf{W}_{n,0}$ $\times$ ln(Openness) &                       &    -0.0243 & (0.0876) &                       &            &          &                       &    -0.0344 & (0.0896) \\
           $\textbf{W}^2_{n,0}$ $\times$ ln(Height) &                       &         &       &                       &   -0.3383* & (0.1797) &                       &     0.3152 & (0.4610) \\
           $\textbf{W}^2_{n,0}$ $\times$ ln(Weight) &                       &         &       &                       &     0.2465 & (0.1653) &                       &     0.2330 & (0.1827) \\
        $\textbf{W}^2_{n,0}$ $\times$ Siblings Help &                       &         &       &                       &     0.0158 & (0.0582) &                       &     0.0296 & (0.0621) \\
         $\textbf{W}^2_{n,0}$ $\times$ Parents Help &                       &         &       &                       &     0.0624 & (0.0618) &                       &     0.0802 & (0.0579) \\
  $\textbf{W}^2_{n,0}$ $\times$ Commute by Car/Taxi &                       &         &       &                       &    -0.0127 & (0.0620) &                       &    -0.0032 & (0.0651) \\
                $\textbf{W}^2_{n,0}$ $\times$ Music &                       &         &       &                       &   -0.0940* & (0.0505) &                       &   -0.0906* & (0.0483) \\
        $\textbf{W}^2_{n,0}$ $\times$ ln(Cognitive) &                       &         &       &                       &  0.4096*** & (0.0722) &                       &   0.2789** & (0.1191) \\
    $\textbf{W}^2_{n,0}$ $\times$ ln(Agreeableness) &                       &         &       &                       & -0.5443*** & (0.1188) &                       & -0.7011*** & (0.1642) \\
$\textbf{W}^2_{n,0}$ $\times$ ln(Conscientiousness) &                       &         &       &                       &   0.2130** & (0.0943) &                       &     0.1059 & (0.1780) \\
     $\textbf{W}^2_{n,0}$ $\times$ ln(Extraversion) &                       &         &       &                       &     0.0224 & (0.0805) &                       &    -0.0703 & (0.1309) \\
      $\textbf{W}^2_{n,0}$ $\times$ ln(Neuroticism) &                       &         &       &                       & -0.2919*** & (0.0801) &                       &   -0.2427* & (0.1313) \\
         $\textbf{W}^2_{n,0}$ $\times$ ln(Openness) &                       &         &       &                       &     0.0448 & (0.1575) &                       &     0.0866 & (0.1824) \\
                                       Constant &                       &     0.5878 & (0.4326) &                       &    1.1516* & (0.6258) &                       &   1.4789** & (0.6419) \\
                                       \midrule
                                            $n$ &                       &        868 &          &                       &        868 &          &                       &        868 &          \\
                                         $F$-Statistics &                       &    10.9248 &          &                       &    19.3011 &          &                       &    126.4900 &          \\
                                        $p$-value &                       &     0.0000 &          &                       &     0.0000 &          &                       &        0.0000 &          \\
\bottomrule
\end{tabular}

%% file: tables/tables_rev/table8.tex
\begin{tabular}{lccccccccc}
\toprule
 &   &\multicolumn{2}{c}{(D-5)} & &\multicolumn{2}{c}{(D-6)}&  & \multicolumn{2}{c}{(D-7)}\\
   \cmidrule(lr){3-4} \cmidrule(lr){6-7} \cmidrule(l){9-10}  
                                      Variables &  \multicolumn{1}{c}{} &     Coef. &       SE &  \multicolumn{1}{l}{} &     Coef. &       SE &  \multicolumn{1}{r}{} &    Coef. &       SE \\
\midrule
                   $\textbf{W}_{n,0}$ $\times$ Male &                       &    0.1180 & (0.1872) &                       &           &          &                       &   0.2803 & (0.1927) \\
             $\textbf{W}_{n,0}$ $\times$ ln(Weight) &                       &    0.2625 & (0.1916) &                       &           &          &                       &   0.0643 & (0.3955) \\
          $\textbf{W}_{n,0}$ $\times$ Siblings Help &                       &    0.2883 & (0.1941) &                       &           &          &                       &   0.1956 & (0.1622) \\
           $\textbf{W}_{n,0}$ $\times$ Parents Help &                       &    0.1967 & (0.1748) &                       &           &          &                       &   0.2461 & (0.1683) \\
    $\textbf{W}_{n,0}$ $\times$ Commute by Car/Taxi &                       &    0.1375 & (0.1625) &                       &           &          &                       &   0.1052 & (0.1700) \\
                  $\textbf{W}_{n,0}$ $\times$ Music &                       &   0.2489* & (0.1306) &                       &           &          &                       &   0.0600 & (0.1596) \\
          $\textbf{W}_{n,0}$ $\times$ ln(Cognitive) &                       &   -0.2424 & (0.1828) &                       &           &          &                       &   0.4494 & (0.3957) \\
      $\textbf{W}_{n,0}$ $\times$ ln(Agreeableness) &                       &   -0.1306 & (0.2771) &                       &           &          &                       &  -0.1050 & (0.5708) \\
  $\textbf{W}_{n,0}$ $\times$ ln(Conscientiousness) &                       &  0.5010** & (0.2017) &                       &           &          &                       &   0.0665 & (0.4104) \\
       $\textbf{W}_{n,0}$ $\times$ ln(Extraversion) &                       &    0.0614 & (0.1853) &                       &           &          &                       &   0.2481 & (0.2555) \\
        $\textbf{W}_{n,0}$ $\times$ ln(Neuroticism) &                       &   -0.1489 & (0.1897) &                       &           &          &                       &  -0.0746 & (0.3510) \\
           $\textbf{W}_{n,0}$ $\times$ ln(Openness) &                       &    0.1441 & (0.3229) &                       &           &          &                       &   0.3960 & (0.3747) \\
                 $\textbf{W}^2_{n,0}$ $\times$ Male &                       &        &       &                       &   -0.4456 & (0.3082) &                       & -0.5613* & (0.3149) \\
           $\textbf{W}^2_{n,0}$ $\times$ ln(Weight) &                       &        &       &                       &    0.6125 & (0.5231) &                       &   0.5680 & (0.6976) \\
        $\textbf{W}^2_{n,0}$ $\times$ Siblings Help &                       &        &       &                       &  0.5837** & (0.2417) &                       & 0.5906** & (0.2335) \\
         $\textbf{W}^2_{n,0}$ $\times$ Parents Help &                       &        &       &                       &    0.0234 & (0.2432) &                       &  -0.0045 & (0.2465) \\
  $\textbf{W}^2_{n,0}$ $\times$ Commute by Car/Taxi &                       &        &       &                       &   -0.2487 & (0.2473) &                       &  -0.2697 & (0.2421) \\
                $\textbf{W}^2_{n,0}$ $\times$ Music &                       &        &       &                       & 0.5045*** & (0.1665) &                       & 0.4822** & (0.1975) \\
        $\textbf{W}^2_{n,0}$ $\times$ ln(Cognitive) &                       &        &       &                       & -0.5994** & (0.2816) &                       & -1.1088* & (0.5745) \\
    $\textbf{W}^2_{n,0}$ $\times$ ln(Agreeableness) &                       &        &       &                       &    0.0437 & (0.4733) &                       &   0.3197 & (0.8356) \\
$\textbf{W}^2_{n,0}$ $\times$ ln(Conscientiousness) &                       &        &       &                       & 1.0995*** & (0.2510) &                       &  1.0402* & (0.5467) \\
     $\textbf{W}^2_{n,0}$ $\times$ ln(Extraversion) &                       &        &       &                       &   -0.0054 & (0.3580) &                       &  -0.2854 & (0.5586) \\
      $\textbf{W}^2_{n,0}$ $\times$ ln(Neuroticism) &                       &        &       &                       &   -0.1436 & (0.3153) &                       &  -0.0041 & (0.5480) \\
         $\textbf{W}^2_{n,0}$ $\times$ ln(Openness) &                       &        &       &                       &   -0.0027 & (0.5465) &                       &  -0.4294 & (0.6521) \\
                                       Constant &                       & 2.8956*** & (0.8051) &                       &    1.8026 & (1.9623) &                       &   1.3428 & (1.7616) \\
                                       \midrule
                                            $n$ &                       &       868 &          &                       &       868 &          &                       &      868 &          \\
                                         $F$-Statistics &                       &    2.8743 &          &                       &    8.3756 &          &                       &  267.8900 &          \\
                                        $p$-value &                       &    0.0134 &          &                       &    0.0000 &          &                       &      0.0000 &          \\
\bottomrule
\end{tabular}

%% file: tables/tables_rev/table9.tex
\begin{tabular}{lccccccccc}
\toprule
 &   &\multicolumn{2}{c}{(D-5)} & &\multicolumn{2}{c}{(D-6)}&  & \multicolumn{2}{c}{(D-7)}\\
   \cmidrule(lr){3-4} \cmidrule(lr){6-7} \cmidrule(l){9-10}  
                                      Variables &  \multicolumn{1}{c}{} &     Coef. &       SE &  \multicolumn{1}{l}{} &     Coef. &       SE &  \multicolumn{1}{r}{} &    Coef. &       SE \\
\midrule
                   $\textbf{W}_{n,0}$ $\times$ Male &                       &    0.0836 & (0.1408) &                       &           &          &                       &   0.1835 & (0.1335) \\
             $\textbf{W}_{n,0}$ $\times$ ln(Height) &                       &  0.7174** & (0.3325) &                       &           &          &                       &   1.2811 & (1.4793) \\
          $\textbf{W}_{n,0}$ $\times$ Siblings Help &                       &    0.1808 & (0.1492) &                       &           &          &                       &   0.1313 & (0.1349) \\
           $\textbf{W}_{n,0}$ $\times$ Parents Help &                       &    0.1202 & (0.1302) &                       &           &          &                       &   0.1585 & (0.1339) \\
    $\textbf{W}_{n,0}$ $\times$ Commute by Car/Taxi &                       &    0.0749 & (0.1223) &                       &           &          &                       &   0.0626 & (0.1324) \\
                  $\textbf{W}_{n,0}$ $\times$ Music &                       &    0.1526 & (0.0957) &                       &           &          &                       &   0.0207 & (0.1181) \\
          $\textbf{W}_{n,0}$ $\times$ ln(Cognitive) &                       &   -0.1177 & (0.1470) &                       &           &          &                       &   0.2401 & (0.2855) \\
      $\textbf{W}_{n,0}$ $\times$ ln(Agreeableness) &                       &    0.1349 & (0.2465) &                       &           &          &                       &  -0.1997 & (0.4257) \\
  $\textbf{W}_{n,0}$ $\times$ ln(Conscientiousness) &                       & 0.5276*** & (0.1917) &                       &           &          &                       &  -0.0289 & (0.3189) \\
       $\textbf{W}_{n,0}$ $\times$ ln(Extraversion) &                       &    0.1552 & (0.1399) &                       &           &          &                       &   0.1207 & (0.2145) \\
        $\textbf{W}_{n,0}$ $\times$ ln(Neuroticism) &                       &    0.0531 & (0.1750) &                       &           &          &                       &  -0.1283 & (0.2683) \\
           $\textbf{W}_{n,0}$ $\times$ ln(Openness) &                       &    0.2478 & (0.2398) &                       &           &          &                       &   0.3701 & (0.2943) \\
                 $\textbf{W}^2_{n,0}$ $\times$ Male &                       &        &       &                       &   -0.2662 & (0.2209) &                       &  -0.3274 & (0.2211) \\
           $\textbf{W}^2_{n,0}$ $\times$ ln(Height) &                       &        &       &                       &  1.1151** & (0.5044) &                       &  -0.2210 & (1.6553) \\
        $\textbf{W}^2_{n,0}$ $\times$ Siblings Help &                       &        &       &                       &  0.3890** & (0.1887) &                       & 0.3742** & (0.1821) \\
         $\textbf{W}^2_{n,0}$ $\times$ Parents Help &                       &        &       &                       &   -0.0328 & (0.1957) &                       &  -0.0597 & (0.1952) \\
  $\textbf{W}^2_{n,0}$ $\times$ Commute by Car/Taxi &                       &        &       &                       &   -0.2155 & (0.1877) &                       &  -0.2338 & (0.1884) \\
                $\textbf{W}^2_{n,0}$ $\times$ Music &                       &        &       &                       &  0.3344** & (0.1344) &                       & 0.3274** & (0.1590) \\
        $\textbf{W}^2_{n,0}$ $\times$ ln(Cognitive) &                       &        &       &                       &   -0.3279 & (0.2107) &                       &  -0.6054 & (0.4380) \\
    $\textbf{W}^2_{n,0}$ $\times$ ln(Agreeableness) &                       &        &       &                       &    0.3474 & (0.4004) &                       &   0.6676 & (0.6866) \\
$\textbf{W}^2_{n,0}$ $\times$ ln(Conscientiousness) &                       &        &       &                       & 1.0425*** & (0.2321) &                       & 1.0599** & (0.4446) \\
     $\textbf{W}^2_{n,0}$ $\times$ ln(Extraversion) &                       &        &       &                       &    0.1596 & (0.2918) &                       &   0.0006 & (0.4765) \\
      $\textbf{W}^2_{n,0}$ $\times$ ln(Neuroticism) &                       &        &       &                       &    0.1233 & (0.2948) &                       &   0.2734 & (0.4803) \\
         $\textbf{W}^2_{n,0}$ $\times$ ln(Openness) &                       &        &       &                       &    0.1558 & (0.4027) &                       &  -0.2917 & (0.4938) \\
                                       Constant &                       &   -0.5989 & (1.6537) &                       &   -2.4319 & (2.4581) &                       &  -2.3797 & (2.4980) \\
                                       \midrule
                                            $n$ &                       &       868 &          &                       &       868 &          &                       &      868 &          \\
                                         $F$-Statistics &                       &    3.2110 &          &                       &    7.3745 &          &                       &  28.3500 &          \\
                                        $p$-value &                       &    0.0072 &          &                       &    0.0000 &          &                       &      0.0000 &          \\
\bottomrule
\end{tabular}

%% file: tables/tables_rev/table10.tex
\begin{tabular}{lccccccccc}
\toprule
 &   &\multicolumn{2}{c}{(D-5)} & &\multicolumn{2}{c}{(D-6)}&  & \multicolumn{2}{c}{(D-7)}\\
   \cmidrule(lr){3-4} \cmidrule(lr){6-7} \cmidrule(l){9-10}  
                                      Variables &  \multicolumn{1}{c}{} &    Coef. &       SE &  \multicolumn{1}{l}{} &     Coef. &       SE &  \multicolumn{1}{r}{} &     Coef. &       SE \\
\midrule
                   $\textbf{W}_{n,0}$ $\times$ Male &                       &  -0.0436 & (0.0379) &                       &           &          &                       &   -0.0275 & (0.0345) \\
             $\textbf{W}_{n,0}$ $\times$ ln(Height) &                       &  -0.0999 & (0.0919) &                       &           &          &                       &   -0.6092 & (0.3855) \\
             $\textbf{W}_{n,0}$ $\times$ ln(Weight) &                       &   0.0392 & (0.0863) &                       &           &          &                       &    0.1094 & (0.0930) \\
           $\textbf{W}_{n,0}$ $\times$ Parents Help &                       &   0.0451 & (0.0332) &                       &           &          &                       &    0.0330 & (0.0301) \\
    $\textbf{W}_{n,0}$ $\times$ Commute by Car/Taxi &                       &  0.0639* & (0.0326) &                       &           &          &                       &   0.0656* & (0.0372) \\
                  $\textbf{W}_{n,0}$ $\times$ Music &                       &   0.0158 & (0.0352) &                       &           &          &                       &   -0.0173 & (0.0335) \\
          $\textbf{W}_{n,0}$ $\times$ ln(Cognitive) &                       &  -0.0248 & (0.0345) &                       &           &          &                       &   -0.0716 & (0.0642) \\
      $\textbf{W}_{n,0}$ $\times$ ln(Agreeableness) &                       &  -0.0670 & (0.0598) &                       &           &          &                       &   -0.0254 & (0.0959) \\
  $\textbf{W}_{n,0}$ $\times$ ln(Conscientiousness) &                       & 0.0854** & (0.0399) &                       &           &          &                       &    0.0177 & (0.0724) \\
       $\textbf{W}_{n,0}$ $\times$ ln(Extraversion) &                       &  -0.0416 & (0.0547) &                       &           &          &                       &    0.0135 & (0.0898) \\
        $\textbf{W}_{n,0}$ $\times$ ln(Neuroticism) &                       &  -0.0215 & (0.0378) &                       &           &          &                       &   -0.0045 & (0.0657) \\
           $\textbf{W}_{n,0}$ $\times$ ln(Openness) &                       &  -0.0335 & (0.0614) &                       &           &          &                       &   -0.0775 & (0.0799) \\
                 $\textbf{W}^2_{n,0}$ $\times$ Male &                       &       &       &                       & -0.1188** & (0.0597) &                       &  -0.1127* & (0.0644) \\
           $\textbf{W}^2_{n,0}$ $\times$ ln(Height) &                       &       &       &                       &    0.0725 & (0.1319) &                       &    0.6017 & (0.4002) \\
           $\textbf{W}^2_{n,0}$ $\times$ ln(Weight) &                       &       &       &                       &   -0.1291 & (0.0842) &                       &  -0.1695* & (0.1014) \\
         $\textbf{W}^2_{n,0}$ $\times$ Parents Help &                       &       &       &                       &    0.0610 & (0.0602) &                       &    0.0652 & (0.0592) \\
  $\textbf{W}^2_{n,0}$ $\times$ Commute by Car/Taxi &                       &       &       &                       &    0.0416 & (0.0510) &                       &    0.0415 & (0.0519) \\
                $\textbf{W}^2_{n,0}$ $\times$ Music &                       &       &       &                       &  0.0947** & (0.0417) &                       & 0.1039*** & (0.0402) \\
        $\textbf{W}^2_{n,0}$ $\times$ ln(Cognitive) &                       &       &       &                       &    0.0496 & (0.0587) &                       &    0.1322 & (0.1107) \\
    $\textbf{W}^2_{n,0}$ $\times$ ln(Agreeableness) &                       &       &       &                       &   -0.1213 & (0.1071) &                       &   -0.1176 & (0.1763) \\
$\textbf{W}^2_{n,0}$ $\times$ ln(Conscientiousness) &                       &       &       &                       & 0.1726*** & (0.0634) &                       &    0.1475 & (0.1139) \\
     $\textbf{W}^2_{n,0}$ $\times$ ln(Extraversion) &                       &       &       &                       &   -0.0354 & (0.0873) &                       &   -0.0551 & (0.1422) \\
      $\textbf{W}^2_{n,0}$ $\times$ ln(Neuroticism) &                       &       &       &                       &   -0.0623 & (0.0651) &                       &   -0.0665 & (0.1185) \\
         $\textbf{W}^2_{n,0}$ $\times$ ln(Openness) &                       &       &       &                       &   -0.0080 & (0.1083) &                       &    0.0776 & (0.1576) \\
                                       Constant &                       & 0.7144** & (0.3385) &                       &    0.4611 & (0.5473) &                       &    0.5524 & (0.6007) \\
                                       \midrule
                                            $n$ &                       &      868 &          &                       &       868 &          &                       &       868 &          \\
                                         $F$-Statistics &                       &   1.7506 &          &                       &    4.3328 &          &                       &   10.4200 &          \\
                                        $p$-value &                       &   0.1174 &          &                       &    0.0011 &          &                       &       0.0000 &          \\
\bottomrule
\end{tabular}

%% file: tables/tables_rev/table11.tex
\begin{tabular}{lccccccccc}
\toprule
 &   &\multicolumn{2}{c}{(D-5)} & &\multicolumn{2}{c}{(D-6)}&  & \multicolumn{2}{c}{(D-7)}\\
   \cmidrule(lr){3-4} \cmidrule(lr){6-7} \cmidrule(l){9-10}  
                                      Variables &  \multicolumn{1}{c}{} &     Coef. &       SE &  \multicolumn{1}{l}{} &     Coef. &       SE &  \multicolumn{1}{r}{} &    Coef. &       SE \\
\midrule
                   $\textbf{W}_{n,0}$ $\times$ Male &                       &    0.0252 & (0.0534) &                       &           &          &                       &   0.0333 & (0.0564) \\
             $\textbf{W}_{n,0}$ $\times$ ln(Height) &                       &    0.2073 & (0.1268) &                       &           &          &                       &   0.5553 & (0.4865) \\
             $\textbf{W}_{n,0}$ $\times$ ln(Weight) &                       & -0.1920** & (0.0948) &                       &           &          &                       & -0.1684* & (0.0909) \\
          $\textbf{W}_{n,0}$ $\times$ Siblings Help &                       &    0.0617 & (0.0435) &                       &           &          &                       &   0.0520 & (0.0438) \\
    $\textbf{W}_{n,0}$ $\times$ Commute by Car/Taxi &                       &    0.0236 & (0.0373) &                       &           &          &                       &   0.0162 & (0.0412) \\
                  $\textbf{W}_{n,0}$ $\times$ Music &                       &    0.0308 & (0.0399) &                       &           &          &                       &  -0.0005 & (0.0379) \\
          $\textbf{W}_{n,0}$ $\times$ ln(Cognitive) &                       &    0.0143 & (0.0390) &                       &           &          &                       &   0.0399 & (0.0814) \\
      $\textbf{W}_{n,0}$ $\times$ ln(Agreeableness) &                       &   -0.0304 & (0.0722) &                       &           &          &                       &  -0.0502 & (0.1190) \\
  $\textbf{W}_{n,0}$ $\times$ ln(Conscientiousness) &                       &  0.1347** & (0.0646) &                       &           &          &                       &   0.0392 & (0.0911) \\
       $\textbf{W}_{n,0}$ $\times$ ln(Extraversion) &                       &    0.0377 & (0.0427) &                       &           &          &                       &   0.1117 & (0.0750) \\
        $\textbf{W}_{n,0}$ $\times$ ln(Neuroticism) &                       &   -0.0500 & (0.0554) &                       &           &          &                       &  -0.1029 & (0.0751) \\
           $\textbf{W}_{n,0}$ $\times$ ln(Openness) &                       &    0.0034 & (0.0716) &                       &           &          &                       &  -0.0387 & (0.0908) \\
                 $\textbf{W}^2_{n,0}$ $\times$ Male &                       &        &       &                       &    0.0146 & (0.0651) &                       &  -0.0097 & (0.0735) \\
           $\textbf{W}^2_{n,0}$ $\times$ ln(Height) &                       &        &       &                       &    0.0980 & (0.1905) &                       &  -0.3670 & (0.5233) \\
           $\textbf{W}^2_{n,0}$ $\times$ ln(Weight) &                       &        &       &                       &   -0.0499 & (0.1486) &                       &   0.0217 & (0.1424) \\
        $\textbf{W}^2_{n,0}$ $\times$ Siblings Help &                       &        &       &                       & 0.1268*** & (0.0433) &                       & 0.0972** & (0.0477) \\
  $\textbf{W}^2_{n,0}$ $\times$ Commute by Car/Taxi &                       &        &       &                       &    0.0172 & (0.0634) &                       &   0.0378 & (0.0602) \\
                $\textbf{W}^2_{n,0}$ $\times$ Music &                       &        &       &                       &  0.0976** & (0.0454) &                       & 0.0989** & (0.0454) \\
        $\textbf{W}^2_{n,0}$ $\times$ ln(Cognitive) &                       &        &       &                       &   -0.0012 & (0.0618) &                       &  -0.0401 & (0.1282) \\
    $\textbf{W}^2_{n,0}$ $\times$ ln(Agreeableness) &                       &        &       &                       &   -0.0425 & (0.1254) &                       &   0.0448 & (0.1980) \\
$\textbf{W}^2_{n,0}$ $\times$ ln(Conscientiousness) &                       &        &       &                       &  0.2338** & (0.1002) &                       &   0.1852 & (0.1538) \\
     $\textbf{W}^2_{n,0}$ $\times$ ln(Extraversion) &                       &        &       &                       &   -0.0281 & (0.0860) &                       &  -0.1573 & (0.1540) \\
      $\textbf{W}^2_{n,0}$ $\times$ ln(Neuroticism) &                       &        &       &                       &   -0.0400 & (0.0898) &                       &   0.0910 & (0.1330) \\
         $\textbf{W}^2_{n,0}$ $\times$ ln(Openness) &                       &        &       &                       &    0.0027 & (0.1347) &                       &   0.0405 & (0.1839) \\
                                       Constant &                       &    0.2101 & (0.4627) &                       &    0.1457 & (0.6395) &                       &   0.0231 & (0.6233) \\
                                       \midrule
                                            $n$ &                       &       868 &          &                       &       868 &          &                       &      868 &          \\
                                         $F$-Statistics &                       &    3.0899 &          &                       &    1.8041 &          &                       &  25.4400&          \\
                                        $p$-value &                       &    0.0090 &          &                       &    0.1057 &          &                       &      0.0000 &          \\
\bottomrule
\end{tabular}

%% file: tables/tables_rev/table12.tex
\begin{tabular}{lccccccccc}
\toprule
 &   &\multicolumn{2}{c}{(D-5)} & &\multicolumn{2}{c}{(D-6)}&  & \multicolumn{2}{c}{(D-7)}\\
   \cmidrule(lr){3-4} \cmidrule(lr){6-7} \cmidrule(l){9-10}  
                                      Variables &  \multicolumn{1}{c}{} &     Coef. &       SE &  \multicolumn{1}{l}{} &     Coef. &       SE &  \multicolumn{1}{r}{} &     Coef. &       SE \\
\midrule
                   $\textbf{W}_{n,0}$ $\times$ Male &                       &    0.0167 & (0.0482) &                       &           &          &                       &    0.0494 & (0.0512) \\
             $\textbf{W}_{n,0}$ $\times$ ln(Height) &                       &    0.1140 & (0.1299) &                       &           &          &                       &   -0.1766 & (0.3776) \\
             $\textbf{W}_{n,0}$ $\times$ ln(Weight) &                       &    0.0206 & (0.0859) &                       &           &          &                       &    0.1109 & (0.0949) \\
          $\textbf{W}_{n,0}$ $\times$ Siblings Help &                       &    0.0464 & (0.0444) &                       &           &          &                       &    0.0344 & (0.0441) \\
           $\textbf{W}_{n,0}$ $\times$ Parents Help &                       &    0.0028 & (0.0290) &                       &           &          &                       &    0.0043 & (0.0314) \\
                  $\textbf{W}_{n,0}$ $\times$ Music &                       &    0.0211 & (0.0360) &                       &           &          &                       &   -0.0033 & (0.0404) \\
          $\textbf{W}_{n,0}$ $\times$ ln(Cognitive) &                       &   -0.0589 & (0.0596) &                       &           &          &                       &    0.0425 & (0.0971) \\
      $\textbf{W}_{n,0}$ $\times$ ln(Agreeableness) &                       &    0.0527 & (0.0563) &                       &           &          &                       &   -0.0520 & (0.1007) \\
  $\textbf{W}_{n,0}$ $\times$ ln(Conscientiousness) &                       & 0.1208*** & (0.0422) &                       &           &          &                       &   -0.0152 & (0.0892) \\
       $\textbf{W}_{n,0}$ $\times$ ln(Extraversion) &                       &    0.0581 & (0.0519) &                       &           &          &                       &   -0.0253 & (0.0891) \\
        $\textbf{W}_{n,0}$ $\times$ ln(Neuroticism) &                       &    0.0323 & (0.0419) &                       &           &          &                       &   -0.0002 & (0.0748) \\
           $\textbf{W}_{n,0}$ $\times$ ln(Openness) &                       &   -0.0344 & (0.0811) &                       &           &          &                       &    0.0570 & (0.1415) \\
                 $\textbf{W}^2_{n,0}$ $\times$ Male &                       &        &       &                       &  -0.1128* & (0.0684) &                       & -0.1296** & (0.0587) \\
           $\textbf{W}^2_{n,0}$ $\times$ ln(Height) &                       &        &       &                       &   0.3602* & (0.2035) &                       &    0.4488 & (0.4245) \\
           $\textbf{W}^2_{n,0}$ $\times$ ln(Weight) &                       &        &       &                       &   -0.1240 & (0.1436) &                       &   -0.1462 & (0.1429) \\
        $\textbf{W}^2_{n,0}$ $\times$ Siblings Help &                       &        &       &                       & 0.1113*** & (0.0398) &                       & 0.1211*** & (0.0415) \\
         $\textbf{W}^2_{n,0}$ $\times$ Parents Help &                       &        &       &                       &    0.0816 & (0.0527) &                       &    0.0838 & (0.0510) \\
                $\textbf{W}^2_{n,0}$ $\times$ Music &                       &        &       &                       &    0.0342 & (0.0376) &                       &    0.0350 & (0.0381) \\
        $\textbf{W}^2_{n,0}$ $\times$ ln(Cognitive) &                       &        &       &                       &   -0.0991 & (0.0765) &                       &   -0.1547 & (0.1215) \\
    $\textbf{W}^2_{n,0}$ $\times$ ln(Agreeableness) &                       &        &       &                       &    0.1499 & (0.1091) &                       &    0.2124 & (0.1820) \\
$\textbf{W}^2_{n,0}$ $\times$ ln(Conscientiousness) &                       &        &       &                       & 0.2543*** & (0.0743) &                       &   0.2695* & (0.1472) \\
     $\textbf{W}^2_{n,0}$ $\times$ ln(Extraversion) &                       &        &       &                       & 0.1712*** & (0.0663) &                       &   0.1959* & (0.1155) \\
      $\textbf{W}^2_{n,0}$ $\times$ ln(Neuroticism) &                       &        &       &                       &    0.0603 & (0.0730) &                       &    0.0494 & (0.1346) \\
         $\textbf{W}^2_{n,0}$ $\times$ ln(Openness) &                       &        &       &                       &   -0.1452 & (0.0956) &                       &   -0.2129 & (0.1814) \\
                                       Constant &                       &   -0.1795 & (0.4529) &                       &   -0.9086 & (0.6702) &                       &   -0.8397 & (0.6981) \\
                                       \midrule
                                            $n$ &                       &       868 &          &                       &       868 &          &                       &       868 &          \\
                                         $F$-Statistics &                       &    2.8197 &          &                       &    6.3966 &          &                       &  21.4400 &          \\
                                        $p$-value &                       &    0.0148 &          &                       &    0.0001 &          &                       &       0.0000 &          \\
\bottomrule
\end{tabular}

%% file: tables/tables_rev/table13.tex
\begin{tabular}{lccccccccc}
\toprule
 &   &\multicolumn{2}{c}{(D-5)} & &\multicolumn{2}{c}{(D-6)}&  & \multicolumn{2}{c}{(D-7)}\\
   \cmidrule(lr){3-4} \cmidrule(lr){6-7} \cmidrule(l){9-10}  
                                      Variables &  \multicolumn{1}{c}{} &      Coef. &       SE &  \multicolumn{1}{l}{} &      Coef. &       SE &  \multicolumn{1}{r}{} &      Coef. &       SE \\
\midrule
                   $\textbf{W}_{n,0}$ $\times$ Male &                       &    -0.0666 & (0.0557) &                       &            &          &                       &    -0.0043 & (0.0557) \\
             $\textbf{W}_{n,0}$ $\times$ ln(Height) &                       &   0.3127** & (0.1515) &                       &            &          &                       &    -0.1204 & (0.5234) \\
             $\textbf{W}_{n,0}$ $\times$ ln(Weight) &                       & -0.3884*** & (0.1083) &                       &            &          &                       & -0.2920*** & (0.1108) \\
          $\textbf{W}_{n,0}$ $\times$ Siblings Help &                       &     0.0005 & (0.0389) &                       &            &          &                       &    -0.0089 & (0.0356) \\
           $\textbf{W}_{n,0}$ $\times$ Parents Help &                       &     0.0761 & (0.0708) &                       &            &          &                       &     0.0791 & (0.0626) \\
    $\textbf{W}_{n,0}$ $\times$ Commute by Car/Taxi &                       &     0.0057 & (0.0499) &                       &            &          &                       &    -0.0038 & (0.0485) \\
          $\textbf{W}_{n,0}$ $\times$ ln(Cognitive) &                       & -0.1652*** & (0.0637) &                       &            &          &                       &    -0.0564 & (0.0934) \\
      $\textbf{W}_{n,0}$ $\times$ ln(Agreeableness) &                       &     0.0980 & (0.0797) &                       &            &          &                       &     0.0883 & (0.1335) \\
  $\textbf{W}_{n,0}$ $\times$ ln(Conscientiousness) &                       &     0.0190 & (0.0478) &                       &            &          &                       &    -0.0632 & (0.0964) \\
       $\textbf{W}_{n,0}$ $\times$ ln(Extraversion) &                       &    -0.0369 & (0.0520) &                       &            &          &                       &    -0.1098 & (0.0681) \\
        $\textbf{W}_{n,0}$ $\times$ ln(Neuroticism) &                       &     0.0477 & (0.0569) &                       &            &          &                       &    -0.0444 & (0.0774) \\
           $\textbf{W}_{n,0}$ $\times$ ln(Openness) &                       &    -0.0273 & (0.0824) &                       &            &          &                       &    -0.0120 & (0.1017) \\
                 $\textbf{W}^2_{n,0}$ $\times$ Male &                       &         &       &                       & -0.2865*** & (0.0669) &                       & -0.2900*** & (0.0804) \\
           $\textbf{W}^2_{n,0}$ $\times$ ln(Height) &                       &         &       &                       &     0.3415 & (0.2258) &                       &     0.6147 & (0.5420) \\
           $\textbf{W}^2_{n,0}$ $\times$ ln(Weight) &                       &         &       &                       &   -0.2853* & (0.1620) &                       &    -0.1608 & (0.2271) \\
        $\textbf{W}^2_{n,0}$ $\times$ Siblings Help &                       &         &       &                       &    0.1136* & (0.0687) &                       &   0.0990** & (0.0445) \\
         $\textbf{W}^2_{n,0}$ $\times$ Parents Help &                       &         &       &                       &     0.0645 & (0.0680) &                       &     0.0614 & (0.0753) \\
  $\textbf{W}^2_{n,0}$ $\times$ Commute by Car/Taxi &                       &         &       &                       &    -0.0844 & (0.0652) &                       &    -0.0823 & (0.0755) \\
        $\textbf{W}^2_{n,0}$ $\times$ ln(Cognitive) &                       &         &       &                       &  -0.2039** & (0.0905) &                       &    -0.1334 & (0.1251) \\
    $\textbf{W}^2_{n,0}$ $\times$ ln(Agreeableness) &                       &         &       &                       &     0.0838 & (0.1486) &                       &    -0.0304 & (0.2709) \\
$\textbf{W}^2_{n,0}$ $\times$ ln(Conscientiousness) &                       &         &       &                       &     0.0887 & (0.1102) &                       &     0.1801 & (0.1672) \\
     $\textbf{W}^2_{n,0}$ $\times$ ln(Extraversion) &                       &         &       &                       &     0.0247 & (0.1097) &                       &     0.1649 & (0.1593) \\
      $\textbf{W}^2_{n,0}$ $\times$ ln(Neuroticism) &                       &         &       &                       &     0.0749 & (0.0914) &                       &     0.1379 & (0.1367) \\
         $\textbf{W}^2_{n,0}$ $\times$ ln(Openness) &                       &         &       &                       &    -0.0119 & (0.1453) &                       &    -0.0084 & (0.2091) \\
                                       Constant &                       &     0.5647 & (0.5552) &                       &     0.1232 & (0.8391) &                       &    -0.0420 & (0.8874) \\
                                       \midrule
                                            $n$ &                       &        868 &          &                       &        868 &          &                       &        868 &          \\
                                         $F$-Statistics &                       &     3.7731 &          &                       &   3.65 &          &                       &   13.4900 &          \\
                                        $p$-value &                       &     0.0027 &          &                       &     0.0000 &          &                       &        0.0000 &          \\
\bottomrule
\end{tabular}

%% file: tables/tables_rev/table6.tex
\begin{tabular}{lccccccccc}
\toprule
 &   &\multicolumn{2}{c}{(D-8)} & &\multicolumn{2}{c}{(D-9)}&  & \multicolumn{2}{c}{(D-10)}\\
   \cmidrule(lr){3-4} \cmidrule(lr){6-7} \cmidrule(l){9-10}  
                                     Variables &  \multicolumn{1}{c}{} &     Coef. &       SE &  \multicolumn{1}{l}{} &     Coef. &       SE &  \multicolumn{1}{r}{} &      Coef. &       SE \\
\midrule
                $\textbf{W}_{n,*}$ $\times$ Male &                       &   -0.0504 & (0.0487) &                       &           &          &                       &    -0.0496 & (0.0413) \\
          $\textbf{W}_{n,*}$ $\times$ ln(Height) &                       & 0.8742*** & (0.2639) &                       &           &          &                       &   0.3636** & (0.1728) \\
          $\textbf{W}_{n,*}$ $\times$ ln(Weight) &                       &   -0.1477 & (0.1316) &                       &           &          &                       &     0.0612 & (0.0988) \\
       $\textbf{W}_{n,*}$ $\times$ Siblings Help &                       &    0.0403 & (0.0444) &                       &           &          &                       &     0.0544 & (0.0394) \\
        $\textbf{W}_{n,*}$ $\times$ Parents Help &                       &    0.0065 & (0.0348) &                       &           &          &                       &    -0.0557 & (0.0400) \\
    $\textbf{W}_{n,*}$ $\times$ Commute Car/Taxi &                       &    0.0122 & (0.0539) &                       &           &          &                       &    -0.0089 & (0.0474) \\
               $\textbf{W}_{*}$ $\times$ Music &                       &    0.0651 & (0.0436) &                       &           &          &                       &    -0.0012 & (0.0365) \\
       $\textbf{W}_{n,*}$ $\times$ ln(Cognitive) &                       & -0.2676** & (0.1081) &                       &           &          &                       & -0.2059*** & (0.0729) \\
   $\textbf{W}_{n,*}$ $\times$ ln(Agreeableness) &                       &   -0.0154 & (0.2255) &                       &           &          &                       &   -0.2280* & (0.1287) \\
   $\textbf{W}_{n,*}$ $\times$ ln(Conscientious) &                       &  0.3370** & (0.1402) &                       &           &          &                       &    0.1333* & (0.0731) \\
    $\textbf{W}_{n,*}$ $\times$ ln(Extraversion) &                       &    0.0306 & (0.1665) &                       &           &          &                       &    0.1601* & (0.0826) \\
     $\textbf{W}_{n,*}$ $\times$ ln(Neuroticism) &                       &   -0.0098 & (0.1254) &                       &           &          &                       &     0.0122 & (0.0625) \\
        $\textbf{W}_{n,*}$ $\times$ ln(Openness) &                       &    0.2343 & (0.2068) &                       &           &          &                       &    -0.0649 & (0.1187) \\
              $\textbf{W}^2_{n,*}$ $\times$ Male &                       &        &       &                       &   -0.2182 & (0.1743) &                       &    -0.0939 & (0.1271) \\
        $\textbf{W}^2_{n,*}$ $\times$ ln(Height) &                       &        &       &                       &   0.9572* & (0.4962) &                       &     0.5604 & (0.4129) \\
        $\textbf{W}^2_{n,*}$ $\times$ ln(Weight) &                       &        &       &                       &   -0.0553 & (0.4378) &                       &    -0.1974 & (0.2979) \\
    $\textbf{W}^2_{n,*}$  $\times$ Siblings Help &                       &        &       &                       &    0.2420 & (0.2819) &                       &     0.2458 & (0.1603) \\
     $\textbf{W}^2_{n,*}$  $\times$ Parents Help &                       &        &       &                       &    0.0493 & (0.1802) &                       &     0.0326 & (0.0824) \\
 $\textbf{W}^2_{n,*}$  $\times$ Commute Car/Taxi &                       &        &       &                       & 0.5050*** & (0.1750) &                       &     0.0522 & (0.1192) \\
             $\textbf{W}^2_{n,*}$ $\times$ Music &                       &        &       &                       &   -0.1076 & (0.0977) &                       &  0.1880*** & (0.0484) \\
     $\textbf{W}^2_{n,*}$ $\times$ ln(Cognitive) &                       &        &       &                       & -0.3588** & (0.1464) &                       &    -0.0596 & (0.1343) \\
 $\textbf{W}^2_{n,*}$ $\times$ ln(Agreeableness) &                       &        &       &                       &    0.0570 & (0.2684) &                       &     0.2458 & (0.2486) \\
 $\textbf{W}^2_{n,*}$ $\times$ ln(Conscientious) &                       &        &       &                       &  0.4525** & (0.2023) &                       &    0.2621* & (0.1535) \\
  $\textbf{W}^2_{n,*}$ $\times$ ln(Extraversion) &                       &        &       &                       &   -0.0010 & (0.1893) &                       &    -0.1465 & (0.1558) \\
   $\textbf{W}^2_{n,*}$ $\times$ ln(Neuroticism) &                       &        &       &                       &    0.0569 & (0.1904) &                       &    -0.0245 & (0.1524) \\
      $\textbf{W}^2_{n,*}$ $\times$ ln(Openness) &                       &        &       &                       &    0.4369 & (0.2845) &                       &    0.3608* & (0.1930) \\
                                      Constant &                       &    0.2689 & (1.1172) &                       &   -0.3995 & (1.6556) &                       &    -0.1877 & (1.4016) \\
                                      \midrule
                                           $n$ &                       &       868 &          &                       &       868 &          &                       &        868 &          \\
                                        $F$-Statistics &                       &   20.2968 &          &                       &   22.4297 &          &                       &    40.1600 &          \\
                                       $p$-value &                       &    0.0000 &          &                       &    0.0000 &          &                       &        0.0000 &          \\
\bottomrule
\end{tabular}

%% file: main.bbl
\begin{thebibliography}{43}
\newcommand{\enquote}[1]{``#1''}
\providecommand{\natexlab}[1]{#1}
\providecommand{\url}[1]{\texttt{#1}}
\providecommand{\urlprefix}{URL }

\bibitem[{Alexander et~al.(2020)Alexander, Piazza, Mekos, and
  Valente}]{L_i_M_public_health}
Alexander, Cheryl, Marina Piazza, Debra Mekos, and Thomas Valente. 2020.
\newblock \enquote{Peers, Schools, and Adolescent Cigarette Smoking.}
\newblock \emph{Journal of Adolescent Health} 29:22--30.

\bibitem[{Athey and Imbens(2017)}]{Athey2017}
Athey, Susan and Guido~W Imbens. 2017.
\newblock \enquote{The Econometrics of Randomized Experiments.}
\newblock In \emph{Handbook of economic field experiments}, vol.~1. Elsevier,
  73--140.

\bibitem[{Auerbach(2022)}]{Auerbach2022}
Auerbach, Eric. 2022.
\newblock \enquote{Identification and Estimation of a Partially Linear
  Regression Model Using Network Data.}
\newblock \emph{Econometrica} 90~(1):347--365.

\bibitem[{Blume et~al.(2015)Blume, Brock, Durlauf, and Jayaraman}]{Blume2015}
Blume, Lawrence~E., William~A. Brock, Steven~N. Durlauf, and Rajshri Jayaraman.
  2015.
\newblock \enquote{Linear Social Interactions Models.}
\newblock \emph{Journal of Political Economy} 123~(2):444--496.

\bibitem[{Boccaletti et~al.(2014)Boccaletti, Bianconi, Criado, del Genio,
  G{\'{o}}mez-Garde{\~{n}}es, Romance, Sendi{\~{n}}a-Nadal, Wang, and
  Zanin}]{Boccaletti2014}
Boccaletti, S., G.~Bianconi, R.~Criado, C.~I. del Genio,
  J.~G{\'{o}}mez-Garde{\~{n}}es, M.~Romance, I.~Sendi{\~{n}}a-Nadal, Z.~Wang,
  and M.~Zanin. 2014.
\newblock \enquote{The Structure and Dynamics of Multilayer Networks.}
\newblock \emph{Physics Reports} 544:1--122.

\bibitem[{Bramoull{\'{e}}, Djebbari, and Fortin(2009)}]{Bramoulle2009}
Bramoull{\'{e}}, Yann, Habiba Djebbari, and Bernard Fortin. 2009.
\newblock \enquote{Identification of Peer Effects through Social Networks.}
\newblock \emph{Journal of Econometrics} 150~(1):41--55.

\bibitem[{Cai and Szeidl(2018)}]{Cai2018}
Cai, Jing and Adam Szeidl. 2018.
\newblock \enquote{Interfirm Relationships and Business Performance.}
\newblock \emph{The Quarterly Journal of Economics} 133~(3):1229--1282.

\bibitem[{Carrell, Fullerton, and West(2009)}]{Carrell2009}
Carrell, Scott~E., Richard~L. Fullerton, and James~E. West. 2009.
\newblock \enquote{Does Your Cohort Matter? Measuring Peer Effects in College
  Achievement.}
\newblock \emph{Journal of Labor Economics} 27~(3):439--464.

\bibitem[{Carrell, Sacerdote, and West(2013)}]{Carrell2013}
Carrell, Scott~E., Bruce~I. Sacerdote, and James~E. West. 2013.
\newblock \enquote{From Natural Variation to Optimal Policy? The Importance of
  Endogenous Peer Group Formation.}
\newblock \emph{Econometrica} 81~(3):855--882.

\bibitem[{Conley, Hansen, and Rossi(2012)}]{Conley2012}
Conley, Timothy~G, Christian~B Hansen, and Peter~E Rossi. 2012.
\newblock \enquote{Plausibly Exogenous.}
\newblock \emph{Review of Economics and Statistics} 94~(1):260--272.

\bibitem[{De~Giorgi, Pellizzari, and Redaelli(2010)}]{Degiorgi2010}
De~Giorgi, Giacomo, Michele Pellizzari, and Silvia Redaelli. 2010.
\newblock \enquote{Identification of social interactions through partially
  overlapping peer groups.}
\newblock \emph{American Economic Journal: Applied Economics} 2~(2):241--75.

\bibitem[{de~Paula(2017)}]{Paula2017}
de~Paula, {\'{A}}ureo. 2017.
\newblock \emph{Econometrics of Network Models}, \emph{Econometric Society
  Monographs}, vol.~1, chap.~8.
\newblock Cambridge University Press, 268--323.

\bibitem[{Doukhan and Louhichi(1999)}]{Doukhan1999}
Doukhan, Paul and Sana Louhichi. 1999.
\newblock \enquote{A New Weak Dependence Condition and Applications to Moment
  Inequalities.}
\newblock \emph{Stochastic Processes and Their Applications} 84~(2):313--342.

\bibitem[{Erd\"{o}s and R\'{e}nyi(1959)}]{Erdos1959}
Erd\"{o}s, P and A~R\'{e}nyi. 1959.
\newblock \enquote{On Random Graphs.}
\newblock \emph{Publicationes Mathematicae Debrecen} 6:290--297.

\bibitem[{Estrada et~al.(2024)Estrada, Estrada, Huynh, Jacho-Ch\'{a}vez, and
  S\'{a}nchez-Arag\'{o}n}]{netivreg}
Estrada, Pablo, Juan Estrada, Kim~P. Huynh, David~T. Jacho-Ch\'{a}vez, and
  Leonardo S\'{a}nchez-Arag\'{o}n. 2024.
\newblock \enquote{\protect{\texttt{netivreg}}: Instrumental Variable
  Estimation in the Linear-in-Means Model.}
\newblock Unpublished Manuscript.

\bibitem[{Gargiulo and Benassi(2000)}]{Gargiulo2000}
Gargiulo, Martin and Mario Benassi. 2000.
\newblock \enquote{Trapped in Your Own Net? Network Cohesion, Structural Holes,
  and the Adaptation of Social Capital.}
\newblock \emph{Organization Science} 11~(2):183--196.

\bibitem[{Goldsmith-Pinkham and Imbens(2013)}]{Goldsmith-Pinkham2013}
Goldsmith-Pinkham, Paul and Guido~W. Imbens. 2013.
\newblock \enquote{Social Networks and the Identification of Peer Effects.}
\newblock \emph{Journal of Business and Economic Statistics} 31~(3):253--264.

\bibitem[{Graham(2020)}]{Graham2020}
Graham, Bryan~S. 2020.
\newblock \enquote{Network Data.}
\newblock In \emph{Handbook of Econometrics}, vol.~7. Elsevier, 111--218.

\bibitem[{Granovetter(1973)}]{Granovetter1973}
Granovetter, Mark~S. 1973.
\newblock \enquote{The Strength of Weak Ties.}
\newblock \emph{American Journal of Sociology} 78~(6):1360--1380.

\bibitem[{Hasan and Koning(2019)}]{Hasan2019}
Hasan, Sharique and Rembrand Koning. 2019.
\newblock \enquote{Prior Ties and the Limits of Peer Effects on Startup Team
  Performance.}
\newblock \emph{Strategic Management Journal} 40~(9):1394--1416.

\bibitem[{Heckman and Rubinstein(2001)}]{Heckman2001}
Heckman, James~J and Yona Rubinstein. 2001.
\newblock \enquote{The Importance of Noncognitive Skills: Lessons from the GED
  Testing Program.}
\newblock \emph{American Economic Review} 91~(2):145--149.

\bibitem[{Hjort(2014)}]{Hjort2014}
Hjort, Jonas. 2014.
\newblock \enquote{Ethnic Divisions and Production in Firms.}
\newblock \emph{The Quarterly Journal of Economics} 129~(4):1899--1946.

\bibitem[{Johnsson and Moon(2019)}]{Johnsson2019}
Johnsson, Ida and Hyungsik~Roger Moon. 2019.
\newblock \enquote{{Estimation of Peer Effects in Endogenous Social Networks:
  Control Function Approach}.}
\newblock \emph{The Review of Economics and Statistics} :1--51.

\bibitem[{Kato and Shu(2016)}]{Kato2016}
Kato, Takao and Pian Shu. 2016.
\newblock \enquote{Competition and Social Identity in the Workplace: Evidence
  from a Chinese Textile Firm.}
\newblock \emph{Journal of Economic Behavior \& Organization} 131:37--50.

\bibitem[{Kelejian and Prucha(1998)}]{Kelejian1998}
Kelejian, Harry~H. and Ingmar~R. Prucha. 1998.
\newblock \enquote{A Generalized Spatial Two-Stage Least Squares Procedure for
  Estimating a Spatial Autoregressive Model with Autoregressive Disturbances.}
\newblock \emph{Journal of Real Estate Finance and Economics} 17~(1):99--121.

\bibitem[{Kelejian and Prucha(1999)}]{Kelejian_Prucha_1999_ER}
---{}---{}---. 1999.
\newblock \enquote{A Generalized Moments Estimator for the Autoregressive
  Parameter in a Spatial Model.}
\newblock \emph{International Economic Review} 40~(2):509--533.

\bibitem[{Kim, Oh, and Swaminathan(2006)}]{Kim2006}
Kim, Tai-Young, Hongseok Oh, and Anand Swaminathan. 2006.
\newblock \enquote{Framing interorganizational network change: A network
  inertia perspective.}
\newblock \emph{Academy of Management Review} 31~(3):704--720.

\bibitem[{Kivela et~al.(2014)Kivela, Arenas, Barthelemy, Gleeson, Moreno, and
  Porter}]{Kivela_multilayer_network_2014}
Kivela, M., A.~Arenas, M.~Barthelemy, J.~P. Gleeson, Y.~Moreno, and M.~A.
  Porter. 2014.
\newblock \enquote{Multilayer Networks.}
\newblock \emph{Journal of Complex Networks} 2~(3):203–271.

\bibitem[{Kojevnikov, Marmer, and Song(2021)}]{Kojevnikov2020}
Kojevnikov, Denis, Vadim Marmer, and Kyungchul Song. 2021.
\newblock \enquote{Limit Theorems for Network Dependent Random Variables.}
\newblock \emph{Journal of Econometrics} 222~(2):882--908.

\bibitem[{Kreager, Rulison, and Moody(2020)}]{L_i_M_criminology}
Kreager, Derek~a., Kelly Rulison, and James Moody. 2020.
\newblock \enquote{Delinquency and the Structure of Adolescent Peer Groups.}
\newblock \emph{Criminology} 49:95--127.

\bibitem[{Lee(2003)}]{Lee2003}
Lee, Lung~Fei. 2003.
\newblock \enquote{Best Spatial Two-Stage Least Squares Estimators for a
  Spatial Autoregressive Model with Autoregressive Disturbances.}
\newblock \emph{Econometric Reviews} 22~(4):307--335.

\bibitem[{Lewbel, Qu, and Tang(2023)}]{Lewbel_Qu_Tang}
Lewbel, Arthur, Xi~Qu, and Xun Tang. 2023.
\newblock \enquote{Social Networks with Unobserved Links.}
\newblock \emph{Journal of Political Economy} 131~(4):898--946.

\bibitem[{Liu, Patacchini, and Zenou(2014)}]{liu2014}
Liu, Xiaodong, Eleonora Patacchini, and Yves Zenou. 2014.
\newblock \enquote{Endogenous Peer Effects: Local Aggregate or Local Average?}
\newblock \emph{Journal of Economic Behavior and Organization} 103:39 -- 59.

\bibitem[{Manski(1993)}]{manski1993}
Manski, Charles~F. 1993.
\newblock \enquote{Identification of Endogenous Social Effects: The Reflection
  Problem.}
\newblock \emph{Review of Economic Studies} 60~(3):531--542.

\bibitem[{Manta et~al.(2022)Manta, Ho, Huynh, and Jacho-Chavez}]{manta2021}
Manta, Alexandra, Anson~T.Y. Ho, Kim~P. Huynh, and David~T. Jacho-Chavez. 2022.
\newblock \enquote{Estimating Social Effects in a Multilayered Linear-in-Means
  Model with Network Data.}
\newblock \emph{Statistics \& Probability Letters} 183:109331.

\bibitem[{Mas and Moretti(2009)}]{Mas2009}
Mas, Alexandre and Enrico Moretti. 2009.
\newblock \enquote{Peers at Work.}
\newblock \emph{American Economic Review} 99~(1):112--145.

\bibitem[{Mizruchi and Neuman(2008)}]{Mizruchi2008}
Mizruchi, Mark~S and Eric~J Neuman. 2008.
\newblock \enquote{The Effect of Density on the Level of Bias in the Network
  Autocorrelation Model.}
\newblock \emph{Social Networks} 30~(3):190--200.

\bibitem[{Moffitt(2001)}]{Moffitt2000}
Moffitt, Robert~A. 2001.
\newblock \enquote{Policy Interventions, Low-Level Equilibria And Social
  Interactions.}
\newblock In \emph{Social Dynamics}, edited by Steven Durlauf and Peyton Young.
  MIT Press, 45--82.

\bibitem[{Neuman and Mizruchi(2010)}]{Neuman2010}
Neuman, Eric~J. and Mark~S. Mizruchi. 2010.
\newblock \enquote{Structure and Bias in the Network Autocorrelation Model.}
\newblock \emph{Social Networks} 32~(4):290--300.

\bibitem[{Qu and Lee(2015)}]{Qu2015}
Qu, Xi and Lung~Fei Lee. 2015.
\newblock \enquote{Estimating a Spatial Autoregressive Model with an Endogenous
  Spatial Weight Matrix.}
\newblock \emph{Journal of Econometrics} 184~(2):209--232.

\bibitem[{Qu, Lee, and Yang(2021)}]{Qu2021}
Qu, Xi, Lung-fei Lee, and Chao Yang. 2021.
\newblock \enquote{Estimation of a SAR model with endogenous spatial weights
  constructed by bilateral variables.}
\newblock \emph{Journal of Econometrics} 221~(1):180--197.

\bibitem[{Sacerdote(2001)}]{Sacerdote_QJE}
Sacerdote, Bruce. 2001.
\newblock \enquote{Peer Effects with Random Assignment: Results for Dartmouth
  Roommates.}
\newblock \emph{Quaterly Journal of Economics} 116~(2):681--704.

\bibitem[{Salmivalli(2020)}]{L_i_M_sociology}
Salmivalli, Christina. 2020.
\newblock \enquote{Bullying and the Peer Group: A Review.}
\newblock \emph{Aggression and Violent Behavior} 15:112--120.

\end{thebibliography}


\begin{thebibliography}{2}
\newcommand{\enquote}[1]{``#1''}
\providecommand{\natexlab}[1]{#1}
\providecommand{\url}[1]{\texttt{#1}}
\providecommand{\urlprefix}{URL }

\bibitem[{Jenish and Prucha(2009)}]{Jenish2009}
Jenish, Nazgul and Ingmar~R Prucha. 2009.
\newblock \enquote{Central limit theorems and uniform laws of large numbers for
  arrays of random fields.}
\newblock \emph{Journal of econometrics} 150~(1):86--98.

\bibitem[{Kojevnikov, Marmer, and Song(2021)}]{Kojevnikov2020}
Kojevnikov, Denis, Vadim Marmer, and Kyungchul Song. 2021.
\newblock \enquote{Limit Theorems for Network Dependent Random Variables.}
\newblock \emph{Journal of Econometrics} 222~(2):882--908.

\end{thebibliography}
